\documentclass[a4paper,USenglish,cleveref,autoref,thm-restate]{lipics-v2021}
\bibliographystyle{plainurl}


\usepackage[dvipsnames]{xcolor}
\usepackage[most]{tcolorbox}
\usepackage{lipsum}  
\usepackage[utf8]{inputenc}
\usepackage{tcolorbox}
\usepackage{amssymb,amsmath,amsthm}
\usepackage{xspace}
\usepackage{todonotes}
\usepackage{colortbl}

\usepackage{tcolorbox}
\tcbuselibrary{skins} 
\usepackage{xcolor}

\nolinenumbers

\usepackage{mathtools}

\usepackage[T1]{fontenc}
\definecolor{anti-flashwhite}{rgb}{0.95, 0.95, 0.96}

\newcommand{\dla}{{\sc Dilation $(2+\epsilon)$-Augmentation}\xspace}


\definecolor{tyred}{rgb}{0.8, 0.0, 0.0}
\colorlet{mix}{red!50!black}

\providecommand{\keywords}[1]
{
	\textbf{\textit{Keywords---}} #1
}
\usepackage{hyperref}
\hypersetup{
	pdfnewwindow=true,      
	colorlinks=true,       
	linkcolor=mix,          
	citecolor=blue,        
	urlcolor=blue          
}



\newcommand{\V}{\mathcal{V}}
\newcommand{\I}{\mathcal{I}}


\newcommand{\dta}{{\sc Dilation $t$-Augmentation}\xspace}
\newcommand{\dtwoa}{{\sc Dilation $2$-Augmentation}\xspace}
\newcommand{\dtai}[1]{{\sc Dilation $#1$-Augmentation}\xspace}
\newcommand{\order}[1]{\mathcal{O}(#1)}

\newcommand{\yes}{\textup{\textsc{yes}}\xspace}
\newcommand{\no}{\textup{\textsc{no}}\xspace}
\newtheorem{red_rule}{Reduction Rule}
\newcommand{\fpt}{{\sf FPT}\xspace}
\newcommand{\polytime}{\textup{\textsf{P}}\xspace}

\newcommand{\nph}{{\sf NP}-hard\xspace}
\newcommand{\woh}{{\sf W[1]}-hard\xspace}
\newcommand{\wth}{{\sf W[2]}-hard\xspace}


\newcommand{\kdd}{\mathcal{K}_{d,d}}
\newcommand{\CG}{\mathbb{C}\xspace}
\newcommand{\rr}{Reduction Rule}

\newcommand{\lr}[1]{\left(#1\right)}
\newcommand{\LR}[1]{\left\{#1\right\}}
\newcommand{\cO}{\mathcal{O}}


\usepackage{algorithm}
\usepackage{algorithmicx}
\usepackage{algpseudocode}
\usepackage{mdframed}

\algtext*{EndWhile}
\algtext*{EndIf}
\algtext*{EndFor}




\title{\LARGE Multivariate Exploration of Metric Dilation}
\titlerunning{ Multivariate Exploration of Metric Dilation}

\author{Aritra Banik}{National Institute of Science, Education and Research, An OCC of Homi Bhabha National Institute, Bhubaneswar 752050,
Odisha, India}{aritra@niser.ac.in}{}{}

\author{Fedor V. Fomin}{University of Bergen, Norway}{Fedor.Fomin@uib.no}{https://orcid.org/0000-0003-1955-4612}{Supported by the Research Council of Norway via the project  BWCA (grant no. 314528).}

\author{Petr A. Golovach}{University of Bergen, Norway}{Petr.Golovach@ii.uib.no}{https://orcid.org/0000-0002-2619-2990}{Supported by the Research Council of Norway via the project  BWCA (grant no. 314528).}

\author{Tanmay Inamdar}
{Indian Institute of Technology Jodhpur, Jodhpur, India}{taninamdar@gmail.com}{https://orcid.org/0000-0002-0184-5932}{Supported by  IITJ Research Initiation Grant (grant number I/RIG/TNI/20240072).}

\author{Satyabrata Jana}{University of Warwick, UK}{satyamtma@gmail.com}{https://orcid.org/0000-0002-7046-0091}{Supported by the Engineering and Physical Sciences Research Council (EPSRC)
via the project MULTIPROCESS (grant no. EP/V044621/1).}

\author{Saket Saurabh}{The Institute of Mathematical Sciences, HBNI, Chennai, India  \and University of Bergen, Norway }{saket@imsc.res.in}{https://orcid.org/0000-0001-7847-6402}{The author is supported by the European Research Council (ERC) under the European Union's Horizon 2020 research and innovation programme (grant agreement No. 819416); and he also acknowledges the support of Swarnajayanti Fellowship grant DST/SJF/MSA-01/2017-18.}

\Copyright{Aritra Banik, Fedor V. Fomin, Petr A. Golovach, Tanmay Inamdar, Satyabrata Jana, Saket Saurabh}


\authorrunning{A. Banik, F. V. Fomin, P. A. Golovach, T. Inamdar, S. Jana, S. Saurabh}




\ccsdesc[300]{Theory of computation~Design and analysis of algorithms}
\keywords{Metric dilation, geometric spanner, fixed-parameter tractability}


\EventEditors{John Q. Open and Joan R. Access}
\EventNoEds{2}
\EventLongTitle{42nd Conference on Very Important Topics (CVIT 2016)}
\EventShortTitle{CVIT 2016}
\EventAcronym{CVIT}
\EventYear{2016}
\EventDate{December 24--27, 2016}
\EventLocation{Little Whinging, United Kingdom}
\EventLogo{}
\SeriesVolume{42}
\ArticleNo{23}

\begin{document}

\maketitle

\begin{abstract}
  Let $G$ be a weighted graph embedded in a metric space $(M, d_M)$. The vertices of $G$ correspond to the points in $M$, with the weight of each edge $uv$ being the distance $d_M(u,v)$ between their respective points in $M$. The dilation (or stretch) of $G$ is defined as the minimum factor $t$ such that, for any pair of vertices $u,v$, the distance between $u$ and $v$—represented by the weight of a shortest $u,v$-path—is at most $t\cdot d_M(u,v)$. We study {\sc Dilation $t$-Augmentation}, where the objective is, given a metric $M$, a graph $G$, and numerical values $k$ and $t$, to determine whether $G$ can be transformed into a graph with dilation $t$ by adding at most $k$ edges.

Our primary focus is on the scenario where the metric $M$ is the shortest path metric of an unweighted graph $\Gamma$. Even in this specific case, {\sc Dilation $t$-Augmentation} remains computationally challenging. In particular, the problem is {\sf W[2]}-hard parameterized by $k$ when $\Gamma$ is a complete graph, already for $t=2$. Our main contribution lies in providing new insights into the impact of combinations of various parameters on the computational complexity of the problem. We establish the following. 
\begin{itemize}
\item The parameterized dichotomy of the problem with respect to dilation $t$, when the graph $G$ is sparse: Parameterized by $k$, the problem is {\sf FPT} for graphs excluding a biclique $K_{d,d}$ as a subgraph for $t\leq 2$ and the problem is {\sf W[1]}-hard for $t\geq 3$ even if $G$ is a forest consisting of disjoint stars.   
\item The problem is {\sf FPT} parameterized by the combined parameter $k+t+\Delta$, where $\Delta$ is the maximum degree of the graph $G$ or $\Gamma$.   
\end{itemize}

\end{abstract}

\section{Introduction} \label{Sec:intro}
Consider a finite metric space $\mathcal{M}=(V, d_M)$, and let $G$ be a sparse weighted graph with vertices corresponding to the points in $V$. The weights assigned to the edges of $G$ represent the distances in $\mathcal{M}$ between their end-points. That is, the graph
$G = (V,E)$ is embedded in a metric space $\mathcal{M}=(V, d_M)$. The graph $G$ is called a \emph{$t$-spanner} if, for every pair of vertices $u,v \in V$, the distance between them in $G$ is at most $t\cdot d_M(u,v)$. The concept of spanners, introduced by Peleg and Sch{\"{a}}ffer \cite{PelegS89}, has evolved into a fundamental tool in various domains, including algorithms, distributed computing, networking, data structures and metric geometry, as highlighted in \cite{narasimhan2007geometric}.  The minimum number $t$ for which $G$ is a $t$-spanner defines the \emph{stretch} or \emph{dilation} of $G$.

In their book~\cite[p.474, Problem 9]{narasimhan2007geometric}, Narasimhan and Smid presented the problem of enhancing the dilation of a spanner by adding at most $k$ edges: Develop an efficient algorithm to identify the $k > 1$ edges that minimize (or approximately minimize) the stretch factor of the resulting geometric graph. Formally, the problem is defined as follows. 

	\begin{tcolorbox}[enhanced,title={\color{black} \sc{Dilation $t$-Augmentation}}, colback=white, boxrule=0.4pt,
		attach boxed title to top center={xshift=-3.5cm, yshift*=-2.5mm},
		boxed title style={size=small,frame hidden,colback=white}]
		
		\textbf{Input:} \hspace*{5mm} A graph $G = (V,E)$ embedded in a metric space $\mathcal{M}=(V, d_M)$ and \hspace*{19mm} an integer  $k$.\\
		\textbf{Question:} Does there exist a set of $k$ edges $S \subseteq V \times V$ such that the  dilation of  \hspace*{19mm} $G' = (V,E\cup S)$  is at most $t$?
	\end{tcolorbox}


The case where $k=1$, was investigated by Farshi, Giannopoulos, Gudmundsson~\cite{FarshiGG08}, Luo,  Wulff-Nilsen \cite{LuoW08}, and Wulff-Nilsen~\cite{Wulff-Nilsen10}.
However, for the more general scenario where $k>1$, the problem becomes significantly more challenging and poorly understood. Giannopoulos et al. \cite{GiannopoulosKKKM10} and Gudmundsson and Smid~\cite{GudmundssonS09} demonstrated that obtaining the best dilation spanner by adding $k$ edges to an empty graph is already \nph. Gudmundsson and Wong \cite{GudmundssonW22} proposed an algorithm that in $\cO(n^3 \log n)$ time, identifies $k$ edges whose addition to $G$ results in a graph with a stretch factor within $\cO(k)$ of the minimum stretch factor.
Related problems have also been studied in geometric settings, e.g., Aronov et al.~\cite{AronovBCGHSV08}, who showed  that, given a set $S$ of $n$ points in $\mathbb{R}^2$, and an integer $0 \leq k < n$, we can construct a  {\em geometric graph} with vertex set $S$,  at most $n-1 +k$ edges, maximum degree five, and dilation $\cO(n / (k + 1))$  in time $\cO(n \log n)$. 


We approach the {\sc{Dilation $t$-Augmentation}} problem from the perspective of Multivariate (Parameterized) Complexity -- another popular paradigm to deal with intractable problems~\cite{DBLP:books/sp/CyganFKLMPPS15,DBLP:series/mcs/DowneyF99,DBLP:series/txtcs/FlumG06}. The two most natural parameters associated with the problem are the size of the solution $k$ (the size of the augmented set) and the stretch factor $t$. We are looking for an algorithm with running time $f(k)\cdot n^{\cO(1)}$ or $g(k,t) \cdot n^{\cO(1)}$. Here, $n=|V(G)|$. These algorithms are called fixed-parameter tractable (\fpt) algorithms. There is also an accompanying theory of {\sf W}-hardness that allows us to show that problems do not admit \fpt~algorithms (see~\cite{DBLP:books/sp/CyganFKLMPPS15,DBLP:series/mcs/DowneyF99,DBLP:series/txtcs/FlumG06} for further details). The problem of finding a $t$-spanner for a given graph is considered from the perspective of parameterized complexity in ~\cite{Kobayashi18a,Kobayashi20}.
Observe that, \textsc{Dilation $t$-Augmentation} admits an algorithm with running time $n^{\cO(k)}$ -- try all possible subsets of size $k$ of $V\times V$ as $S$. Thus, this naturally leads to the following question. 

\begin{mdframed}[backgroundcolor=gray!10,topline=false,bottomline=false,leftline=false,rightline=false] 
 \centering
 Does {\rm \textsc{Dilation $t$-Augmentation}} admit an  \fpt~algorithm?
\end{mdframed}


A simple result shows that, in its full generality, the problem is \wth. In fact, consider $\mathcal{M}=(V, d_M)$ derived from an unweighted clique $K$ on $n$ vertices. That is, the vertices of the metric correspond to the vertices of $K$, and $d_M(u,v)$ is the length of shortest path between $u$ and $v$ in $K$, which is $1$. Then, for $t=2$, \textsc{Dilation $2$-Augmentation} corresponds to adding edges to $G$ so that the diameter of the augmented graph becomes $2$ (see Figure \ref{fig:example} for an illustration). This is the well-known {\sc Diameter $2$-Augmentation} problem which is known to be \wth~\cite{DBLP:journals/dam/GaoHN13}. Thus, we do not expect that \textsc{Dilation $t$-Augmentation} to admit an algorithm with running time $g(k)\cdot n^{\cO(t)}$  (See Proposition~\ref{prop:diamaug} in the appendix for completeness). This simple hardness result motivates the following set of questions: 

\begin{sloppypar}
\begin{itemize}
    \item For which metrics $\mathcal{M}$, does \textsc{Dilation $t$-Augmentation} admit an \fpt~algorithm?
    \item For which family of input graphs $\cal G$, does \textsc{Dilation $t$-Augmentation} admit an \fpt~algorithm?
    \item For which pairs of family of input graphs and metric, $({\cal G}, \mathcal{M})$, does \textsc{Dilation $t$-Augmentation} admit an \fpt~algorithm?
\end{itemize}
\end{sloppypar}

In this article, we specifically concentrate on the fundamental and non-trivial scenario where the metric $\mathcal{M}$ corresponds to the shortest-path metric of an unweighted graph. Our contribution represents a substantial addition to the limited body of literature that addresses the challenging problem of \dta.

\subsection{Our Model, Results, and Methods}
\begin{figure}[t!]
	\centering
	\includegraphics[scale=0.7]{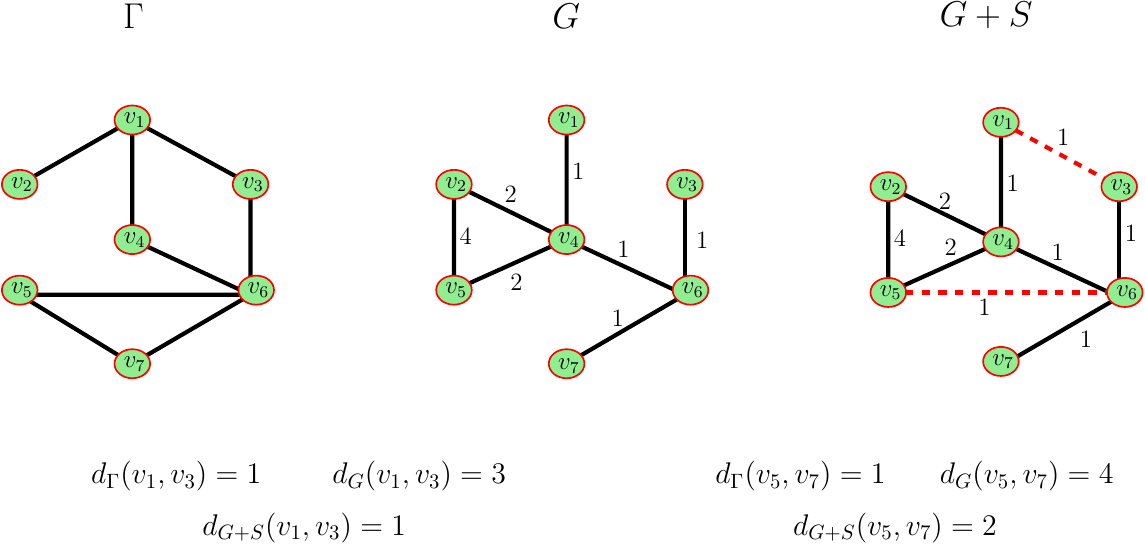}
	\caption{An instance of \dtwoa with $k=2$.  The edges of the solution $S$ are shown in dashed red. The edge weights in $G$ are derived from the corresponding shortest path in $\Gamma$.  }

 \label{fig:example}
\end{figure}
Throughout this paper, we deal with the shortest-path metric derived from an unweighted graph, unless otherwise mentioned.\footnote{Any metric can be derived from an edge weighted graph. Thus, our assumption represents a simplification.} For an edge weighted graph $H$, let $d_{H}(u,v)$ denote the {\em weighted shortest path distance} in the graph $H$ between two vertices $u$ and $v$. Additionally, we use ${\sf hop}_H(u,v)$ to denote the length of the shortest path between $u$ and $v$ when we forget the edge weights. In other words, this denotes the length of the shortest path in $H$ when all edges receive weight $1$. We will use $d_H$ and ${\sf hop}_H$ {\em only when $H$ is edge weighted}; otherwise, both $d_H$ and ${\sf hop}_H$ represent the same thing. 
Throughout the article, the metric $\cal M$ will be derived from an undirected unweighted graph $\Gamma=(V,E)$. That is, the vertices of the metric correspond to the vertices of $\Gamma$, and $d_M(u,v)$ is assigned the shortest path distance in $\Gamma$, between $u$ and $v$. Observe that $d_M(u,v)=d_\Gamma(u,v)={\sf hop}_\Gamma(u,v)$. We will use $(G, \Gamma, k)$ as an instance of \textsc{Dilation $t$-Augmentation} problem. 
Since $\cal M$ is derived from $\Gamma$, for ease of notation, we use $d_\Gamma(u,v)$ instead of $d_M(u,v)$. 
Furthermore, recall that $G$ is embedded in the metric space $\cal M$ derived from $\Gamma$. That is, $G$ is an edge-weighted graph, where the weights assigned to the edges of $G$ represent the distances in $\Gamma$ between their end-points. That is, $w(u,v)=d_\Gamma(u,v)$ and $d_G(u,v)$ denote the {\em weighted shortest path distance} in $G$.

The starting point of our research is the result of Giannopoulos et al.~\cite{GiannopoulosKKKM10} and Gudmundsson and Smid~\cite{GudmundssonS09} who showed 
that finding the best dilation spanner by adding $k$ edges to an empty graph is \nph. 
It also follows from the results of Peleg and Sch{\"{a}}ffer~\cite{PelegS89} (see Proposition~\ref{prop:empty}) that \dta is \nph for $t=2$, where $G$ is an empty graph and the metric $\cal M$ is derived from an undirected graph $\Gamma$.  So, a natural question is whether this special subcase is \fpt. Starting with this question, we trace the boundaries of the \dta problem by instantiating different graph families to which $\Gamma$ can belong or by 
 instantiating different graph families to which $G$ can belong. Our results consist of the following.

  \begin{mdframed}[backgroundcolor=cyan!5,topline=false,bottomline=false,leftline=false,rightline=false] 
 \begin{enumerate}

 \item Our main algorithmic result is an \fpt algorithm for \dtwoa when $G$ belongs to the family of $\kdd$-free graphs. Recall that a graph $G$ is $\kdd$-free if it does not contain a complete bipartite graph with $d$ vertices, each on both sides of the bipartition, as a subgraph. However, there is no restriction on $\Gamma$.

 We complement this result by showing that this result cannot be extended for $t=3$. Indeed, we show that when $G$ is a disjoint union of a star and an independent set, and $\Gamma$ is an arbitrary graph, then \dtai{3} is \woh. In a slight converse we also show that \dtai{3} is \wth when $\Gamma$ is a star, and $G$ is an arbitrary graph. On the other hand, in a generalization of the latter setting when $\Gamma$ is a tree, we show that \dtai{2} is polynomial-time solvable. 

  \item Observe that for the families of stars we cannot bound the maximum degree of each graph uniformly. We show that if $G$ (resp.~$\Gamma$) belongs to the family of graphs with a maximum degree at most $d$ and $\Gamma$ (resp.~$G$) is an arbitrary graph, then \dta admits an algorithm \fpt with running time $f(k,t,d) \cdot n^{\cO(1)}$.

 We complement this result by showing that this result cannot be extended for the weighted metric. That is, when $G$ belongs to the family of graphs of maximum degree $3$ and the graph $\Gamma$ is edge-weighted, then the problem becomes intractable.  In fact, the edges of $\Gamma$ get weights from the two-size set $\{1,w\}$. We show that \dtai{(2+\epsilon)} is \woh in this case. 
 \end{enumerate}
 \end{mdframed}

 It is important to note that the family of $\kdd$-free graphs includes trees, planar graphs, graphs that exclude a fixed graph $H$ as a minor, graphs of bounded expansion, nowhere dense graphs, and graphs of bounded degeneracy.\footnote{For a formal definition of many of these graph classes, we refer to standard texts on graph theory, e.g., Diestel~\cite{diestel2024graph}.}

We refer to Table~\ref{table:introresults} for a quick reference of all the results shown in this article.

\begin{table}[h]
 \begin{center}
		\begingroup
		\setlength{\tabcolsep}{7pt} 
		\renewcommand{\arraystretch}{1.5} 
		\begin{tabular}{ c c c c c}
			\hline
   \rowcolor{anti-flashwhite}
   
			  $\boldsymbol{G}$ &  \textbf{Metric} ($\boldsymbol{\Gamma}$) & \textbf{Parameter(s)} & $\boldsymbol{t}$ & \textbf{Complexity} \\\hline\hline
	$\kdd$-free & General	&  $k+d$ & 2 & \fpt~(Section~\ref{sec:kddFree})\\	
    Star forest & General	&  $k $ & 3 & \woh~(Section~\ref{sec:genstar3})\\
    General & Star	  & $k $ & 3 & \woh~(Section~\ref{sec:stargen3})\\
    General &  Tree & $k $ & 2 & \polytime~(Section~\ref{sec:treegen2})\\\hline
    General   & Bounded degree	&  $k+t+\Delta$ & $t$ & \fpt~(Section~\ref{sec:boundgen}) \\
	Bounded degree & General &  $k+t+\Delta$ & $t$ & \fpt~(Section~\ref{sec:genbound}) \\
    Subcubic & $(1,w)$-weighted &  $k $ & $2+\epsilon$ & \woh~(Section~\ref{Sec:hard_weight})\\\hline
    Edgeless & General	  & $k $ & 2 & \nph~(Section \ref{sec:genempty})\\	General & Clique &  $k$ & 2 & \wth~(Section \ref{sec:cliquegen}) \\\hline
\end{tabular}
	\caption{Complexity of the {\sc Dilation $t$-Augmentation} for different $G$, metric, parameters and $t$.\label{table:introresults}}
		\endgroup
  \end{center}
  \end{table}
 
Our algorithmic results are based on the structural interaction between $\Gamma$ and $G$. We start by defining the notion of conflicting pairs (a pair $(u,v)$ in $G$ such that $d_G(u,v)>t \cdot d_\Gamma(u,v)$) and show that to resolve these pairs it suffices to focus on those pairs that are adjacent in $\Gamma$ (that is, $(u,v)\in E(\Gamma)$).   We call them adjacent conflicts. Resolving adjacent conflict pairs is the key to our results, both algorithms and hardness. 

Our main result regarding \dtwoa when $G$ is $\kdd$-free is given in Section~\ref{sec:kddFree}. Our algorithm can be broadly divided into three separate phases. First, in Section~\ref{sec:cgvc} we define the notion of a \emph{conflict graph} and bound the size of the minimum vertex cover in this graph using the restrictions imposed on the solution for \dta in the special case of $t = 2$. Then, in Section~\ref{sec:boundvc}, we analyze the interplay between the bounded-size vertex cover, along with the structural properties of $G$ due to the fact that it is $\kdd$-free, and use it to design a recursive algorithm that tries to guess a certain kind of edges that must belong to any solution in a \yes-instance. At the leaves of this recursive procedure, we can upper bound the total number of vertices involved in conflicts. Nevertheless, we cannot get rid of all other vertices since they may be crucial for certain shortest paths. Finally, in Section~\ref{sec:solveinstance} we bound the size of the instance by carefully eliminating a subset of conflict-free vertices, at which point we can enumerate all possible solutions.
For the other two algorithmic results, we obtain a small set of $V$ that contains all the end-points of edges of the solution. The hardness results are shown via parameter preserving reductions from known {\sf W}-hard problems, using classical gadget constructions.






         


\subparagraph{Notations.} Let $[n]$ be the set of integers $\{1,\ldots, n\}$. 
    For any graph $G$, we denote the set of vertices of $G$ by $V(G)$ and the set of edges by $E(G)$. We denote the set of non-edges by $\overline{E}(G)$, that is, $\overline{E}(G)= {V(G)\choose 2} \setminus E(G)$, where $\binom{V(G)}{2}$ is the set of all unordered pairs of distinct vertices in $V(G)$. For notational convenience, even though our edges/non-edges are undirected, we use the notation $(a, b)$ instead of $\LR{a, b}$ -- note that due to this convention, $(a, b) = (b, a)$. A graph $G = (V, E)$ is said to be a \emph{star}, if there exists a vertex $c \in V(G)$ such that $E = \LR{ (u, c): u \in V(G) \setminus \LR{c} }$. In this case, we say that $c$ is the \emph{center} of the star. Vertices of $V(G) \setminus \LR{c}$ (if any) are said to be the \emph{leaves} of the star. If every connected component of a graph $G$ is a star, then we say that $G$ is a \emph{star forest}.
    
    For a graph $G=(V,E)$ and a subset $S$ of edges in $\overline{E}(G)$, 
   we use the notation $G+S$ to mean the graph $G'=(V,E\cup S)$. 
    For a graph $G$, a path is a sequence of distinct vertices $v_1 \cdots v_\ell$ such that $(v_i,v_{i + 1})\in E(G)$ for $i \in[\ell-1]$ and this path is denoted by $\langle v_1, \cdots, v_\ell \rangle$.     For an undirected unweighted graph $H$, we use ${\sf hop}_H(u,v)$ to denote the length of the shortest path between $u$ and $v$. For a graph $H$, a vertex $v$ and an integer $\ell$, $N_H^\ell(v)$ denotes all vertices $w$ such that ${\sf hop}_H(v,w)\leq \ell$. Similarly, for a graph $H$, a vertex subset $W$ and an integer $\ell$, $N_H^\ell(W)=\cup_{w\in W} N_H^\ell(w)$.


\section{Conflict versus Adjacent Conflicts}
Let $(G,\Gamma,k)$ be an instance of \textsc{Dilation $t$-Augmentation} problem.  We say that two vertices $u$ and $v$ in $G$ are in  {\em $t$-conflict} if
$d_G(u,v)>t \cdot d_\Gamma(u,v)$. Furthermore, two vertices are said to be in {\em adjacent $t$-conflict} with each other, if $u$ and $v$ are in conflict and $u$ and $v$ are adjacent in $\Gamma$. We say that a graph $G$ is {\em   (adjacent) $t$-conflict-free}  if there is no pair of vertices $(u,v)$ such that $u$ and $v$ are in (adjacent) $t$-conflict. Throughout the discussion, the value of $t$ will be clear from the context, so we use the terms \emph{conflict/conflict-free} instead of \emph{$t$-conflict/$t$-conflict-free}. The following lemma shows the relationship between conflict-free and adjacent conflict-free. 

    \begin{lemma}\label{lem:adjacent}
        For any set of edges $S$, $G+S$ is conflict-free if and only if $G+S$ is adjacent conflict-free.
    \end{lemma}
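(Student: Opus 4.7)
The forward direction is immediate from the definitions: adjacent conflict is a special case of conflict, so if $G+S$ has no conflicting pair at all, then in particular no pair of vertices that are adjacent in $\Gamma$ can be in conflict.

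For the reverse direction, the plan is to use a path-decomposition argument in $\Gamma$ combined with the triangle inequality in $G+S$. Assume $G+S$ is adjacent conflict-free, and fix any pair $u,v \in V$. Take a shortest $u,v$-path in $\Gamma$, say $u = w_0, w_1, \ldots, w_\ell = v$, where $\ell = d_\Gamma(u,v)$ (recall that since $\Gamma$ is unweighted, the metric distance equals the hop distance). Each consecutive pair $(w_i, w_{i+1})$ is an edge of $\Gamma$, so by the adjacent conflict-free assumption,
\[
d_{G+S}(w_i, w_{i+1}) \;\leq\; t \cdot d_\Gamma(w_i, w_{i+1}) \;=\; t.
\]

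Summing over all $\ell$ consecutive pairs and using the triangle inequality for weighted shortest path distance in $G+S$,
\[
d_{G+S}(u, v) \;\leq\; \sum_{i=0}^{\ell-1} d_{G+S}(w_i, w_{i+1}) \;\leq\; \ell \cdot t \;=\; t \cdot d_\Gamma(u,v),
\]
so $(u,v)$ is not in $t$-conflict in $G+S$. Since $(u,v)$ was arbitrary, $G+S$ is conflict-free.

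There is no real obstacle here; the only thing to be careful about is the distinction between the weighted distance $d_{G+S}$ in the embedded graph and the hop/metric distance $d_\Gamma$, and to remember that each edge $(w_i, w_{i+1})$ of $\Gamma$ contributes a metric distance of exactly $1$, which is what lets us conclude $d_{G+S}(w_i, w_{i+1}) \leq t$ rather than a weaker per-step bound.
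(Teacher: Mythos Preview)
Your proof is correct and follows essentially the same argument as the paper: the forward direction is immediate, and for the reverse direction you take a shortest $u$--$v$ path in $\Gamma$, apply the adjacent-conflict-free hypothesis to each edge of that path, and sum via the triangle inequality in $G+S$. The paper's proof is identical in structure, just with slightly different notation for the path vertices.
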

    \begin{proof}
        If $G+S$ is conflict-free, then $G+S$ is also adjacent conflict-free by definition. For the reverse direction, let $G+S$ be adjacent conflict-free. Let $u$ and $v$ be two distinct vertices in $V$. We will show that $d_{G+S}(u,v) \leq t \cdot d_\Gamma(u,v)$. 
        Towards this, consider a shortest path between $u$ and $v$ in $\Gamma$. Let $\langle u=v_1,v_2,\cdots ,v_{\lambda}=v \rangle$ be one such shortest path between $u$ and $v$ in $\Gamma$. Observe that since each $(v_i,v_{i+1})$ is adjacent in $\Gamma$ and $G+S$ is adjacent to conflict-free, $d_{G+S}(v_i,v_{i+1})\leq t \cdot  d_{\Gamma}(v_i,v_{i+1})$. Now $d_{G+S}(u,v) \leq \sum_{i=1}^{\lambda-1}d_{G+S}(v_i, v_{i+1}) \leq \sum_{i=1}^{\lambda-1}t \cdot  d_{\Gamma}(v_i,v_{i+1}) = t \cdot d_{\Gamma}(u,v)$.          Thus, there exists a path between $u$ and $v$ in $G+S$ of length at most $t \cdot d_{\Gamma}(u,v)$. This completes the proof.
    \end{proof}


  \begin{mdframed}[backgroundcolor=cyan!5,topline=false,bottomline=false,leftline=false,rightline=false] 
  For our problem, we add edges to $G$ to make it conflict-free. Lemma~\ref{lem:adjacent} shows that for this purpose it is sufficient to focus {\em only} on adjacent conflicts. Thus, from now on, we only focus on adjacent conflicts. Unless explicitly mentioned otherwise, by \emph{conflict}, we mean adjacent conflict. Nevertheless, we may use \emph{adjacent conflict} at some places to emphasize the adjacent-ness of the conflict.
  
\end{mdframed}

The next result formalizes the fact that the distances in $G$ are lower bounded by the distances in $\Gamma$. The proof follows from definition, and hence omitted. 

    \begin{observation}\label{obs:path}
    Let $G = (V,E)$ be a graph embedded in a metric space derived from the undirected graph $\Gamma$. Then for any pair of vertices $x$ and $y$ we have $d_{\Gamma}(x,y) \leq d_G(x,y)$.
    \end{observation}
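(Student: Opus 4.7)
The plan is to unfold the definitions and then appeal to the triangle inequality in the metric derived from $\Gamma$. Recall that $G$ is edge-weighted so that every edge $(u,v)\in E(G)$ has weight $w(u,v)=d_\Gamma(u,v)$, and that $d_G(x,y)$ denotes the weighted shortest-path distance in $G$. The only fact I will need about $\Gamma$ is that $d_\Gamma$ satisfies the triangle inequality, which is true because it is the shortest-path metric of an unweighted graph (equivalently, a genuine metric on $V$).

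First I would fix an arbitrary shortest $x$-$y$ path in $G$, say $\langle x=u_0,u_1,\ldots,u_\ell=y\rangle$, so that $d_G(x,y)=\sum_{i=0}^{\ell-1} w(u_i,u_{i+1})$. Next I would substitute the definition of the edge weights to rewrite this as $d_G(x,y)=\sum_{i=0}^{\ell-1} d_\Gamma(u_i,u_{i+1})$. Finally I would apply the triangle inequality for $d_\Gamma$ along the sequence $u_0,u_1,\ldots,u_\ell$ to conclude
\[
d_\Gamma(x,y)\ \leq\ \sum_{i=0}^{\ell-1} d_\Gamma(u_i,u_{i+1})\ =\ d_G(x,y).
\]

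There is essentially no obstacle here: the statement is an immediate consequence of the fact that every edge of $G$ has weight equal to the $\Gamma$-distance between its endpoints, so any walk in $G$ corresponds to a polygonal path in the metric whose total length upper bounds $d_\Gamma(x,y)$. The only mild care needed is to make sure we argue about an arbitrary path achieving $d_G(x,y)$ (which exists since $G$ has finitely many vertices and we may assume $x,y$ lie in the same component; otherwise $d_G(x,y)=\infty$ and the inequality holds trivially). This is exactly why the authors leave the proof to the reader.
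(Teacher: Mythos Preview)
Your proof is correct and matches the paper's intent: the paper omits the proof, stating only that it ``follows from definition,'' and your argument is precisely the natural unfolding of the definitions together with the triangle inequality for $d_\Gamma$. Nothing more is needed.
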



  \begin{mdframed}[backgroundcolor=cyan!5,topline=false,bottomline=false,leftline=false,rightline=false] 
 Let $(G,\Gamma,k)$ be a \yes-instance of \dta and $S$ be an (unknown) minimal solution. The set $S$ is also called dilation $t$-augmentation set.  We use $V_S$ to denote the set of end-points of the edges in $S$. Let $V_c$ denote the set of all vertices in $G$ that are in conflict with some other vertex. In the next two sections, $V_c$ and $V_S$ will be used repeatedly. 
\end{mdframed}

\section{Dilation $2$-Augmentation for $\kdd$-free Graphs}\label{sec:kddFree}
Let $(G,\Gamma,k)$ be an instance of \dtwoa.  In this section, we consider the case where $G$ belongs to the family of $\kdd$-free graphs. Recall that a graph $G$ is $\kdd$-free if it does not contain a complete bipartite graph with $d$ vertices each on both sides of the bipartition. However, there is no restriction on $\Gamma$. 


\subsection{Conflict Graph and Vertex Cover} \label{sec:cgvc}
We first define a {\em conflict graph} $\CG$ on the set of vertices $V(G)$, which captures all adjacent conflicts.  We place an edge between two vertices $u$ and $v$ in $\CG$ if and only if $u$ and $v$ are in adjacent conflict with each other.  Note that $V_c$ (the set of vertices present in adjacent conflicts) is exactly the set of vertices with degree at least one in $\CG$. The next result shows that every edge in $E(\CG)$ intersects some edge of a dilation $2$-augmentation set $S$.

\begin{lemma}\label{lemma:oneendpoint}
 Let $(G,\Gamma,k)$ be a \yes-instance of \dtwoa and let $S$ be a minimal solution to the given instance. In addition, let $\CG$ be the corresponding conflict graph. Then, for any edge $(u,v)\in E(\CG)$, $|\{u,v\}\cap V_S|\geq 1$.  

\end{lemma}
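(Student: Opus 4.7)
The plan is a short proof by contradiction. Suppose that neither $u$ nor $v$ lies in $V_S$; I will show that $u$ and $v$ remain in adjacent conflict in $G+S$, contradicting that $S$ is a solution. Since $(u,v)\in E(\CG)$ we have $d_\Gamma(u,v)=1$ and $d_G(u,v)>2$, so it suffices to prove $d_{G+S}(u,v)>2$.

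The key tool is that $G+S$ is itself a graph embedded in the same metric derived from $\Gamma$, so applying the argument of Observation~\ref{obs:path} verbatim to $G+S$ yields $d_{G+S}(x,y)\geq d_\Gamma(x,y)$ for all $x,y\in V$. Because $\Gamma$ is unweighted, this lower bound is at least $1$ whenever $x\neq y$. The whole proof will be bookkeeping built on top of this single fact.

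I will now fix an arbitrary $u$-$v$ path $P$ in $G+S$ and bound its weight. If $P$ uses no edge of $S$, then $P$ is a $u$-$v$ walk in $G$, so its weight is at least $d_G(u,v)>2$, and we are done. Otherwise, let $(a,b)$ be the first edge of $S$ encountered on $P$ when traversing from $u$. I decompose $P$ into three parts: the initial $G$-segment from $u$ to $a$, the single $S$-edge $(a,b)$, and the remaining sub-path from $b$ to $v$ (which may use further edges of $S$). The hypothesis $u\notin V_S$ forces $a\neq u$, so the initial segment has weight at least $d_\Gamma(u,a)\geq 1$; the $S$-edge contributes $d_\Gamma(a,b)\geq 1$; and $v\notin V_S$ forces $b\neq v$, so by the key observation applied to $G+S$ the terminal sub-path contributes at least $d_\Gamma(b,v)\geq 1$. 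Summing, $P$ has weight at least $3$.

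Since every $u$-$v$ path in $G+S$ has weight strictly greater than $2$, we obtain $d_{G+S}(u,v)>2=2\cdot d_\Gamma(u,v)$, contradicting that $S$ resolves all adjacent conflicts. I do not foresee any genuine obstacle; the only care needed is to ensure, via the assumption $\{u,v\}\cap V_S=\emptyset$, that in the second case all three segments of $P$ are nontrivial, so that each contributes at least $1$ to the total weight.
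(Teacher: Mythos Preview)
Your proof is correct and follows essentially the same contradiction strategy as the paper. The paper's version is slightly more direct: rather than decomposing an arbitrary path at its first $S$-edge, it observes that a path of weighted length at most $2$ in $G+S$ (where every edge has weight at least $1$) must be a two-hop path $\langle u,w,v\rangle$, and then notes that if either edge lies in $S$ we already have $u\in V_S$ or $v\in V_S$, while if neither does then the path lies entirely in $G$, contradicting that $(u,v)$ was a conflict.
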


\begin{proof}
For the sake of contradiction, assume that there exists an edge $(u,v) \in  E(\CG)$ such that $\LR{u, v} \cap V_S = \emptyset$. Recall that by the definition of an adjacent conflict $d_\Gamma(u,v)=1$. As $G+S$ is conflict-free, there is a (weighted) path of length $2$ of the form $\langle u,w,v \rangle$ between $u$ and $v$ in $G+S$. If, $(u,w)\in S$ or $(v,w)\in S$, then we have $|\{u,v\}\cap V_S|\geq 1$. So assume that neither $(u,w)\in S$ nor $(v,w)\in S$. Then, the path $\langle u,w,v \rangle$ of (weighted) length $2$ between $u$ and $v$ is in $G$, contradicting the fact that $u$ and $v$ are in adjacent conflict.   
\end{proof}

Lemma~\ref{lemma:oneendpoint} allows us to bound the size of a minimum vertex cover (set of vertices that intersects all the edges) of $\CG$. To do so, we propose the next reduction rule and prove its correctness (or safety). 

\begin{red_rule}\label{rr:lowvc}
    If the size of a maximum matching in $\CG$ is more than $2k$, return \no.
\end{red_rule}
\begin{lemma}
    \rr~\ref{rr:lowvc} is safe. 
\end{lemma}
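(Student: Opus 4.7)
The plan is straightforward: I would show that in any \yes-instance with a minimal solution $S$ of size at most $k$, the set $V_S$ must be a vertex cover of $\CG$ of size at most $2k$, which forces the maximum matching in $\CG$ to have size at most $2k$. Contrapositively, if the maximum matching exceeds $2k$, no such $S$ can exist, so returning \no is safe.

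More concretely, I would first suppose for contradiction that $(G,\Gamma,k)$ is a \yes-instance admitting a minimal solution $S$ with $|S|\leq k$. Since every edge has two endpoints, $|V_S|\leq 2|S|\leq 2k$. By Lemma~\ref{lemma:oneendpoint}, for every edge $(u,v)\in E(\CG)$, at least one of $u$ or $v$ lies in $V_S$; in other words, $V_S$ is a vertex cover of $\CG$.

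Next I would invoke the standard fact that in any graph the size of a maximum matching is at most the size of a minimum vertex cover, since each matching edge requires a distinct vertex in any cover. Therefore the maximum matching in $\CG$ has size at most $|V_S|\leq 2k$. This contradicts the premise of the rule that the maximum matching in $\CG$ has size greater than $2k$, so no solution of size at most $k$ exists and the \no-answer is correct.

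There is essentially no obstacle here — the argument is a direct combination of the previous lemma with the elementary matching/vertex-cover bound; the only thing to be careful about is that $S$ is chosen to be a minimal solution so that Lemma~\ref{lemma:oneendpoint} applies to every edge of $\CG$, and that the bound $|V_S|\leq 2k$ uses $|S|\leq k$ rather than equality.
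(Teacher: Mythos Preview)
Your proposal is correct and follows essentially the same approach as the paper: both argue that $|V_S|\le 2k$ and then use Lemma~\ref{lemma:oneendpoint} to conclude that $V_S$ covers every edge of $\CG$, whence no matching of size exceeding $2k$ can exist. The only cosmetic difference is that you invoke the general matching--vertex-cover inequality explicitly, whereas the paper spells out the pigeonhole directly on the matching edges; the content is identical.
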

\begin{proof}
 Let $(G,\Gamma,k)$ be a \yes-instance of \dtwoa and let $S$ be a minimal solution to the given instance. Note that the $k$ edges of $S$ can be incident to at most $2k$ vertices, in particular at most $2k$ vertices of $V_c$.  However, if $\CG$ has a matching $M$ of size larger than $2k$, then there exists at least one edge $e = (u, v)$ of $M$, such that $V_S \cap \left\{u, v\right\} = \emptyset$, contradicting Lemma~\ref{lemma:oneendpoint}. Thus, in a \yes-instance, the size of maximum matching in $\CG$ is bounded by $2k$. In other words, if the size of a maximum matching in $\CG$ is more than $2k$, we have a \no-instance. 
\end{proof}

First, we use a polynomial-time algorithm (e.g., \cite{edmonds1965paths}) to find the maximum matching $M$ in $\CG$ and apply \rr~\ref{rr:lowvc}.
Notice that if the reduction rule \ref{rr:lowvc} is not applicable, then $|M| \le 2k$, which implies that it has a vertex cover of size at most $4k$ -- in particular, we can take the set of end-points of the edges of in $M$ to be such a vertex cover, call it $R$. 


Let $I=V\setminus R$. Note that $I$ induces an independent set in $\CG$, but not necessarily in $G$. 
Next, we bound the degree of each vertex in $R$ in $\CG$.  This will imply that the number of conflict pairs is bounded. 

The set $R$ is at most of size $4k$. For our algorithm at this stage, we ``guess the edges of $S$ that have both end-points in $R$''. This results in an equivalent annotated instance, which is now formalized.  We first define an ``annotated'' instance of the problem.  An example of the annotated version of the problem is given by the tuple $(G', \Gamma, k', R')$, where we also provide a vertex cover $R'$ of the corresponding conflict graph, which is used mainly for book keeping purposes. The task is still to find a dilation $2$-augmentation set $S$ of size at most $k'$. Here, we are not allowed to add an edge to a solution that does not have an end-point outside $R$. 

Let $\overline{E}_{R}$ be the set of non-edges in $G[R]$. For every subset $\emptyset \subseteq E_j\subseteq \overline{E}_{R}$ of size at most $k$, we create an instance ${\cal I}_j$ of the annotated problem, as follows:  ${\cal I}_j \coloneqq (G_{j}, \Gamma, k_{j},R)$ where $G_{j}=G+E_j$ and $ k_j=k-|E_j|$. The following lemma is immediate. 

\begin{lemma}
    $(G, \Gamma, k)$ is a \yes-instance if and only if one of the instances in $\{(G_{j}, \Gamma, k_{j},R): \forall\  \emptyset \subseteq E_j \subseteq \overline{E}_R \text{ s.t. } k_j \le k \}$ is a \yes-instance. 
\end{lemma}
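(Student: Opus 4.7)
The plan is a standard branching argument: the enumeration goes over all possible choices of the subset of the solution that lies entirely inside $R$, and the equivalence is then almost tautological once the semantics of the annotated instance are pinned down.

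First I would set up the forward direction. Let $S$ be a minimal solution of size at most $k$ for $(G,\Gamma,k)$, and partition it as
\[
  E^\star \coloneqq S \cap \binom{R}{2}, \qquad S^\star \coloneqq S \setminus E^\star.
\]
By construction $E^\star \subseteq \overline{E}_R$ (its elements are non-edges of $G[R]$, because $S$ consists of non-edges of $G$), and $|E^\star| \le |S| \le k$, so $E^\star$ is one of the sets $E_j$ enumerated in the family. For this particular $j$ we have $G_j = G + E^\star$ and $k_j = k - |E^\star|$. Every edge of $S^\star$ has at least one endpoint outside $R$ by the definition of $E^\star$, and $|S^\star| = |S| - |E^\star| \le k - |E^\star| = k_j$. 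Finally $(G_j) + S^\star = G + E^\star + S^\star = G + S$, which is conflict-free by assumption; hence by Lemma~\ref{lem:adjacent} it is adjacent conflict-free and $S^\star$ is a valid solution to the annotated instance $(G_j,\Gamma,k_j,R)$.

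For the backward direction I would just reverse this decomposition. Suppose some $(G_j,\Gamma,k_j,R)$ is a \yes-instance of the annotated problem with witness $S'$; by the definition of the annotated problem, every edge of $S'$ has at least one endpoint outside $R$, so $S' \cap \binom{R}{2} = \emptyset$, whereas $E_j \subseteq \binom{R}{2}$. Hence $E_j$ and $S'$ are disjoint, and $S \coloneqq E_j \cup S'$ satisfies $|S| = |E_j| + |S'| \le |E_j| + k_j = k$. Moreover $G + S = G + E_j + S' = G_j + S'$, which is (adjacent) conflict-free, so again by Lemma~\ref{lem:adjacent} it is conflict-free and $S$ is a valid augmenting set for $(G,\Gamma,k)$.

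There is no real obstacle here; the only thing I would be careful about is the precise meaning of the annotated instance, namely that its solutions are required to avoid adding any edge both of whose endpoints lie in $R$. Once this convention is fixed, the decomposition $S = (S \cap \binom{R}{2}) \cup (S \setminus \binom{R}{2})$ witnesses one direction and the concatenation $E_j \cup S'$ witnesses the other, and the bookkeeping of the budget $k_j = k - |E_j|$ matches on both sides.
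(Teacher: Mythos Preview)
Your proof is correct and is exactly the natural argument; the paper itself declares the lemma ``immediate'' and gives no proof, so there is nothing to compare against beyond the obvious decomposition $S = (S \cap \binom{R}{2}) \cup (S \setminus \binom{R}{2})$ that you carry out carefully.
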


 \begin{mdframed}[backgroundcolor=cyan!5,topline=false,bottomline=false,leftline=false,rightline=false] 
 From now on, we assume that we have an annotated instance of \dtwoa. That is, 
 $(G, \Gamma, k,R)$ is an instance of \dtwoa and we are seeking a dilation $2$-augmentation set $S$ such that for any edge $(x,y)$ in $S$, $|\{x,y\}\cap R|\leq 1$. That is, every edge added must contain at least one end-point that does not belong to $R$.
\end{mdframed}

\subsection{Bounding the Size of the $V_c$ in Annotated Instances} \label{sec:boundvc}
We now describe how to solve the annotated version of the problem. As a first step, we give a recursive algorithm such that at the end we obtain instances such that the size of $V_c$ gets upper-bounded by a function of $k$ and $d$ alone. To this end, and to make it easier to understand, we adapt the following technical definition of \cite{BonnetBTW19} in the context of our problem.

\begin{sloppypar}
\begin{definition}[\cite{BonnetBTW19}]
    A decreasing \fpt-Turing-reduction is an \fpt algorithm which, given
an annotated instance $(G, \Gamma, k, R)$, produces $\ell = g(k, d)$ annotated instances $(G_1, \Gamma, k_1, R_1), \cdots ,(G_{\ell}, \Gamma, k_{\ell}, R_\ell)$, such that:
\begin{itemize}
    \item $(G, \Gamma, k, R)$ is a \yes-instance iff one of the annotated instances in $\{(G_{i},\Gamma, k_{j}, R_i):1\leq j\leq \ell\}$ is a \yes-instance.
    \item  $k_j\leq k-1$ for every $j\in[\ell]$.
\end{itemize}
\end{definition}
\end{sloppypar}

We now give a sequence of \fpt Turing reductions to solve the problem. At a high level, we will iterate over the vertices in $R$ that have high degrees in $\CG$, and at each step we will add at least one vertex from $I$ to $R$, and at least one new edge to $G$. Thus, in each step, the budget $k$ reduces by at least one. At least one of the resulting instances will correspond to a ``correct'' set of choices.




  

Note that initially the size of $R$ is at most $4k$; however during the sequence of Turing \fpt reductions, the size of $R$ may increase. Nevertheless we will maintain the invariant that $|R|$ remains bounded by $5k$. 

Let us define an auxiliary function $f$ that is useful for defining some parameters in the upcoming steps. This function is defined as follows.
\[f(i) = \begin{cases}
      d &\text{ if } i = d
      \\d \cdot k + k^2 + k &\text{ if } i = d-1
      \\d \cdot k^{d-i} +  k^{d-i+1} + \lr{2 \cdot \sum_{j = 2}^{d-i} k^j} + k &\text{ if } 0 \le i \le d-2
  \end{cases}\]
It is easy to verify that $f$ satisfies the following property.
\begin{proposition} \label{prop:function}
    For any $1 \le i \le d$, $f(i-1)=  (f(i) + k)\cdot k + k  $.
\end{proposition}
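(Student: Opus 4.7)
The proof is purely computational: I would verify the identity $f(i-1) = (f(i)+k)\cdot k + k$ by a case analysis that mirrors the three cases in the definition of $f$. Concretely, I would handle $i = d$, $i = d-1$, and $1 \le i \le d-2$ separately, since the formula defining $f(i)$ versus $f(i-1)$ switches branch precisely at these boundaries.

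The plan is as follows. For $i = d$, I would substitute $f(d) = d$ into the right-hand side and simplify $(d+k)k + k = dk + k^2 + k$, which matches $f(d-1)$ from the second branch of the definition. For $i = d-1$, I would substitute $f(d-1) = dk + k^2 + k$ and compute $(dk + k^2 + k + k)\cdot k + k = dk^2 + k^3 + 2k^2 + k$; this must coincide with $f(d-2)$, and indeed plugging $d-i = 2$ into the third branch gives $d\cdot k^2 + k^3 + 2\sum_{j=2}^{2}k^j + k = dk^2 + k^3 + 2k^2 + k$, as required.

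The main case is $1 \le i \le d-2$, where both $f(i)$ and $f(i-1)$ come from the third branch. Starting with
\[
(f(i)+k)\cdot k + k = \Bigl(d\,k^{d-i} + k^{d-i+1} + 2\sum_{j=2}^{d-i} k^j + 2k\Bigr)\cdot k + k,
\]
I would distribute to obtain $d\,k^{d-i+1} + k^{d-i+2} + 2\sum_{j=2}^{d-i} k^{j+1} + 2k^2 + k$, re-index the summation via $j \mapsto j-1$ to rewrite $2\sum_{j=2}^{d-i}k^{j+1} = 2\sum_{j=3}^{d-i+1}k^j$, and then absorb the $2k^2$ term to get $2\sum_{j=2}^{d-i+1}k^j$. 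Setting $d-(i-1) = d-i+1$, this expression is exactly $f(i-1)$ as given by the third branch of the definition.

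I do not expect any real obstacle; the only bookkeeping subtlety is the re-indexing of the geometric-type sum and making sure the boundary cases $i=d$ and $i=d-1$ align with the correct branch of $f$. Since the identity is an arithmetic one, the proof is a short direct verification with no non-trivial idea beyond careful substitution.
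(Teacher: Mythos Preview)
Your proposal is correct and is exactly the natural direct verification; the paper itself does not spell out any argument beyond asserting that the identity is easy to verify, so your case analysis is in fact more detailed than what appears in the paper.
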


Thus, if each vertex of $R$ has at most $f(0)$ many neighbours in $I$ in $\CG$, then eventually $|V_c|$ gets bounded by $5k \cdot f(0)$. We then move to the final step described below after Corollary~\ref{cor:final}.

Now we are in the situation where there is a vertex in $R$ with  at least $f(0)+1$ neighbors in $I$ in $\CG$. Let $v$ be such a vertex in $R$.   Let $U = \LR{u_1, u_2, \ldots, u_{\kappa}} \subseteq I \cap N_{\CG}(v)$ be the set of neighbours of $v$ in $I$, where $\kappa > f(0)$.

\begin{red_rule}\label{rr:nohighdegree}
    If there is no vertex $w \in I $ such that $|U\cap N_G(w)| >  f(1)$  then we return \no. 
\end{red_rule}
\begin{lemma} \label{lem:nohighdegree}
    \rr~\ref{rr:nohighdegree} is safe. 
\end{lemma}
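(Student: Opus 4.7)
The plan is to prove the contrapositive: assume $(G,\Gamma,k,R)$ is a \yes-instance with some minimal annotated solution $S$; I will then exhibit a vertex $w\in I$ satisfying $|U\cap N_G(w)|>f(1)$. So fix such an $S$ and, for every $u_i\in U$, examine how the path witnessing $d_{G+S}(u_i,v)\le 2\cdot d_{\Gamma}(u_i,v)=2$ is realized in $G+S$.

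First I would split $U$ according to whether $u_i\in V_S$. At most $|V_S|\le 2k$ of the $u_i$'s lie in $V_S$. For any $u_i\in U\setminus V_S$, no edge of $S$ is incident to $u_i$, so every $u_i$-path in $G+S$ begins with an edge of $G$. Because $u_i$ and $v$ are in adjacent conflict we have $d_G(u_i,v)>2$, ruling out both the direct edge $(u_i,v)$ and any length-$2$ path lying entirely in $G$; consequently the short path realizing $d_{G+S}(u_i,v)\le 2$ must traverse exactly two edges, each of weight $1$, through some intermediate vertex $w_i$. The first edge $(u_i,w_i)$ lies in $G$ (since $u_i\notin V_S$), giving $u_i\in N_G(w_i)$, while the second edge $(w_i,v)$ must lie in $S$ (otherwise $\langle u_i,w_i,v\rangle$ would already realize $d_G(u_i,v)\le 2$, contradicting the conflict). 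Because the annotated setting forbids solution edges with both endpoints in $R$ and $v\in R$, we further deduce $w_i\in I$.

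The main accounting step is then the following. There are at most $k$ distinct vertices $w\in I$ with $(w,v)\in S$ (one per solution edge incident to $v$), and each such $w$ can serve as the intermediate vertex for at most $|U\cap N_G(w)|$ of the $u_i$'s. Under the rule's hypothesis that $|U\cap N_G(w)|\le f(1)$ for every $w\in I$, the number of $u_i$'s handled in this way is at most $k\cdot f(1)$, and therefore
\[\kappa=|U|\le 2k+k\cdot f(1).\]
On the other hand, $\kappa>f(0)$ by the choice of $v$, and by Proposition~\ref{prop:function} we have $f(0)=(f(1)+k)\cdot k+k=k\cdot f(1)+k^2+k$, which is at least $2k+k\cdot f(1)$ whenever $k\ge 1$ (the case $k=0$ is trivial since $U\ne\emptyset$ while no edges may be added). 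This contradicts the displayed bound, so some $w\in I$ must violate the hypothesis, proving that returning \no is safe.

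The most delicate step I anticipate is the simultaneous use of the conflict assumption (which forces the second edge into $S$) and the annotated restriction on $S$ (which forces $w_i\notin R$); it is this pair of ingredients that locates every relevant ``hub'' inside $I$, which is exactly where the reduction rule looks for it.
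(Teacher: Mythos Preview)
Your proof is correct and follows the same overall strategy as the paper: both argue by contradiction, count how many of the conflicts $(v,u_i)$ can be resolved under the hypothesis $|U\cap N_G(w)|\le f(1)$ for all $w\in I$, and compare against $\kappa>f(0)$ via \cref{prop:function}. The organizational difference is in the case split. The paper splits on which edge of the length-$2$ path $\langle v,w,u_i\rangle$ lies in $S$: either $(v,w)\in S$ (yielding at most $(f(1)+k)\cdot|Z|$ conflicts, the extra $+k$ accounting for the possibility $(w,u_i)\in S$ as well), or $(v,w)\in E(G)$ and $(w,u_i)\in S$ (at most $|S|-|Z|$ conflicts). You instead split on whether $u_i\in V_S$, which lets you dispose of at most $2k$ vertices outright and forces, for the remaining $u_i$, the first edge into $E(G)$ and the second into $S$. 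Your decomposition is slightly cleaner and gives the tighter bound $2k+k\cdot f(1)\le f(0)$, but both arguments hinge on the same two ingredients you highlight at the end: the conflict forces a solution edge onto the path, and the annotation forces the hub $w_i$ into $I$, which is where the reduction rule's hypothesis applies.
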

\begin{proof}
Assume that for every vertex $w \in I $, we have $|U\cap N_G(w)| \leq f(1)$. Then, we will show that we cannot resolve all conflicts involving $v$. Suppose $(G,\Gamma, k, R)$ is a \yes-instance of \dtwoa and $S$  be a minimial solution of size at most $k$. Since each $(v,u_i)$ is an adjacent conflict, we have $d_{G+S}(v,u_i)\leq 2$. This implies ${\sf hop}_{G+S}(v,u_i)\leq 2$. Therefore, the edge of $S$ that resolves the conflict $(v,u_i)$ must be adjacent to $v$ or $u_i$ or both. Fix an edge $(v,w)\in S$. Since $S$ is a solution to the annotated instance of the vertex $w\in I$, the number of paths of hop $2$ in $G+S$ starting at $v$, using $(v,w)$ and ending at a vertex in $U$ is upper-bounded by $|U\cap N_G(w)| + |S| \leq f(1) + k$. Therefore, the total number of conflicts that can be resolved by edges of the form $(v,w)$ is at most $(f(1)+k)|Z|$. Here, $Z\subseteq S$ contains edges of the form $(v,w)$.  Let us now focus on the conflicts that are resolved by edges of the form $(w,u_i)$, $u_i\in U$. This implies that we have paths of the form $\langle v,w,u_i\rangle$ such that $(v,w)\in E(G)$ and $(w,u_i)\in S$. The number of paths of this kind is upper bounded by $|S| - |Z|$. This implies that the total number of conflicts in which $v$ participates, and can be resolved by $S$, is bounded by $(f(1)+k)|Z| + |S| - |Z| \leq f(1)\cdot k+k^2+k= f(0)$. 
\end{proof}

Now, if \cref{rr:nohighdegree} is not applicable, then there must be a vertex $w \in I$ such that $w$ has at least $f(1)$ neighbors in $U$. We use $w$ to identify a small set of edges that must intersect every solution $S$ of size at most $k$.

\begin{lemma} \label{lem:kddset}
    There exists a polynomial-time algorithm to find a set $W_v = \LR{w_1, w_2, \ldots, w_\delta}$ such that (1) $\delta < d$, and (2) If we have a \yes-instance, then any solution $S$ of size at most $k$ satisfies that $S \cap \LR{ (v, w_i) : i \in [\delta] } \neq \emptyset$. 
   \end{lemma}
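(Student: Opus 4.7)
The plan is to construct $W_v$ greedily via an iterative polynomial-time procedure that extends a sequence $w_1, w_2, \ldots$ of vertices sharing a progressively shrinking common neighbourhood inside $U$. I will initialize $U_0 := U$, which satisfies $|U_0| > f(0)$ by hypothesis. At iteration $i \geq 1$, with $w_1, \ldots, w_{i-1} \in I$ already fixed and $U_{i-1} := U \cap \bigcap_{j=1}^{i-1} N_G(w_j)$ of size greater than $f(i-1)$, I search $I \setminus \{w_1, \ldots, w_{i-1}\}$ for a vertex $w$ such that $|U_{i-1} \cap N_G(w)| > f(i)$. If such $w$ exists, I set $w_i := w$ and $U_i := U_{i-1} \cap N_G(w_i)$; otherwise the procedure stops, I set $\delta := i-1$, and output $W_v := \{w_1, \ldots, w_\delta\}$. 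The first iteration will succeed by the inapplicability of Reduction Rule~\ref{rr:nohighdegree} (cf.\ Lemma~\ref{lem:nohighdegree}), so $\delta \ge 1$.

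To establish $\delta < d$, I will argue by contradiction: if $\delta \ge d$, then $w_1, \ldots, w_d$ are $d$ distinct vertices of $I$ and $|U_d| > f(d) = d$. Since $U_d \subseteq N_G(w_i)$ for each $i$ and $w_i \notin N_G(w_i)$, the sets $\{w_1, \ldots, w_d\}$ and $U_d$ are disjoint, so any $d$ vertices chosen from $U_d$ together with $\{w_1, \ldots, w_d\}$ span a $K_{d,d}$ subgraph of $G$---contradicting the $\kdd$-freeness of $G$.

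For property (2), I will again assume for contradiction that some solution $S$ with $|S| \leq k$ avoids every edge $(v, w_i)$. Each $u \in U_\delta$ is in adjacent conflict with $v$, so $G+S$ must contain a path $\langle v, x_u, u \rangle$ of weighted length at most $2$; since every edge weight is at least $1$, both edges of this path have weight exactly $1$ and lie in $E(G) \cup S$. I partition $U_\delta$ into $U_\delta^A = \{u : (v, x_u) \in S\}$ and $U_\delta^B = \{u : (v, x_u) \in E(G) \setminus S\}$. In Case A, the annotation forces $x_u \in I$ (since $v \in R$), and the hypothesis gives $x_u \notin \{w_1, \ldots, w_\delta\}$; the termination criterion then yields $|U_\delta \cap N_G(x_u)| \leq f(\delta+1)$, so the number of $u$'s mapping to a fixed $x_u$ is at most $|U_\delta \cap (N_G(x_u) \cup N_S(x_u))| \leq f(\delta+1) + k$. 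Summing over the at most $k$ distinct values of $x_u$ yields $|U_\delta^A| \leq k(f(\delta+1) + k)$. In Case B, every vertex of $U$ is non-adjacent to $v$ in $G$ (otherwise the direct edge $(v,u)$ of weight $1$ would preclude a conflict), so $x_u \notin U$; consequently the map $u \mapsto \{x_u, u\} \in S$ is an injection from $U_\delta^B$ into $S$, giving $|U_\delta^B| \leq k$. Combining these and applying Proposition~\ref{prop:function}, I obtain $|U_\delta| \leq k(f(\delta+1)+k) + k = f(\delta)$, contradicting $|U_\delta| > f(\delta)$.

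The main obstacle will be the bookkeeping in the Case A / Case B counting: one must carefully verify that the termination criterion applies to the middle vertex $x_u$ in Case A (which requires $x_u \in I \setminus \{w_1,\ldots,w_\delta\}$, a consequence of the annotation plus the hypothesis $(v,w_i) \notin S$), and that no vertex of $U$ can play the role of $x_u$ in Case B (so that the injection into $S$ is valid). The recurrence $f(i-1) = (f(i)+k)k + k$ of Proposition~\ref{prop:function} is calibrated precisely so that the Case A bound $k(f(\delta+1)+k)$ plus the Case B bound $k$ sums to exactly $f(\delta)$.
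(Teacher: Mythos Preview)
Your argument is correct and follows essentially the same route as the paper: build $W_v$ greedily by repeatedly picking a vertex of $I$ with large $G$-degree into the current common neighbourhood $U_{i-1}$, bound $\delta<d$ via $\kdd$-freeness, and establish property~(2) by a counting argument that exploits the termination criterion together with the recurrence of Proposition~\ref{prop:function}. One small omission: your partition $U_\delta=U_\delta^A\cup U_\delta^B$ presumes every $u\in U_\delta$ is reached from $v$ by a hop-$2$ path, but some $u$ may be resolved directly by $(v,u)\in S$; since such edges are disjoint from the edges $(x_u,u)$ used in Case~B (neither endpoint of the latter is $v$), these hop-$1$ vertices combine with $U_\delta^B$ into a single injection into $S$ of total size at most $k$, and the bound $|U_\delta|\le k(f(\delta{+}1)+k)+k=f(\delta)$ survives unchanged.
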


\begin{proof}
By our definition of $\CG$, the vertex $v$ is in adjacent conflict with each vertex in $U$. 
First, by \rr~\ref{rr:nohighdegree} and Lemma~\ref{lem:nohighdegree}, we know that for a \yes-instance there exists a vertex $w \in I$ such that $|U \cap N_G(w)| > d \cdot k^{d}$. We let this vertex to be $w_1$. Now either $(v, w_1) \in S$, and then we are done. Else we are in the case when $(v, w_1) \not\in S$. Let us define $W_1 \coloneqq \LR{w_1}$.

\begin{figure}[H]
            \centering
            \includegraphics[width=0.6\textwidth]{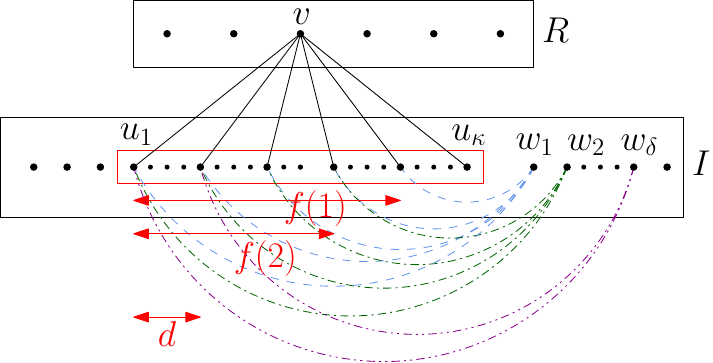}
             \caption{Illustration of arguments used in Lemma~\ref{lem:kddset}. }
\end{figure} \label{fig:kdd}

Suppose we have inductively found a set $W_{i-1} = \LR{w_1, \ldots, w_{i-1}}$ for some $i \ge 2$. Further, inductively assume that $U_{i-1} \coloneqq \bigcap_{j \in [i-1]} U \cap N_G(w_j)$ is such that $|U_{i-1}| > f(i-1)$. See Figure~\ref{fig:kdd} for an illustration. We prove the following crucial claim which lets us obtain the next vertex in the sequence.

\begin{claim} \label{cl:nextvertex}
    Suppose $(G,\Gamma, k, R)$ is a \yes-instance of \dtwoa and $S$  be a solution of size at most $k$. If $S \cap \LR{ (v, w_j) : j \in [i-1] }= \emptyset$. Then there exists a vertex $w_i \in I \setminus W_{i-1}$ such that
    $w_i$ has more than $f(i)$ neighbours in $G$, from $U_{i-1}$.
\end{claim} 
\begin{proof}
    Assume that for every vertex $w \in I \setminus W_{i-1} $, we have $|U_{i-1}\cap N_G(w)| \leq f(i)$. Then, we will show that we cannot resolve all conflicts involving $v$ with the vertices from $U_{i-1}$. Suppose $(G, \Gamma, k, R)$ is a \yes-instance of \dtwoa and $S$  is a solution of size at most $k$ such that $S \cap \LR{ (v, w_j) : j \in [i-1] }= \emptyset$. Now we have that for each $u \in U_{i-1}$, $(v,u)$ is an adjacent conflict but $d_{G+S}(v,u)\leq 2$. This implies ${\sf hop}_{G+S}(v,u)\leq 2$. Therefore, the edge of $S$ that resolves the conflict $(v,u)$ must be adjacent to $v$ or $u$ or both. Fix an edge $(v,w)\in S$. Since $S$ is a solution to the annotated instance of the vertex $w\in I$, and $w \notin W_{i-1}$ (by assumption),  the number of paths of hop $2$ in $G+S$ starting at $v$, using $(v,w)$ and ending at a vertex in $U_{i-1}$ is upper-bounded by $|U_{i-1}\cap N_G(w)| + |S| \le f(i) + k$. Therefore, the total number of conflicts $(v,u)$ where $u \in U_{i-1}$ that can be resolved by edges of the form $(v,w)$ where $w \notin W_{i-1}$ is at most $(f(i) + k) \cdot |Z| \le (f(i) + k) \cdot k$. Here, $Z\subseteq S$ contains edges of the form $(v,w)$.  Let us now focus on the conflicts that are resolved by edges of the form $(w,u)$, $u\in U_{i-1}$. This implies that we have paths of the form $\langle v,w,u\rangle$ such that $(v,w)\in E(G)$ and $(w,u)\in S$. The number of paths of this kind is upper bounded by $|S| - |Z| \le k$. This implies that the total number of conflicts $(v,u)$ where $u \in U_{i-1}$, and can be resolved by $S$, is bounded by $(f(i) + k) \cdot k + k = f(i-1)$, by Proposition~\ref{prop:function}. Recall that, by induction, $|U_{i-1}| > f(i-1)$. This contradicts that $S$ is a solution, and the claim follows. 
\end{proof}

Thus, if we find a vertex $w_i$ satisfying the requirements as mentioned in Claim~\ref{cl:nextvertex}, we define $W_i = W_{i-1} \cup \LR{w_i}$ and proceed to the next iteration. Otherwise, if we cannot find $w_i$, then we stop and output the current set $W_{i-1}$ as the set $W_v$ as required by the lemma. In this case, by Claim~\ref{cl:nextvertex}, we know that, if we have a \yes-instance, then $S \cap \LR{ (v, w_j) : j \in [i-1] } \neq \emptyset$. 
  
Suppose the iterative procedure goes on for $\delta$ iterations, i.e., $W_v = W_{\delta} = \LR{w_1, w_2, \ldots, w_{\delta}}$. Observe that if $\delta\geq d$ then the vertex set $\LR{w_1, w_2, \ldots, w_d}$ and  $\bigcap_{j\in[\delta]} ( U\cap N_G(w_j) )$ induce a $\kdd$ in $G$    which contradicts the fact that  $G$ is $\mathcal{K}_{d,d}$-free. Thus, $\delta < d$, and (1) holds. 
\end{proof}

We use the algorithm in Lemma~\ref{lem:kddset} to find the set $W_v$ of size $\delta < d$ as claimed. We know that, if we have a \yes-instance, then any minimal solution $S$ contains at least one edge of the form $(v, w_i)$, $i \in [\delta]$. The following reduction rule essentially tries to guess the set $W' \subseteq W_v$ for which such edges belong to the solution, and add those vertices to the vertex cover $R$. Further, due to our invariant, when we add $W'$ to $R$, we also need to guess the edges between $W'$ and the vertices already in $R$. The following reduction rule is a Turing reduction from the annotated version of the problem to itself -- we prove later that it is a valid decreasing Turing \fpt reduction step.

\begin{red_rule}\label{rr:highdeg}
    Let $v\in R$ be any vertex such that $deg_{\CG}(v)\geq f(0)+1$.  For every non-empty subset $W'\subseteq W_v$, and for any subset 
    $$\overline{E}_{x} \subseteq \Big\{ (p, q) \in \overline{E}(G): p \in W', q \in W' \cup (R \setminus\LR{v}) \Big\}$$ 
    of non-edges  in $G$, we create the following instances. $(G_j,\Gamma,k_j,R')$ where $G_j=G+(\{(v,w_a)|w_a\in W'\} \cup \overline{E}_{x})$, $k_j=k-|W'|-|\overline{E}_{x}|$ and $R'=R\cup W'$. 
    \end{red_rule}


\begin{lemma}\label{lem:redrule4}
\cref{rr:highdeg} is a valid decreasing \fpt-Turing -reduction.
\end{lemma}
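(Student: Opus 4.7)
The plan is to verify three things for Rule~\ref{rr:highdeg}: the yes-instance equivalence between $(G,\Gamma,k,R)$ and the collection of produced branches, the strict decrease of the budget (together with the invariant $|R|\le 5k$), and the fact that the number of branches is bounded by a function of $k$ and $d$.

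For equivalence, the forward direction rests on Lemma~\ref{lem:kddset}. Given a minimal solution $S$ of size at most $k$ to $(G,\Gamma,k,R)$, the lemma guarantees $S\cap\{(v,w_i):i\in[\delta]\}\neq\emptyset$. I would then set
$W' \coloneqq \{w_i\in W_v:(v,w_i)\in S\}$, which is non-empty, and define $\overline{E}_x$ to be those edges $(p,q)\in S$ with $p\in W'$ and $q\in W'\cup(R\setminus\{v\})$. The candidate residual solution is $S'\coloneqq S\setminus\bigl(\{(v,w_a):w_a\in W'\}\cup\overline{E}_x\bigr)$, and my claim is that $S'$ is a valid solution to the branch $(G_j,\Gamma,k_j,R')$ with $G_j = G+\bigl(\{(v,w_a):w_a\in W'\}\cup\overline{E}_x\bigr)$, $k_j=k-|W'|-|\overline{E}_x|$, and $R'=R\cup W'$. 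Conflict-freeness is immediate since $G_j+S'=G+S$. The annotation constraint---that every edge of $S'$ has at most one endpoint in $R'$---follows by case analysis on the endpoints of an edge $e\in S'$: if $e$ has an endpoint in $W'$, then its other endpoint must lie outside $W'\cup(R\setminus\{v\})$ (otherwise $e\in\overline{E}_x$) and must differ from $v$ (otherwise $e\in\{(v,w_a):w_a\in W'\}$), so the other endpoint lies outside $R'$. The reverse direction is direct: any solution $S'$ to a branch, combined with the pre-added edges, gives a solution to the original instance of size exactly $k_j+|W'|+|\overline{E}_x|=k$, still satisfying the original annotation since each pre-added edge has at most one endpoint in the original $R$.

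For the decreasing property, since $W'$ is required to be non-empty we obtain $k_j=k-|W'|-|\overline{E}_x|\le k-1$. The invariant $|R|\le 5k$ is maintained by a simple potential argument: the quantity $|R|+k$ is non-increasing, because $|R'|+k_j = (|R|+|W'|)+(k-|W'|-|\overline{E}_x|)=|R|+k-|\overline{E}_x|\le |R|+k$. Since initially $|R|\le 4k_0$ where $k_0$ is the initial budget, we have $|R|+k\le 5k_0$ throughout the recursion, and as $k\ge 0$ this gives $|R|\le 5k_0$ at every branch. Finally, I would count branches: at most $2^{|W_v|}\le 2^{d-1}$ choices of $W'$, and for each, the candidate set of non-edges has size at most $|W'|\cdot|W'\cup(R\setminus\{v\})|\le (d-1)(d-1+5k)$, so at most $2^{(d-1)(d-1+5k)}$ choices of $\overline{E}_x$, yielding a total branching factor that depends only on $k$ and $d$.

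The main obstacle I expect is the bookkeeping in the forward direction: one has to be careful that \emph{every} edge of $S$ touching $W'$ is accounted for either by the canonical set $\{(v,w_a):w_a\in W'\}$ or by $\overline{E}_x$, so that the residual $S'$ respects the enlarged vertex cover $R'$. This is a case analysis on where each endpoint of a solution edge lives relative to $\{v\}$, $R\setminus\{v\}$, $W'$, and $I\setminus W'$, but once cleanly laid out the verification is routine.
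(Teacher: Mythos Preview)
Your proposal is correct and follows essentially the same approach as the paper: both extract from a minimal solution $S$ the set $W'=\{w_i\in W_v:(v,w_i)\in S\}$ and the set $\overline{E}_x$ of solution edges between $W'$ and $W'\cup(R\setminus\{v\})$, then verify that the residual solution works in the corresponding branch; both note that $W'\neq\emptyset$ forces $k_j\le k-1$, and both bound the branching factor by a function of $k$ and $d$ using $|W_v|<d$ and $|R|\le 5k$. Your write-up is in fact more careful than the paper's in explicitly verifying the annotation constraint for the residual solution (the paper's proof is silent on this), and your potential argument for the invariant $|R'|+k_j\le|R|+k$ is what the paper records separately as Observation~\ref{obs:r-decrease}.
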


\begin{proof}
First, note that $|W_v| \le d$, which implies that the number of subsets $W' \subseteq W$ is at most $2^d$. Then, for each fixed subset $W'$ of size $\delta$, there are at most $2^{\order{k \log d}}$ many subset $\overline{E}_{x}$. So in total  the number of instances we created is bounded by $2^{\order{d+ k \log d}}$.  Now we show the safeness of the \cref{rr:highdeg}.

\begin{sloppypar}
\begin{claim}
    $(G,\Gamma, k,R)$ is a \yes-instance if and only if one of the instances in $\{(G_{j}, \Gamma, k_{j},R')\}$ is a \yes-instance.
\end{claim}
\end{sloppypar}

\begin{proof}
    The backward direction is trivial. In the forward direction, assume that $(G,\Gamma, k,R)$ is a \yes-instance and $S$ is a minimal solution. Due to the  \cref{lem:kddset} the solution $S$ must satisfy the condition that $S \cap \LR{ (v, w_i) : i \in [\delta] } \neq \emptyset$. Let $S' \coloneqq S \cap \LR{ (v, w_i) : i \in [\delta] }$, $W'' \coloneqq V_{S'} \setminus \{v\}$, and $S'' \coloneqq S \cap \{(w,r)~:~w \in W'', r \in R \cup W''\}$. Clearly $W'' \subseteq W_v$ and  $S'' \subseteq \LR{ (p, q) \in \overline{E}(G): \text{either}~p, q \in W' \text{ or } p \in W', q \in R \setminus\LR{v} }$. Now considering $S''= \overline{E}_{x}, G_j= G+ S'+S'', k_j= k-|S'|-|S''|,$ and $R'= R \cup W''$, we have that $(G_{j}, \Gamma, k_{j},R')$ is a \yes-instance.
\end{proof}
As $W'$ is a non-empty subset of $W_v$ so $k_j \leq k-|W'| \leq k-1$. So in each created instance, the value of the parameter decreases by at least one. Hence the lemma follows.    
\end{proof}

Next we formalize an observation that follows from the definition of  \fpt-Turing -reduction.
\begin{observation} \label{obs:r-decrease}
    Starting from an annotated instance $(G,\Gamma, k, R)$, let $(G_j, \Gamma, k_j, R_j)$ be one of the instances produced by \rr~\ref{rr:highdeg}. Then, $|R_j| \le |R| + (k - k_j)$. In particular, this implies that the size of $|R'|$ in any of the instances produced by a sequence of decreasing Turing \fpt reductions, remains bounded by $5k$.
\end{observation}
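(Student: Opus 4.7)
The plan is to verify the bound by directly reading off the quantities set in Reduction Rule~\ref{rr:highdeg}, and then iterate.

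First, I will unpack the definitions in the single-step reduction. Recall that Reduction Rule~\ref{rr:highdeg} sets $R_j = R \cup W'$ with $W' \subseteq W_v$ non-empty, and $k_j = k - |W'| - |\overline{E}_x|$ for some subset $\overline{E}_x$ of non-edges. Hence $k - k_j = |W'| + |\overline{E}_x|$. Since $|R_j \setminus R| \le |W'|$ (because $R_j = R \cup W'$), we get
\[
|R_j| - |R| \;\le\; |W'| \;\le\; |W'| + |\overline{E}_x| \;=\; k - k_j,
\]
which is exactly the first claim.

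For the ``in particular'' part, I would argue by induction along the sequence of Turing reductions that leads to a given annotated instance. The starting annotated instance has $|R| \le 4k$: indeed, Reduction Rule~\ref{rr:lowvc} ensures that a maximum matching of $\CG$ has size at most $2k$, so the set of endpoints of that matching is a vertex cover of $\CG$ of size at most $4k$, and this is the initial $R$. Suppose after some number of decreasing Turing \fpt reductions we reach $(G',\Gamma,k',R')$; applying the single-step inequality iteratively (telescoping) along the chain of reductions gives
\[
|R'| \;\le\; 4k + (k - k') \;\le\; 4k + k \;=\; 5k,
\]
where the last inequality uses $k' \ge 0$ (instances with $k' < 0$ are discarded as trivial \no-instances, since the solution size cannot be negative). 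This proves the bound of $5k$ is maintained throughout.

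The only subtlety worth being careful about is making sure the telescoping is correct: in moving from an intermediate instance with parameters $(R_{i}, k_{i})$ to the next $(R_{i+1}, k_{i+1})$, the first part of the observation gives $|R_{i+1}| - |R_i| \le k_i - k_{i+1}$, and summing these inequalities along the whole chain cancels all intermediate $k_i$'s, leaving $|R'| - 4k \le k - k'$. Apart from this bookkeeping, the argument is essentially immediate from the construction of the reduction rule, so I do not expect any obstacles.
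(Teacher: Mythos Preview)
Your proof is correct and is precisely the natural unpacking the paper has in mind; the paper states the observation without proof, treating it as immediate from the definitions in Reduction Rule~\ref{rr:highdeg}, and your telescoping argument is exactly that.
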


We keep applying \rr~\ref{rr:highdeg} until each vertex in $R$ has at most $f(0)$ neighbors in $I$ in the graph $\CG$. In each step, we maintain the invariant that we have included all possible solution edges within $R$ in $G$ and reduced $k$ appropriately. Observe that when \rr~\ref{rr:highdeg} is not applicable, the total number of vertices in the conflict is at most $\order{k \cdot f(0)}$.

\begin{mdframed}[backgroundcolor=cyan!5,topline=false,bottomline=false,leftline=false,rightline=false] 
Let  $(G,\Gamma, k,R)$ be an annotated instance of \dtwoa.  Find a dilation $2$-augmentation set $S$ such that $|\{x,y\}\cap R|\leq 1$. That is, every edge added must contain at least one end-point that does not belong to $R$. In addition, $|R|\leq 5k$ and $|V_c|=\order{k \cdot f(0)}$, where $f(0)= d \cdot k^{d} +  k^{d+1} + \lr{2 \cdot \sum_{j = 2}^{d} k^j} + k$.
\end{mdframed}

\subsection{Solving Annotated Instances with bounded $V_c$} \label{sec:solveinstance}

In the rest of the section, we describe how to solve an annotated instance $\mathcal{I} = (G,\Gamma, k, R)$ when the size of $V_c$ is bounded by some function of $k$ and $d$. Note that although $|V_c| \le h(k, d)$, we cannot completely forget about the vertices outside $V_c$ -- since some conflicts may be resolved by using edges incident to vertices outside $V_c$. However, we will argue that we do not need to keep all such vertices, but it suffices to keep one representative from each ``equivalence class'', which we formalize below. Note that in this last step, we do not need the vertex cover $R$.

Let $O = V(G) \setminus V_c$ denote the vertices \emph{outside} $V_c$.  
For each $A, B \subseteq V_c$ with $A \cap B = \emptyset$, let $O(A, B)$ denote the set of vertices $v \in O$ satisfying the following two properties:
\begin{enumerate}
    \item The set of vertices $u \in V_c$ such that $d_G(u, v) = 1$ is \emph{exactly} equal to $A$, and
    \item The set of vertices $w \in V_c$ such that $(v, w) \not\in E(G)$, but $d_{\Gamma}(v, w) = 1$ is \emph{exactly} equal to $B$.
\end{enumerate}
We have the the following observation.

\begin{observation} \label{obs:equiv}
    \begin{enumerate}
        \item $\mathcal{P} = \LR{O(A, B): A, B \subseteq V_c, A \cap B = \emptyset }$ forms a partition of $O$.
        \item $|\mathcal{P}|$ is bounded by $3^{|V_c|} \le g(k, d)$ for some computable function $g$.
    \end{enumerate}
\end{observation}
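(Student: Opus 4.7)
The plan is to verify both parts of the observation by carefully unpacking the definition of $O(A,B)$. Both parts turn out to be essentially definitional, so the proof will be short, but I want to be explicit about why the classes $O(A,B)$ are disjoint and cover $O$, and why the count $3^{|V_c|}$ is tight for the number of valid pairs $(A,B)$.

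For part (1), the strategy is to exhibit, for each $v \in O$, a canonical pair $(A_v, B_v)$ such that $v \in O(A_v, B_v)$ and $v$ lies in no other class. Concretely, given $v \in O$, define
\[
A_v \coloneqq \{u \in V_c : d_G(u,v) = 1\}, \qquad B_v \coloneqq \{w \in V_c : (v,w)\notin E(G),\ d_\Gamma(v,w) = 1\}.
\]
Then $A_v \cap B_v = \emptyset$ by construction (a vertex of $A_v$ is a $G$-neighbour of $v$ while a vertex of $B_v$ is not), so $(A_v, B_v)$ is a valid pair, and by definition $v \in O(A_v, B_v)$. Conversely, if $v \in O(A,B)$ for some valid pair, then the two defining conditions of $O(A,B)$ force $A = A_v$ and $B = B_v$. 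Hence each $v \in O$ lies in exactly one class, which shows $\mathcal{P}$ is a partition of $O$.

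For part (2), the count is immediate: any valid pair $(A,B)$ is determined by a function $V_c \to \{\text{``in }A\text{''}, \text{``in }B\text{''}, \text{``in neither''}\}$, so the number of valid pairs, and hence $|\mathcal{P}|$, is at most $3^{|V_c|}$. Combining this with the bound $|V_c| = \mathcal{O}(k \cdot f(0))$ established at the end of Section~\ref{sec:boundvc}, where $f(0)$ is a function of $k$ and $d$ only, we obtain $|\mathcal{P}| \le 3^{\mathcal{O}(k\cdot f(0))} \le g(k,d)$ for a suitable computable function $g$, as claimed.

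There is no real obstacle here: part (1) is a two-line verification, and part (2) is a textbook counting argument plus a substitution of the bound on $|V_c|$. The only thing worth flagging is that the two defining conditions of $O(A,B)$ are required to hold with equality (``exactly equal to''), which is what guarantees the uniqueness needed for the partition property; if the conditions were phrased as inclusions, disjointness would fail. This subtlety is the one point I would state explicitly in the formal write-up.
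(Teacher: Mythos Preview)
Your proposal is correct and is precisely the natural argument the paper has in mind; the paper states this observation without proof, and your verification (canonical pair $(A_v,B_v)$ for part (1), three-way choice per element of $V_c$ for part (2)) is the intended justification.
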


For each $O(A, B) \in \mathcal{P}$ such that $O(A, B) \neq \emptyset$, we mark an arbitrary vertex $v(A, B) \in O(A, B)$. Let $U \subseteq O$ denote the set of unmarked vertices. In the following reduction rule, we eliminate all the vertices of $U$ from $G$ and $\Gamma$, and then prove that it is correct.

\begin{red_rule} \label{rr:twins}
    Given the annotated instance $\mathcal{I}_1 = (G,\Gamma, k, R)$, produce the annotated instance $\mathcal{I}_2 = (G', \Gamma', k, R)$, where $\Gamma' = \Gamma - U$ and $G' = G - U$.
\end{red_rule}

\begin{lemma} \label{lem:twinrulesound}
    \rr~\ref{rr:twins} is safe.
\end{lemma}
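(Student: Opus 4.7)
My plan is to prove the equivalence ``$\mathcal{I}_1$ is a \yes-instance if and only if $\mathcal{I}_2$ is a \yes-instance'' by establishing the two directions separately. The key structural fact I will exploit is that every unmarked vertex $x \in U \cap O(A,B)$ is interchangeable with its representative $v(A,B) \in M$ with respect to $V_c$-adjacencies in both $G$ and $\Gamma$; any short path through $x$ between $V_c$-vertices can be rerouted through $v(A,B)$.

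For the reverse direction I would set $S = S'$. By \cref{lem:adjacent} it suffices to check adjacent-conflict-freeness of $G+S$. Every adjacent conflict in $\mathcal{I}_1$ has both endpoints in $V_c \subseteq V(G')$, so the corresponding pair is also adjacent in $\Gamma'$, and since $G'+S'$ is a subgraph of $G+S'$, we inherit $d_{G+S'}(u,v) \leq d_{G'+S'}(u,v) \leq 2$ from the hypothesis on $\mathcal{I}_2$. Any adjacent pair of $\Gamma$ with at least one endpoint in $U \subseteq O$ involves a conflict-free vertex, so it is already at distance at most $2$ in $G$ alone. The annotation constraint transfers since $R \subseteq V_c \subseteq V(G')$.

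For the forward direction I would take a minimal solution $S$ to $\mathcal{I}_1$ and define $S'$ by replacing every endpoint $x \in U$ of each edge in $S$ with $v(A,B)$, where $x \in O(A,B)$. Minimality of $S$ rules out edges with both endpoints in $U$: any path between two $V_c$-vertices that uses such an edge must have length at least $3$, so such an edge cannot contribute to resolving any adjacent conflict, contradicting minimality. Consequently $|S'| \leq |S| \leq k$, every edge of $S'$ has both endpoints in $V(G')$, and the annotation is preserved because $R \subseteq V_c$. To verify that $G'+S'$ is adjacent-conflict-free, I would consider any $(u,v) \in E(\Gamma')$ and trace a path of length at most $2$ from $u$ to $v$ in $G+S$ (which exists because $G+S$ is conflict-free): single-edge paths transfer to $G'+S'$ directly, while a two-edge path $\langle u, x, v \rangle$ with $x \in V(G')$ transfers after converting any $S$-edges to their $S'$-images. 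The substantive subcase is $x \in U \cap O(A,B)$, where I would argue that the replacement path $\langle u, v(A,B), v \rangle$ lies in $G'+S'$ by matching each original edge incident to $x$ with an edge incident to $v(A,B)$ in $G'$ or $S'$.

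The main obstacle will be handling the subcase $x \in U$ when at least one of $u,v$ is itself a representative in $M$ rather than in $V_c$, because the definition of $O(A,B)$ constrains only $V_c$-adjacencies and does not determine whether $v(A,B)$ is $G$-adjacent to other vertices of $O$. I plan to handle this through a fine-grained case analysis on each original edge incident to $x$: if the edge lies in $S$, then its $S'$-image is by construction an edge of $S'$ incident to $v(A,B)$; if it lies in $E(G)$ with a $V_c$-endpoint, the defining property of $A$ yields the corresponding edge in $E(G')$; and the residual case of a $G$-edge whose other endpoint is in $M$ I would resolve by invoking the conflict-freeness of that $M$-vertex in $\mathcal{I}_1$, which yields an alternative length-at-most-$2$ path from it to $v$ (or $u$) in $G$ whose intermediate vertex can be chosen inside $V(G')$.
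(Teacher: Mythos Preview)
Your overall strategy matches the paper's: start from a minimal solution $S$, replace every endpoint lying in $U$ by its class representative $v(A,B)$, and verify that the resulting set still works. The reverse direction is argued the same way in both.

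The divergence is in how much you extract from minimality. You conclude only that no $S$-edge has both endpoints in $U$, and this is what forces you into the ``main obstacle'' case where the intermediate vertex $x$ lies in $U$ but one of $u,v$ lies in $M$. The paper instead proves the stronger fact that for every $u\in U\cap V_S$, \emph{all} $S$-neighbours of $u$ lie in $V_c$---equivalently, no $S$-edge has both endpoints in $O=V\setminus V_c$. Your own argument already gives this: an $S$-edge with both endpoints outside $V_c$ cannot sit on any hop-$2$ path joining two $V_c$-vertices, so minimality eliminates it; you simply recorded the weaker conclusion. With the stronger version the paper restricts attention to rerouting paths $\langle x,u,y\rangle$ with $x,y\in V_c$: then a $G$-edge $(x,u)$ forces $x\in A$ and an $S$-edge $(u,y)$ forces $y\in B$, so the matching edge at $v(A,B)$ is handed to you directly by the definition of $O(A,B)$, and no residual case survives.

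Your proposed resolution of the residual case---``invoke conflict-freeness of the $M$-vertex to obtain an alternative length-$\le 2$ path in $G$ whose intermediate vertex can be chosen inside $V(G')$''---is not justified: nothing prevents every such alternative path from again passing through $U$. As written this is a genuine gap; the fix is just to adopt the stronger minimality claim above, after which the case you flag as the main obstacle does not arise. One further small point: the inclusion $R\subseteq V_c$ that you invoke for the annotation constraint is not guaranteed once Reduction Rule~\ref{rr:highdeg} has added vertices of $W_v$ to $R$, since those vertices are selected for their $G$-degree into the conflict set rather than for being in conflict themselves; the paper's argument does not rely on this inclusion.
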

\begin{proof}
    We argue that $\mathcal{I}_1$ is a \yes-instance if and only if $\mathcal{I}_2$ is a \yes-instance. The reverse direction is trivial, since any solution for $\mathcal{I}_s$ is also a solution for $\mathcal{I}_1$ (recall that the vertices of $U$ are not involved in any conflict). So we show the forward direction.

    Let $S \subseteq \overline{E}(G)$ be a minimal solution of size $k$ with the set of end-points being $V_S$. From this solution, we define another solution $S'$ as follows. Consider any $u \in U \cap V_S$, and let $S(u) = \LR{ (u, w) \in S}$, and let $N(u) = \LR{w : (u, w) \in S(u) }$. First, note that $N(u) \subseteq V_c$ -- suppose not, i.e., there exists some $w \in N(u)$ but $w \not\in V_c$, then $(u, w)$ cannot be part of a path of hop length $2$ between two vertices in $V_c$, which contradicts the minimality of $S$. Now, suppose $u$ belongs to the set $O(A, B) \in \cO$, then $O(A, B) \neq \emptyset$, which implies that some $v(A, B) \in V(G) \setminus U = V(\Gamma) \setminus U$. We define another set $S' \coloneqq S \setminus S(u) \cup \LR{ (v(A, B), w : w \in S(u) }$. Note that $|S| = |S'|$. We prove that $S'$ is also a solution.

    Consider any path $\langle x, u, y \rangle$ in $G+S$, such that $d_{\Gamma}(x, u) = d_{\Gamma}(u, y) = 1$. There are three cases: (i) $(x, u) \in E(G)$ and $(u, y) \in S$. In this case, note that $(x, v(A, B)) \in E(G')$ with $d_{G'}(x, v(A, B)) = d_G(x, v(A, B)) = 1$, and $d_{\Gamma}(u, y) = d_{\Gamma}(v(A, B), y) = 1$. Further, $(v(A, B), y) \in S'$. (ii) $(x, u) \in S$ and $(u, y) \in E(G)$ is symmetric. (iii) $(x, y), (y, u) \in S$. In this case, $d_{\Gamma}(v(A, B), x) = d_{\Gamma}(u, x) = 1$, and $d_{\Gamma}(v(A, B), y) = d_{\Gamma}(u, y) = 1$. Thus, in all three cases, $\langle x, v(A, B), y\rangle$ path exists in $G+S'$, or in other words, the replaced edges incident to $v(A, B)$ can resolve the same set of conflicts as the edges of $S(u)$. By iterating over the vertices of $V_S \cap U$ and  modifying the solution in this way, we obtain a solution $S''$ containing entirely the edges within $E(\Gamma')$. This completes the forward direction.
\end{proof}

\begin{corollary} \label{cor:final}
    The number of vertices in the reduced annotated instance $(G', \Gamma', k, R)$ is bounded by some $g(k, d)$.
\end{corollary}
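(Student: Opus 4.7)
The plan is to simply piece together the bounds we have already established in the preceding subsections. After the exhaustive application of \rr~\ref{rr:highdeg}, the paragraph preceding \cref{sec:solveinstance} records that $|V_c| = \order{k \cdot f(0)}$, where $f(0)$ is an explicit polynomial in $k$ (of degree $d+1$) multiplied by factors depending only on $d$. In particular, there exists a computable function $h(k,d)$ such that $|V_c| \le h(k,d)$.

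Next, I would use the structure of the partition $\mathcal{P}$ of $O = V(G) \setminus V_c$ given in \cref{obs:equiv}. By item (2) of that observation, $|\mathcal{P}| \le 3^{|V_c|} \le 3^{h(k,d)}$. The construction preceding \rr~\ref{rr:twins} marks exactly one representative $v(A,B)$ from each non-empty part $O(A,B) \in \mathcal{P}$, and the rule deletes every unmarked vertex of $O$. Consequently, after applying \rr~\ref{rr:twins}, the vertex set of $G'$ (which equals the vertex set of $\Gamma'$) is contained in
\[
V_c \;\cup\; \{ v(A,B) : O(A,B) \in \mathcal{P},\ O(A,B) \neq \emptyset \}.
\]
Its size is therefore at most $|V_c| + |\mathcal{P}| \le h(k,d) + 3^{h(k,d)}$.

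Setting $g(k,d) \coloneqq h(k,d) + 3^{h(k,d)}$ yields the claimed bound. The only non-routine step is to recall that safety of \rr~\ref{rr:twins} (\cref{lem:twinrulesound}) guarantees that passing from $\mathcal{I}_1$ to $\mathcal{I}_2$ preserves the answer, so the reduced instance is indeed equivalent and of the desired size. No separate obstacle arises here beyond bookkeeping: the heavy lifting (bounding $|V_c|$ after \rr~\ref{rr:highdeg} and proving correctness of the twin-elimination rule) has already been done in \cref{sec:boundvc} and \cref{lem:twinrulesound}.
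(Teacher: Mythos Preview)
Your proposal is correct and matches the paper's implicit reasoning: the corollary is stated without proof because it follows immediately by combining the bound $|V_c|\le h(k,d)$ established at the end of \cref{sec:boundvc} with item (2) of \cref{obs:equiv} and the fact that \rr~\ref{rr:twins} keeps only one marked vertex per non-empty class of $\mathcal{P}$. Your explicit bound $g(k,d)=h(k,d)+3^{h(k,d)}$ is exactly what the paper uses implicitly when it writes $g(k,d)=2^{\order{h(k,d)}}$ in the line following the corollary.
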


Thus, there are at most $\binom{g(k, d)}{2k}= \binom{2^{\mathcal{O}(h(k, d))}}{2k} = 2^{\order{k \cdot h(k, d)}}$ possibilities for the end-points of the solution edges. For each such subset, we can guess the actual set of at most $k$ edges in time $k^{\order{k}}$. Thus, the total number of possibilities is bounded by $f(k, d)$ for some computable function $f$. We finish the discussion with the following theorem.

\begin{theorem} \label{thm:kddfree}
    {\sc Dilation $2$-Augmentation} can be solved in time $f(k, d) \cdot n^{\order{1}}$ when $G$ is a $\kdd$-free graph for any  $d \in \mathbb{N}$. 
\end{theorem}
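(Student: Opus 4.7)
The plan is to assemble the machinery developed in Sections~\ref{sec:cgvc}--\ref{sec:solveinstance} into a single three-phase FPT algorithm and argue that the running-time accounting composes to the claimed bound. First I would construct the conflict graph $\CG$, compute a maximum matching $M$ in polynomial time, and either trigger \rr~\ref{rr:lowvc} to output \no, or extract a vertex cover $R$ of $\CG$ of size at most $4k$ as the set of endpoints of $M$. I would then enumerate every subset $E_j \subseteq \overline{E}(G[R])$ of size at most $k$ and produce the annotated instance $(G+E_j, \Gamma, k-|E_j|, R)$; since $|R|\le 4k$, this yields at most $2^{O(k \log k)}$ annotated instances, and by the lemma preceding Section~\ref{sec:boundvc} the original instance is \yes iff at least one of them is. From this point every solution edge has at least one endpoint outside $R$.

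On each annotated instance I would exhaustively apply the decreasing Turing \fpt reduction \rr~\ref{rr:highdeg}: while some $v\in R$ has more than $f(0)$ neighbors in $\CG$ lying in $I = V\setminus R$, I invoke Lemma~\ref{lem:kddset} to compute the set $W_v$ of size $\delta<d$, then branch over all non-empty $W'\subseteq W_v$ together with all guesses of non-edges between $W'$ and $R\cup W'$. By Lemma~\ref{lem:redrule4} each branch reduces the budget by at least one, so the recursion has depth at most $k$ and branching factor $2^{O(d+k\log d)}$; hence the search tree has at most $2^{O(k(d+k\log d))}$ leaves. Crucially, Observation~\ref{obs:r-decrease} guarantees the invariant $|R|\le 5k$ throughout the recursion, so the threshold $f(0)$, which depends only on $k$ and $d$, remains valid at every node and the branching-factor estimate is uniform.

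At each leaf either the budget is exhausted (in which case I test conflict-freeness directly in polynomial time) or no vertex of $R$ has more than $f(0)$ neighbors in $I$ inside $\CG$; since $R$ is a vertex cover of $\CG$, this yields $|V_c|\le |R| + |R|\cdot f(0) = O(k\cdot f(0))$. I then apply \rr~\ref{rr:twins}: by Observation~\ref{obs:equiv} the vertices of $O = V\setminus V_c$ fall into at most $3^{|V_c|}$ twin classes $O(A,B)$, and retaining one representative per non-empty class reduces the instance to $g(k,d)$ vertices by Corollary~\ref{cor:final}, without changing the answer by Lemma~\ref{lem:twinrulesound}. Finally, on this kernel-sized instance I brute-force every candidate solution by enumerating all vertex subsets of size at most $2k$ and all edge subsets of size at most $k$ among them, which costs $f(k,d)$ time per leaf. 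Multiplying the $2^{O(k(d+k\log d))}$ leaves by the per-leaf cost, together with the $2^{O(k\log k)}$ annotated instances and a polynomial overhead per reduction step, gives the announced $f(k,d)\cdot n^{O(1)}$ bound. The subtlest point — where I expect to spend most of the care — is precisely the $|R|\le 5k$ invariant of Observation~\ref{obs:r-decrease}: without it the threshold $f(0)$ would drift with recursion depth and neither the branching factor nor the eventual bound on $|V_c|$ at the leaves would remain a function of $k$ and $d$ alone.
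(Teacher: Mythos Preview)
Your proposal is correct and follows essentially the same three-phase strategy as the paper: bounding a vertex cover of $\CG$ and passing to annotated instances (Section~\ref{sec:cgvc}), applying the decreasing Turing \fpt reduction of \rr~\ref{rr:highdeg} via Lemma~\ref{lem:kddset} to bound $|V_c|$ (Section~\ref{sec:boundvc}), and then the twin reduction \rr~\ref{rr:twins} followed by brute force (Section~\ref{sec:solveinstance}). You have correctly identified the $|R|\le 5k$ invariant of Observation~\ref{obs:r-decrease} as the crux that keeps all thresholds uniformly bounded in $k$ and $d$; the only minor omission is that you do not explicitly invoke \rr~\ref{rr:nohighdegree} before Lemma~\ref{lem:kddset}, but this is subsumed since the lemma's guarantee already relies on it.
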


The following corollary follows trivially from \Cref{thm:kddfree}.
\begin{corollary} \label{cor:forestplanar}
    {\sc Dilation $2$-Augmentation} can be solved in time $f(k) \cdot n^{\order{1}}$ when $G$ is a forest, planar graph, bounded-treewidth graph, an $H$-minor free graph for some fixed $H$, a nowhere dense graph, or bounded degeneracy graph.
\end{corollary}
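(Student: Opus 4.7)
The plan is to invoke \Cref{thm:kddfree} after verifying that each class appearing in the corollary is $\kdd$-free for some constant $d$ that depends only on the class, not on the particular graph in it. Once such a $d$ is pinned down, \Cref{thm:kddfree} immediately gives a running time of $f(k,d)\cdot n^{\cO(1)}$, which collapses to $f'(k)\cdot n^{\cO(1)}$ because $d$ is a fixed constant for the class.

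For five of the six classes I would give a short, elementary argument fixing $d$. A forest contains no $4$-cycle, hence is $\mathcal{K}_{2,2}$-free. By Kuratowski's theorem, a planar graph is $\mathcal{K}_{3,3}$-free. A graph of treewidth at most $w$ is $\mathcal{K}_{w+1,w+1}$-free, since treewidth does not increase under taking subgraphs and $\mathcal{K}_{d,d}$ has treewidth $d$. An $H$-minor-free graph with $|V(H)|=h$ is $\mathcal{K}_{h,h}$-free, because $\mathcal{K}_{h,h}$ contains $H$ as a subgraph, hence as a minor. A graph of degeneracy at most $\delta$ is $\mathcal{K}_{\delta+1,\delta+1}$-free, since every subgraph of such a graph has a vertex of degree at most $\delta$, whereas $\mathcal{K}_{\delta+1,\delta+1}$ has minimum degree $\delta+1$.

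The only case requiring more than an elementary observation is that of nowhere dense classes. For this step the plan is to cite the characterization from the theory of sparse graph classes due to Ne\v{s}et\v{r}il and Ossona de Mendez: every subgraph-closed nowhere dense class excludes a fixed biclique $\mathcal{K}_{d,d}$ as a subgraph, for some constant $d$ that depends only on the class. With this $d$ in hand, \Cref{thm:kddfree} applies verbatim. The main (and essentially only) subtlety is to state and cite this characterization carefully so that the dependence of $d$ on the class alone is transparent; no additional algorithmic ingredient beyond \Cref{thm:kddfree} is required, justifying the claim that the corollary is an immediate consequence of the main theorem.
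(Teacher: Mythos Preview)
Your approach matches the paper's, which simply states that the corollary ``follows trivially'' from \Cref{thm:kddfree}; you supply the details of why each class is $\mathcal{K}_{d,d}$-free for a class-dependent constant $d$, which is exactly the intended reasoning. One small slip: in the $H$-minor-free case, $\mathcal{K}_{h,h}$ need not contain $H$ as a \emph{subgraph} when $H$ has an odd cycle (since $\mathcal{K}_{h,h}$ is bipartite); however, contracting a perfect matching in $\mathcal{K}_{h,h}$ yields $K_h$, so $\mathcal{K}_{h,h}$ contains $H$ as a \emph{minor}, and your conclusion stands.
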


     \section{\fpt Algorithms of \dta for Bounded Degree Graphs}\label{sec:BoundDeg}
   In this section either $G$ is of bounded degree or $\Gamma$ is of bounded degree. Observe that a $\kdd$-free graph generalizes bounded-degree graphs, and thus the result when $G$ is of bounded degree may appear to be subsumed by those presented in Section~\ref{sec:kddFree}. However, the result presented here works for any value of $t$, but the result in Section~\ref{sec:kddFree} only worked for $t=2$. 
   
   In either case, we will use the following easy lemma to bound the number of vertices in a ball of radius $\ell$ around a vertex subset of small size. 

    \begin{observation} \label{obs:degreebound}
		Let $H$ be a graph with maximum degree $\Delta$. For any subset of vertices $U \subseteq V(H)$, $|N_H^{\ell}(U)| \le |U| \cdot \sum_{i = 0}^{\ell} \Delta^i \le |U| \cdot \Delta^{\ell+1}$. 
		\label{obs:elldist}
    \end{observation}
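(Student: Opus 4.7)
The plan is to prove this by induction on $\ell$ applied to a single vertex $v$, and then lift it to the set $U$ by a union bound. First I would fix $v \in V(H)$ and argue that for every $i \ge 0$, the number of vertices at hop-distance exactly $i$ from $v$ is at most $\Delta^i$. The base case $i = 0$ gives the singleton $\{v\}$. For the inductive step, any vertex at hop-distance $i$ is adjacent to some vertex at hop-distance $i-1$; since there are at most $\Delta^{i-1}$ such vertices by the inductive hypothesis and each has at most $\Delta$ neighbours in $H$, the count at distance exactly $i$ is bounded by $\Delta \cdot \Delta^{i-1} = \Delta^{i}$.

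Summing these bounds over $i = 0, 1, \dots, \ell$ yields $|N_H^{\ell}(v)| \le \sum_{i=0}^{\ell} \Delta^i$ for every single vertex $v$. Since $N_H^{\ell}(U) = \bigcup_{v \in U} N_H^{\ell}(v)$ by definition, subadditivity of cardinality under unions gives $|N_H^{\ell}(U)| \le \sum_{v \in U} |N_H^{\ell}(v)| \le |U| \cdot \sum_{i=0}^{\ell} \Delta^i$, which is the first inequality claimed. The second inequality is the standard geometric-series bound $\sum_{i=0}^{\ell} \Delta^i = \tfrac{\Delta^{\ell+1}-1}{\Delta-1} \le \Delta^{\ell+1}$ (treating small values of $\Delta$ as a trivial separate case), so no obstacle arises. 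Because the observation is only used to provide a crude upper bound in later FPT-style counting arguments, this direct two-line proof by induction plus a union bound is sufficient and no delicate step is involved.
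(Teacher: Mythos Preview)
Your argument is the standard one and is correct; the paper itself states this as an observation without proof, so there is no alternative approach to compare against. One small caveat: the second inequality $\sum_{i=0}^{\ell}\Delta^i \le \Delta^{\ell+1}$ is actually false for $\Delta \le 1$ (for instance $\Delta=1$, $\ell=1$ gives $2 \not\le 1$), so your parenthetical about handling small $\Delta$ ``as a trivial separate case'' does not quite go through---but this is a defect of the statement as written rather than of your proof strategy, and it is immaterial for the paper's applications, which only use the bound as a crude estimate with $\Delta \ge 2$.
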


    
     \subsection{$\Gamma$ is of Bounded Degree}\label{sec:boundgen}
      Let $(G,\Gamma,k)$ be an instance of \dta.  In this subsection, we consider the case where $\Gamma$ belongs to the family of graphs of maximum degree at most $\Delta$. However, there is no restriction on $G$. That is, $G$ is an arbitrary undirected graph. 
The idea of the proof is to identify a subset of vertices of $\Gamma$ of size at most $f(k,\Delta,t)$ such that $V_S$ belongs to balls of radius $t$ around them. Once the set is identified, the algorithm tries all potential solutions of size at most $k$ and returns \yes, if either leads to the desired solution. 
    
   The hypothetical solution $S$ is of size at most $k$, and hence from Observation~\ref{obs:elldist} we have $|N^t_{\Gamma}(V_S)|\leq 2k\cdot \Delta^{t+1}$. First, we show that the vertices of $V_S$ are in distance $t$ from the vertices of $V_c$ in $\Gamma$.

\begin{lemma}\label{lemma:boundGamma}
Let $(G,\Gamma,k)$ be a \yes-instance of \dta. Then, $V_S\subseteq N^t_{\Gamma}(V_c)$.
\end{lemma}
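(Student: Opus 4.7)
The plan is to argue by contradiction using the minimality of $S$: if some edge $e = (x,y) \in S$ had an endpoint, say $x$, with ${\sf hop}_\Gamma(x, u) > t$ for every $u \in V_c$, then I would show that $S' := S \setminus \{e\}$ still resolves every adjacent conflict. This would contradict the assumed minimality of $S$, and hence force $V_S \subseteq N^t_\Gamma(V_c)$.

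By Lemma~\ref{lem:adjacent}, it suffices to verify that $G+S'$ is adjacent conflict-free. So I would fix an arbitrary adjacent conflict $(u,v)$ (so $u,v \in V_c$ with ${\sf hop}_\Gamma(u,v) = 1$) and, using the fact that $G+S$ is conflict-free, pick a path $P = \langle u = v_0, v_1, \ldots, v_\ell = v \rangle$ in $G+S$ of weighted length at most $t$. If $e$ does not appear on $P$, then $P$ already lies in $G+S'$ and we are done immediately.

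The crux of the argument is ruling out the case that $e$ does lie on $P$. Suppose $e = (v_i, v_{i+1})$. Recall that every edge in $G+S$ carries weight equal to the $\Gamma$-distance between its endpoints, so combining the triangle inequality for ${\sf hop}_\Gamma$ with Observation~\ref{obs:path} gives
\[
{\sf hop}_\Gamma(u, v_i) \;\le\; \sum_{j=0}^{i-1} d_\Gamma(v_j, v_{j+1}) \;=\; \sum_{j=0}^{i-1} w(v_j, v_{j+1}) \;\le\; t,
\]
and symmetrically ${\sf hop}_\Gamma(v, v_{i+1}) \le t$. Since $u, v \in V_c$, this places both $v_i$ and $v_{i+1}$ in $N^t_\Gamma(V_c)$, and in particular the endpoint $x \in \{v_i, v_{i+1}\}$ lies in $N^t_\Gamma(V_c)$, contradicting the assumption on $x$. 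Therefore $e$ cannot lie on any weighted path of length $\le t$ between adjacent conflict vertices in $G+S$, and removing it preserves all such short paths.

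I do not anticipate any serious obstacle; the one point requiring care is the conversion between weighted distances in $G+S$ (which use $d_\Gamma$ as edge weights) and hop-distances in $\Gamma$, which is exactly the content of Observation~\ref{obs:path} applied edge-by-edge along a prefix of $P$. Once that inequality is in hand, the contradiction with minimality is immediate and the lemma follows.
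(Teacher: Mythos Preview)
Your proof is correct and follows essentially the same route as the paper's. The paper argues directly---pick $u\in V_S$, remove the edge of $S$ through $u$, invoke minimality to obtain an adjacent conflict $(x,y)$ whose resolving path in $G+S$ uses that edge, and then bound $d_\Gamma(x,u)\le t$ via Observation~\ref{obs:path}---while you phrase the same idea contrapositively, but the key step (an edge of $S$ lies on a weight-$\le t$ path between conflict vertices, hence its endpoints are within $\Gamma$-distance $t$ of $V_c$) is identical.
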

	\begin{proof}
		Let $S$ be a minimal solution of \dta on an instance $(G,\Gamma,k)$. Furthermore, let $u$ be a vertex in $V_S$. As $u \in V_S$, there exists a vertex $v$ in $V_S$ such that $(u,v) \in S$. Consider $S'=S\setminus \{(u,v)\}$. Observe that $S'$ is not a solution, as otherwise it will contradict the minimality of $S$. 
        Therefore, there exist two vertices $x,y\in V_c$ such that $x$ and $y$ are in conflict (in fact, adjacent conflict) in $G+S'$, but not in conflict in $G+S$. Thus, there exists a (weighted) shortest path $P$ in $G+S$ between $x$ and $y$ that contains $(u,v)$ and $d_{G+S}(x,y)\leq t$. Thus, the subpath $P_{xu}$ of $P$ has length at most $t$ in $G+S$ (that is, $d_{G+S}(x,u)\leq t$). This together with \cref{obs:path} implies that there exists a path $P'$ of length at most $t$ in $\Gamma$ between $x$ and $u$. This implies $u \in N^t_{\Gamma}(x) \subseteq  N^t_{\Gamma}(V_c)$. This concludes the proof.     
	\end{proof}

Observe that we cannot upper bound the size of $N^t_{\Gamma}(V_c)$, as the size of $V_c$ may not be bounded by any function of $t,k$ and $\Delta$. Next, we show a kind of converse of Lemma~\ref{lemma:boundGamma} showing that, in fact, $V_c$ is contained inside the balls of radius $t$ around $V_S$ in $\Gamma$. The proof is identical to Lemma~\ref{lemma:boundGamma} and is presented separately for clarity. This will allow us to obtain our algorithm.

\begin{lemma}\label{lem:coverc}
Let $(G,\Gamma,k)$ be a \yes-instance of \dta. Then,  $V_c\subseteq N^t_{\Gamma}(V_S)$.
\end{lemma}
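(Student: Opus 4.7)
The plan is to mirror the argument of Lemma~\ref{lemma:boundGamma}, but starting from a conflict pair in $V_c$ rather than from a solution edge. Fix a minimal solution $S$ to the instance $(G,\Gamma,k)$ and let $v \in V_c$. By definition of $V_c$, there exists $u$ such that $u$ and $v$ are in adjacent conflict in $G$, meaning $(u,v)\in E(\Gamma)$ (so $d_\Gamma(u,v)=1$) and $d_G(u,v) > t$. Since $S$ is a solution, Lemma~\ref{lem:adjacent} gives $d_{G+S}(u,v) \le t \cdot d_\Gamma(u,v) = t$.

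Next I would argue that any shortest $u,v$-path $P$ in $G+S$ must use at least one edge of $S$. Indeed, if $P$ used no edges of $S$, it would be a path in $G$ of weighted length at most $t$, contradicting $d_G(u,v) > t$. Hence there is an edge $(x,y) \in S$ lying on $P$, and in particular $x \in V_S$. Consider the subpath $P_{vx}$ of $P$ from $v$ to $x$. Its weighted length is at most the total weighted length of $P$, so $d_{G+S}(v,x) \le t$.

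Now I would invoke Observation~\ref{obs:path} applied to the graph $G+S$: every edge of $G+S$ has weight equal to the $\Gamma$-distance between its endpoints, so by the triangle inequality for $d_\Gamma$, any weighted path in $G+S$ from $v$ to $x$ has length at least $d_\Gamma(v,x)$. Consequently $d_\Gamma(v,x) \le d_{G+S}(v,x) \le t$, which means $v \in N^t_\Gamma(x) \subseteq N^t_\Gamma(V_S)$. Since $v \in V_c$ was arbitrary, this gives $V_c \subseteq N^t_\Gamma(V_S)$, completing the proof.

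The only subtlety is the step where I extend Observation~\ref{obs:path} from $G$ to $G+S$, which is immediate since every newly added edge $(p,q) \in S$ carries weight $d_\Gamma(p,q)$ by the problem's embedding convention. Apart from that, the argument is a straightforward dual of Lemma~\ref{lemma:boundGamma}: there, minimality forced every solution edge to be close to a conflict pair in $\Gamma$; here, the \yes-instance property forces every conflict vertex to be close to some endpoint of a solution edge. Combining this lemma with Lemma~\ref{lemma:boundGamma} will let the algorithm localize $V_S$ inside the radius-$t$ neighborhood of a bounded-size set.
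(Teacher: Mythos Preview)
Your proof is correct and follows essentially the same approach as the paper's own argument: pick a conflict vertex, use the fact that $S$ is a solution to find a short $(u,v)$-path in $G+S$ that necessarily uses an edge of $S$, and then apply Observation~\ref{obs:path} to the subpath ending at a solution endpoint. If anything, you are slightly more careful than the paper in two places: you explicitly justify why the shortest path must contain an edge of $S$ (the paper simply asserts this), and you note that Observation~\ref{obs:path} is being applied to $G+S$ rather than $G$, which is legitimate since added edges also carry $\Gamma$-distance weights.
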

\begin{proof}
Let $S$ be a minimal solution of \dta on an instance $(G,\Gamma,k)$. Furthermore, let $x\in V_c$ be an arbitrary vertex that is in conflict (again, adjacent conflict)  with some vertex $y$ in $G$. Since $S$ is a solution, we have $d_{G+S}(x,y)\leq t$. Thus, there exists a (weighted) shortest path $P$ in $G+S$ between $x$ and $y$ that contains some edge of the solution, say $(u,v) \in S$. Thus, the subpath $P_{xu}$ of $P$ has length at most $t$ in $G+S$ (that is, $d_{G+S}(u,x)\leq t$). This together with \cref{obs:path} implies that there exists a path $P'$ of length at most $t$ in $\Gamma$ between $u$ and $x$. This implies $x \in N^t_{\Gamma}(u) \subseteq  N^t_{\Gamma}(V_S)$. This concludes the proof.      
\end{proof}

Lemma~\ref{lem:coverc}, along with Observation~\ref{obs:degreebound} implies that in a \yes-instance, the size of $V_c$ is bounded by $2k \cdot \Delta^{t+1}$. Thus, if $|V_c| > 2k \cdot \Delta^{t+1}$, we say \no. Now assume that $|V_c| \le 2k \cdot \Delta^{t+1}$. Next, via Lemma~\ref{lemma:boundGamma}, we have that $V_S \subseteq N^t_\Gamma(V_c) =: Q$, say. Again, by Observation~\ref{obs:degreebound}, $|Q| \le |V_c| \cdot \Delta^{t+1} \le 2k \cdot \Delta^{2t+2}$.

 
	We know we have to add at most $k$ edges between the vertices in $Q$ to resolve all the conflicts. This can be done by first guessing the set of end-points, which is a set of vertices of size at most $2k$, and for each such set of end-points, trying all $k^{\order{k}}$ possibilities for different edges that can be added. Overall, the number of guesses can be upper bounded by $\binom{|Q|}{2k} \cdot k^{\order{k}} \le 2^{\order{k \log k}} \cdot \Delta^{\order{kt}}$. Hence we have the following theorem from Observation~\ref{obs:elldist} and Lemma~\ref{lem:coverc}.

	\begin{theorem}\label{theorem:boundedGamma}
		{\sc Dilation $ t$-Augmentation}   can be solved in time $2^{\order{k \log k}} \cdot \Delta^{\order{kt}} \cdot n^{\order{1}}$, where $\Delta$ denotes the maximum degree of the graph $\Gamma$.
	\end{theorem}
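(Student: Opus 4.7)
The plan is to combine the two structural lemmas (Lemma~\ref{lemma:boundGamma} and Lemma~\ref{lem:coverc}) with the degree bound of Observation~\ref{obs:degreebound} in order to restrict, in polynomial time, the set of vertices that can be end-points of any minimal solution $S$ to a set of size bounded by a function of $k$, $t$ and $\Delta$. Once this is done, brute-forcing over all possible candidate edge sets of size at most $k$ inside this small set yields the desired running time.

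First I would compute $V_c$, i.e., the set of vertices of $G$ involved in some adjacent $t$-conflict. This can be done in polynomial time by checking, for every edge $(u,v)\in E(\Gamma)$, whether $d_G(u,v)>t\cdot d_\Gamma(u,v)=t$. By Lemma~\ref{lem:coverc}, in any \yes-instance we have $V_c\subseteq N^t_\Gamma(V_S)$, and since $|V_S|\le 2k$, Observation~\ref{obs:degreebound} gives $|V_c|\le 2k\cdot\Delta^{t+1}$. So if $|V_c|>2k\cdot\Delta^{t+1}$, we can safely return \no. Otherwise, using Lemma~\ref{lemma:boundGamma}, $V_S\subseteq N^t_\Gamma(V_c)=:Q$, and a second application of Observation~\ref{obs:degreebound} yields $|Q|\le |V_c|\cdot\Delta^{t+1}\le 2k\cdot\Delta^{2t+2}$. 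Computing $Q$ is straightforward (e.g., by a BFS of depth $t$ from each vertex of $V_c$), and runs in time $n^{\order{1}}$.

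Next, knowing that $V_S\subseteq Q$, I would enumerate all candidate edge sets $S'\subseteq\binom{Q}{2}$ with $|S'|\le k$. The number of such sets is at most $\binom{|Q|^2}{k}\le 2^{\order{k\log k}}\cdot\Delta^{\order{kt}}$ (equivalently, first guess $V_{S'}$ of size at most $2k$ in $\binom{|Q|}{2k}$ ways, then guess the at most $k$ actual edges among them in $k^{\order{k}}$ ways). For each candidate, one tests in polynomial time whether $G+S'$ is adjacent conflict-free, by checking all $(u,v)\in E(\Gamma)$; by Lemma~\ref{lem:adjacent} this is equivalent to being conflict-free, which is what the problem asks. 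We return \yes iff some candidate passes the test. Correctness follows because the two lemmas guarantee that any minimal solution, if one exists, has all its end-points in $Q$ and is therefore enumerated.

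The only subtle point (and the closest thing to an obstacle) is making sure the bound from Lemma~\ref{lem:coverc} is used \emph{before} the bound from Lemma~\ref{lemma:boundGamma}; without first trimming $|V_c|$ via $V_c\subseteq N^t_\Gamma(V_S)$, the set $N^t_\Gamma(V_c)$ cannot be bounded by any function of $k$, $t$ and $\Delta$ alone, since $V_c$ itself may a priori be arbitrarily large. Everything else is routine bookkeeping of the exponents, giving the claimed total running time $2^{\order{k\log k}}\cdot\Delta^{\order{kt}}\cdot n^{\order{1}}$.
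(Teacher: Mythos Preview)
Your proposal is correct and follows essentially the same approach as the paper: first use Lemma~\ref{lem:coverc} together with Observation~\ref{obs:degreebound} to bound $|V_c|$ (returning \no\ if the bound is exceeded), then use Lemma~\ref{lemma:boundGamma} to confine $V_S$ to $Q=N^t_\Gamma(V_c)$ of size $\le 2k\cdot\Delta^{2t+2}$, and finally brute-force over all $\le k$-edge subsets with end-points in $Q$. Your remark about the necessary order of the two lemmas is exactly the point the paper implicitly relies on.
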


 \subsection{$G$ is of Bounded Degree}\label{sec:genbound}
Let $(G,\Gamma,k)$ be an instance of \dta.  In this subsection, we consider the case where $G$ belongs to the family of graphs of maximum degree at most $\Delta$. However, there is no restriction on $\Gamma$. That is, $\Gamma$ is an arbitrary undirected graph. The proof strategy in this section is similar to that employed for Theorem~\ref{theorem:boundedGamma}. As before, the idea is to identify a subset of vertices of $G$ of maximum size $f(k,\Delta,t)$ such that $V_S$ belongs to balls of radius $t^2$ around them.

   \begin{lemma}\label{lemma:vcnotlarge} 
	Let $(G,\Gamma,k)$ be a \yes-instance of \dta. Then,  $V_c\subseteq N^t_{G}(V_S)$. 
	\end{lemma}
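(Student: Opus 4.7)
The plan is to mimic the proof of Lemma~\ref{lem:coverc} almost verbatim, replacing arguments about $\Gamma$-distances with hop distances in $G$. First I would fix a minimal solution $S$ for the \yes-instance and an arbitrary $x \in V_c$. By the definition of $V_c$ (and Lemma~\ref{lem:adjacent}), there exists a vertex $y$ such that $(x,y)$ is in adjacent conflict, i.e.\ $d_\Gamma(x,y)=1$ while $d_G(x,y) > t$. Since $S$ is a solution and the path from $x$ to $y$ in $G+S$ has weighted length at most $t\cdot d_\Gamma(x,y)=t$, pick such a shortest path $P$ in $G+S$; because $d_G(x,y)>t$, at least one edge of $P$ must belong to $S$.

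The key step will be to traverse $P$ from $x$ until the first edge of $S$ is encountered, say $(u,v)\in S$. The subpath $P_{xu}$ then uses only edges of $G$, and its weighted length is bounded by the weighted length of $P$, hence at most $t$. Now comes the only real departure from Lemma~\ref{lem:coverc}: I need hop-distance in $G$, not weighted distance, and not distance in $\Gamma$. The crucial observation is that for every edge $(p,q)\in E(G)$ the weight is $d_\Gamma(p,q)\geq 1$, since any two distinct vertices in the metric derived from an unweighted graph $\Gamma$ are at distance at least $1$. Therefore the number of edges on $P_{xu}$, which equals $\mathsf{hop}_G(x,u)$ along this path and upper-bounds $\mathsf{hop}_G(x,u)$, is at most the weighted length of $P_{xu}$, giving $\mathsf{hop}_G(x,u) \le t$.

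Finally, because $(u,v)\in S$ we have $u\in V_S$, so $x \in N_G^t(u) \subseteq N_G^t(V_S)$, which is exactly the claim. The degenerate case $x=u$ (when the first $S$-edge of $P$ is already incident to $x$) is handled trivially, since then $x\in V_S \subseteq N_G^t(V_S)$. The main conceptual obstacle, relative to Lemma~\ref{lem:coverc}, is bridging the gap between the \emph{weighted} distance guarantee from the definition of a solution and the \emph{unweighted} neighbourhood $N_G^t$; as sketched, this is resolved by the uniform integer lower bound on edge weights in $G$, and no further technical difficulty is expected.
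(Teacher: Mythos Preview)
Your proposal is correct and follows essentially the same route as the paper: fix an adjacent conflict $(x,y)$, take a shortest $x$--$y$ path in $G+S$ of weighted length at most $t$, walk from $x$ to the first $S$-edge to obtain a prefix lying entirely in $G$, and conclude that $x$ is within hop-distance $t$ of $V_S$. If anything, you are more careful than the paper, which writes $d_G(u,x)\le t$ and then jumps to $x\in N_G^t(u)$ without explicitly invoking the ``all edge weights are at least $1$'' observation that you spell out.
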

	\begin{proof}
 Let $S$ be a minimal solution of \dta on an instance $(G,\Gamma,k)$. Furthermore, let $x\in V_c$ be an arbitrary vertex that is in conflict (again, adjacent conflict)  with some vertex $y$ in $G$. Since $S$ is a solution, we have $d_{G+S}(x,y)\leq t$. Thus, there exists a (weighted) shortest path $P$ in $G+S$ between $x$ and $y$ that contains an edge of $S$ and hence a vertex of $V_S$. If $x \in V_S$, then we are done. Otherwise, let $u$ be the vertex in $P$ such that $u \in V_S$ and that no internal vertex from the subpath $P_{xu}$ of $P$ contains a vertex from $V_S$. Notice that the subpath $P_{xu}$ of $P$ contains no edge from $S$, so it has maximum length $t$ in $G$ (that is, $d_{G}(u,x)\leq t$).  This implies $x \in N^t_{G}(u) \subseteq  N^t_{G}(V_S)$. This concludes the proof. 
	\end{proof}
	
	As $ |V_S| \leq 2k $, the next observation follows from Observation~\ref{obs:elldist}. 
	\begin{observation}
		$|N^t_G(V_S)|\leq 2k\cdot \Delta^{t+1}$.\label{obs:bound1}
	\end{observation}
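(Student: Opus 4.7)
The plan is to derive this directly from the two ingredients already on the table: a size bound on $V_S$, and the generic ball-size bound of Observation~\ref{obs:degreebound} (which is restated as Observation~\ref{obs:elldist}). Since $S$ consists of at most $k$ edges and each edge contributes at most two endpoints to $V_S$, I would first record the trivial bound $|V_S| \le 2k$. This step needs no argument beyond the definition of $V_S$ as the set of endpoints of edges in $S$.

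Next, I would instantiate Observation~\ref{obs:degreebound} with the ambient graph taken to be $G$ (which by the standing assumption of Section~\ref{sec:genbound} has maximum degree at most $\Delta$), the vertex subset taken to be $U = V_S$, and the radius taken to be $\ell = t$. That observation then directly yields
\[
|N_G^{t}(V_S)| \;\le\; |V_S| \cdot \sum_{i=0}^{t} \Delta^{i} \;\le\; |V_S| \cdot \Delta^{t+1}.
\]
Combining with $|V_S| \le 2k$ from the previous step gives $|N_G^{t}(V_S)| \le 2k \cdot \Delta^{t+1}$, which is exactly the claim.

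There is essentially no obstacle here: the statement is a direct numerical consequence of the degree-based ball bound applied to a set of bounded size, so the only thing to verify is that the hypotheses of Observation~\ref{obs:degreebound} apply to $G$ (they do, by assumption on the graph family in this subsection) and that the endpoint count of a solution of size $k$ is at most $2k$. No case analysis, no structural argument about $\Gamma$ or about conflicts, and no use of Lemma~\ref{lemma:vcnotlarge} is required for this particular observation — the latter lemma is what will couple this bound to $V_c$ in the subsequent algorithmic step, but it is not needed for the observation itself.
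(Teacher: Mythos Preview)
Your proposal is correct and matches the paper's own justification exactly: the paper simply notes that $|V_S|\le 2k$ and invokes Observation~\ref{obs:elldist} to conclude. There is nothing to add.
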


 Lemma~\ref{lemma:vcnotlarge} and Observation~\ref{obs:bound1} together yield the following reduction rule that yields an upper bound on the size of the conflict set $V_c$.

\begin{red_rule}\label{rule:sizeVc}
If $|V_c| > 2k\cdot \Delta^{t+1}$, return \no.
\end{red_rule}
	
To identify the vertices in $V_S$, we next show that they lie in the balls of radius $t^2$ around $V_c$ in $G$. 
	
\begin{lemma}\label{obs:notfar}
Let $(G,\Gamma,k)$ be a \yes-instance of \dta and $S$ be a hypothetical minimal solution. 
Then, $V_S\subseteq N^{t^2}_G(V_c)$.
\end{lemma}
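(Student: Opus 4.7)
The plan is to combine the minimality argument underlying Lemma~\ref{lemma:vcnotlarge} with a step-by-step walk along a $\Gamma$-geodesic, paying at most $t$ hops in $G$ per $\Gamma$-step, thereby reaching a vertex of $V_c$ within $t \cdot t = t^2$ hops of $G$.

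I would fix an arbitrary $u \in V_S$ and an incident solution edge $(u, v) \in S$. By the minimality of $S$, the set $S \setminus \{(u, v)\}$ is not a solution, so there exist $x, y$ with $d_\Gamma(x, y) = 1$ for which $d_{G + S \setminus \{(u,v)\}}(x, y) > t$ while $d_{G + S}(x, y) \le t$, and every weighted $(x, y)$-path in $G + S$ of length at most $t$ must use $(u, v)$. Exactly as at the beginning of the proof of Lemma~\ref{lemma:vcnotlarge}, this forces $x, y \in V_c$. I would pick any such witness path and look at its subpath $P_{xu}$ from $x$ to $u$. Since every edge of $G + S$ carries its $\Gamma$-distance as weight, a direct extension of Observation~\ref{obs:path} to $G + S$ yields $d_\Gamma(x, u) \le d_{G + S}(x, u) \le t$.

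Next, I would fix a shortest $\Gamma$-path $x = p_0, p_1, \ldots, p_\ell = u$ with $\ell \le t$, and inspect each $\Gamma$-adjacent pair $(p_i, p_{i+1})$. If this pair is not in adjacent conflict in $G$, then $d_G(p_i, p_{i+1}) \le t \cdot d_\Gamma(p_i, p_{i+1}) = t$; since every edge of $G$ has weight at least one, this gives ${\sf hop}_G(p_i, p_{i+1}) \le t$. Otherwise the pair is in adjacent conflict, in which case both $p_i$ and $p_{i+1}$ lie in $V_c$. Walking backwards from $u = p_\ell$, let $p_{j+1}$ be the first vertex encountered (the largest such $j+1$) with either $(p_j, p_{j+1})$ a conflict pair or $p_{j+1} = x$; in either case $p_{j+1} \in V_c$. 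All pairs $(p_i, p_{i+1})$ with $i > j$ are non-conflict, so ${\sf hop}_G(u, p_{j+1}) \le (\ell - j - 1) \cdot t \le \ell \cdot t \le t^2$, which gives $u \in N^{t^2}_G(V_c)$.

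I do not anticipate a significant obstacle; the main conceptual observation is that a $\Gamma$-adjacent pair not in conflict is automatically within $t$ hops in $G$, which follows from the embedding and the fact that every edge weight is at least one. The only mild subtlety is terminating the backward walk at an endpoint of a conflict edge (rather than crossing it), which guarantees that the terminal vertex belongs to $V_c$.
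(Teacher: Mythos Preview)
Your proposal is correct and follows essentially the same route as the paper: use minimality of $S$ to find an adjacent conflict $(x,y)$ resolved through the edge $(u,v)$, deduce $d_\Gamma(x,u)\le t$ from the extension of Observation~\ref{obs:path} to $G+S$, then walk along a shortest $\Gamma$-path from $x$ to $u$ paying at most $t$ $G$-hops per non-conflict step. The only cosmetic difference is the choice of anchor in $V_c$: the paper takes the \emph{largest index $j$ with $a_j\in V_c$} (which always exists since $a_1=x\in V_c$, avoiding your ``or $p_{j+1}=x$'' fallback and the $j=-1$ indexing), whereas you stop at the higher endpoint of the last conflict pair on the path; both choices yield the same $t^2$ bound.
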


 \begin{figure}[t]
            \centering
            \includegraphics[width=1\textwidth]{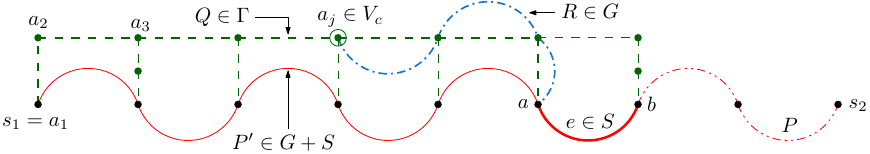}
            \caption{Path in $G+S$ is shown in red (solid lines), path in $\Gamma$ is shown in green (dashed) and path in $G$ is shown in blue ( dashed-dotted).}
            \label{fig:deg}
        \end{figure}
	
	\begin{proof}
 Let $a \in V_S$ and $(a,b)$ be an edge in $S$ containing $a$. 
	Due to the minimality of $S$, for every edge $e=(a,b)$ in $S$, there exist at least two vertices $s_1$ and $s_2$ such that (i) $s_1$ and $s_2$ are in conflict (again, adjacent conflict) with each other in $G$, and (ii) $e=(a,b)$ appears in a shortest path $P$ of length at most $t$ between $s_1$ and $s_2$ in $G+S$. Note that $a$ and $b$ may not be in conflict. We show that there exists a vertex $w\in V_c$ such that $w$ is at most $t^2$ hop distance from the vertex $a$ in $ G $ (that is, ${\sf hop}_G(w,a) \leq t^2$). 
 Let $P'$ be the subpath of length $\tau\leq t$ between $s_1$ and $a$ (see Figure~\ref{fig:deg}). Since $G+S$ is also a graph embedded in a metric space derived from the undirected graph $\Gamma$, by Observation~\ref{obs:path} we have $d_{\Gamma}(s_1,a) \leq d_{G+S}(s_1,a)=\tau \leq t$. 
 
 Let $Q= \langle s_1=a_1,a_2\cdots , a_{\tau+1}=a\rangle$ be a path between $s_1$ and $a$ in $\Gamma$ of length $d_{\Gamma}(s_1,a) \leq t$. Let $j\in[\tau+1]$ be the largest index such that $a_j\in V_c$. Since $s_1\in V_c$, such an index $j$ always exists. Observe that for all $j\leq i\leq \tau$, $a_i$ and $a_{i+1}$ are {\em not in conflict} and $a_i$ and $a_{i+1}$ are {\em adjacent} in $\Gamma$. Therefore, there exists a path $R_i$ between $a_i$ and $a_{i+1}$, $j\leq i\leq \tau$, in $G$ of maximum length $t$. Thus, by concatenating the paths $R_j  \bullet R_{j+1} \bullet \cdots \bullet R_\tau$, we get a walk with a weighted length at most $t^2$ between $a_j$ and $a$ in $G$. This implies that there is a path between $a_j$ and $a$ in $G$ of length $t^2$. That is, $d_{G}(a_j,a)\leq t^2$. Since all the edge weights in $G$ are at least $1$, we have ${\sf hop}_G(a_j,a)\leq t^2$. This implies $a \in N^{t^2}_{G}(a_j) \subseteq  N^{t^2}_{G}(V_c)$. This concludes the proof. 
    \end{proof}
   
  Since $V_S\subseteq N^{t^2}_G(V_c)$ (Lemma~\ref{obs:notfar}) and $|V_c| \leq  2k\cdot \Delta^{t+1}$ (Reduction Rule~\ref{rule:sizeVc}), by observation~\ref{obs:elldist} we get the following. 
	\begin{lemma}
 Let $(G,\Gamma,k)$ be a \yes-instance of \dta. Then, 
		$|N^{t^2}_G(V_c)|\leq 2k\cdot \Delta^{t+1}\cdot \Delta^{t^2+1}$.
	\end{lemma}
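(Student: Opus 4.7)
The plan is to derive the bound as a direct composition of two facts already established earlier in this subsection: an upper bound on $|V_c|$ and a degree-based upper bound on the size of an $\ell$-ball around an arbitrary vertex set.

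First, I would invoke Reduction Rule~\ref{rule:sizeVc}. Since we are assuming we have a \yes-instance and that the reduction rule has been applied (otherwise we would already have returned \no), we may assume $|V_c| \le 2k \cdot \Delta^{t+1}$. This bound is itself a consequence of Lemma~\ref{lemma:vcnotlarge} (which places $V_c$ inside $N_G^t(V_S)$) combined with $|V_S| \le 2k$ and Observation~\ref{obs:degreebound}, but at this point we can simply quote the reduction rule.

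Next, I would apply Observation~\ref{obs:degreebound} (equivalently Observation~\ref{obs:elldist}) with $H = G$, $U = V_c$, and $\ell = t^2$. Since $G$ has maximum degree at most $\Delta$, the observation gives
\[
|N_G^{t^2}(V_c)| \;\le\; |V_c| \cdot \Delta^{t^2+1}.
\]
Substituting the bound on $|V_c|$ from the previous paragraph yields
\[
|N_G^{t^2}(V_c)| \;\le\; 2k \cdot \Delta^{t+1} \cdot \Delta^{t^2+1},
\]
which is exactly the claimed inequality.

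There is essentially no obstacle here; the statement is a one-line corollary obtained by chaining the cardinality bound from Reduction Rule~\ref{rule:sizeVc} with the neighborhood-size estimate of Observation~\ref{obs:degreebound}. The only thing one must be careful about is to note that the observation is invoked in $G$ (which has bounded degree $\Delta$ by hypothesis of this subsection), not in $\Gamma$, and that we are measuring hop-distance in $G$, so the edge weights of $G$ play no role in the counting argument.
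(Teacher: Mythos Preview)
Your proposal is correct and matches the paper's own justification essentially verbatim: the paper also derives the bound by combining $|V_c|\le 2k\cdot\Delta^{t+1}$ from Reduction Rule~\ref{rule:sizeVc} with Observation~\ref{obs:elldist} applied in $G$ with $\ell=t^2$. There is no additional idea needed.
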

	
	We know we have to add at most $k$ edges between the vertices in $N^{t^2}_G(V_c)$ to resolve all the conflicts. This can be done by first guessing the set of end-points, which is a set of vertices of size at most $2k$, and for each such set of end-points, trying all $k^{\order{k}}$ possibilities for different edges that can be added. Overall, the number of guesses can be upper bounded by $\binom{|N^{t^2}(V_c)|}{2k} \cdot k^{\order{k}} \le 2^{\order{k \log k}} \cdot \Delta^{\order{kt^2}}$. Hence we have the following theorem.

	\begin{theorem}\label{theo:boundedG}
		{\sc Dilation $ t$-Augmentation}   can be solved in time $2^{\order{k \log k}} \cdot \Delta^{\order{kt^2}} \cdot n^{\order{1}}$, where $\Delta$ denotes the maximum degree of the graph $G$.
	\end{theorem}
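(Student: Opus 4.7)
The plan is a direct enumeration argument built on the structural results already established in this subsection. The key observation is that by Lemma~\ref{obs:notfar} any minimal solution $S$ satisfies $V_S\subseteq N^{t^2}_G(V_c)$, while \rr~\ref{rule:sizeVc} combined with Lemma~\ref{lemma:vcnotlarge} and Observation~\ref{obs:elldist} forces $|V_c|\le 2k\cdot\Delta^{t+1}$ in any \yes-instance. A second application of Observation~\ref{obs:elldist} to $N^{t^2}_G(V_c)$ then bounds the candidate set $Q:=N^{t^2}_G(V_c)$ by $2k\cdot\Delta^{t+t^2+2}$, which is of the form $k\cdot\Delta^{\order{t^2}}$. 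Since $V_S\subseteq Q$, it suffices to search only among edges whose endpoints lie in $Q$.

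The algorithm itself proceeds as follows. First, compute $V_c$ by scanning every edge $(u,v)\in E(\Gamma)$ and comparing $d_G(u,v)$ with $t\cdot d_\Gamma(u,v)=t$; this is justified by Lemma~\ref{lem:adjacent}, which lets us restrict attention to adjacent conflicts. Second, apply \rr~\ref{rule:sizeVc}: if $|V_c|>2k\cdot\Delta^{t+1}$, return \no. Third, construct $Q$ by running a truncated BFS of depth $t^2$ from each vertex of $V_c$ in $G$. Fourth, enumerate every candidate solution $S$ of size at most $k$ consisting of non-edges of $G$ both of whose endpoints lie in $Q$; this is done by first choosing at most $2k$ endpoints from $Q$ and then choosing at most $k$ pairs from among them. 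Finally, for each such $S$, verify in polynomial time that $G+S$ is adjacent conflict-free by recomputing, for each edge $(u,v)\in E(\Gamma)$, the distance $d_{G+S}(u,v)$ and comparing it with $t$. Return \yes if any candidate works.

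For the running time, the enumeration count is
\[
\binom{|Q|}{2k}\cdot k^{\order{k}}\;\le\;\binom{2k\cdot\Delta^{t+t^2+2}}{2k}\cdot k^{\order{k}}\;=\;2^{\order{k\log k}}\cdot\Delta^{\order{kt^2}},
\]
and each verification takes $n^{\order{1}}$ time using one Dijkstra call per vertex of $V(G+S)$, yielding the claimed bound $2^{\order{k\log k}}\cdot\Delta^{\order{kt^2}}\cdot n^{\order{1}}$. Since Lemmas~\ref{lemma:vcnotlarge} and \ref{obs:notfar} already carry out all the structural work, there is no genuine obstacle left in the proof; the only point requiring care is that verification is phrased in terms of \emph{adjacent} conflicts so as to stay within $n^{\order{1}}$ time per candidate, with correctness of this restriction handed to us by Lemma~\ref{lem:adjacent}.
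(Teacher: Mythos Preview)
Your proposal is correct and follows essentially the same approach as the paper: bound $|V_c|$ via \rr~\ref{rule:sizeVc}, invoke Lemma~\ref{obs:notfar} to confine $V_S$ to $N^{t^2}_G(V_c)$, bound this candidate set using Observation~\ref{obs:elldist}, and enumerate all $\binom{|Q|}{2k}\cdot k^{\order{k}}$ possible solutions. Your write-up is somewhat more explicit about the algorithmic steps (computing $V_c$, truncated BFS, verification via adjacent conflicts), but the structure and the arithmetic match the paper exactly.
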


      \section{Dilation Augmentation For Forest and Stars}\label{sec:hardTree}


In this section, we focus on \dta when $G$ is a disjoint union of a star, and a set of isolated vertices (Section~\ref{sec:genstar3}), or when $\Gamma$ is a star (Section~\ref{sec:stargen3}). We show that \dtai{3} is {\sf W}-hard in both cases. In contrast to this, we know that \dtai{2} is \fpt when $G$ is a forest, due to Theorem~\ref{thm:kddfree}. Finally, in Section~\ref{sec:treegen2},  we will observe that \dtai{2} is polynomial-time solvable when $\Gamma$ is a tree, and $G$ is an arbitrary graph.
 

\subsection{{\sf W[1]}-hardness of \dtai{3} when $G$ is   Forest}\label{sec:genstar3}
	In this section, we show that {\sc Dilation $3$-Augmentation} is \woh~even when $G$ is a forest, in particular, a disjoint union of a star and an independent set. We give a polynomial-time parameter preserving reduction from the {\sc Multicolored Clique} problem which is known to be \woh~\cite{fellows2007fixed}. The input of {\sc Multicolored Clique} consists of a graph $H$, an integer $ k $, and a partition $\V= (V_1, V_2, \cdots, V_k) $ of the vertices of $ H$; our aim is to decide if there is a clique of size $k$ containing exactly one vertex from each set $ V_i, i \in [k] $. We denote this instance by $(H,k,\V)$. We can assume that for each $ i \in k $, $V_i $ is an independent set.

	\begin{figure}[t!]
		\centering
		\includegraphics[width=1\textwidth]{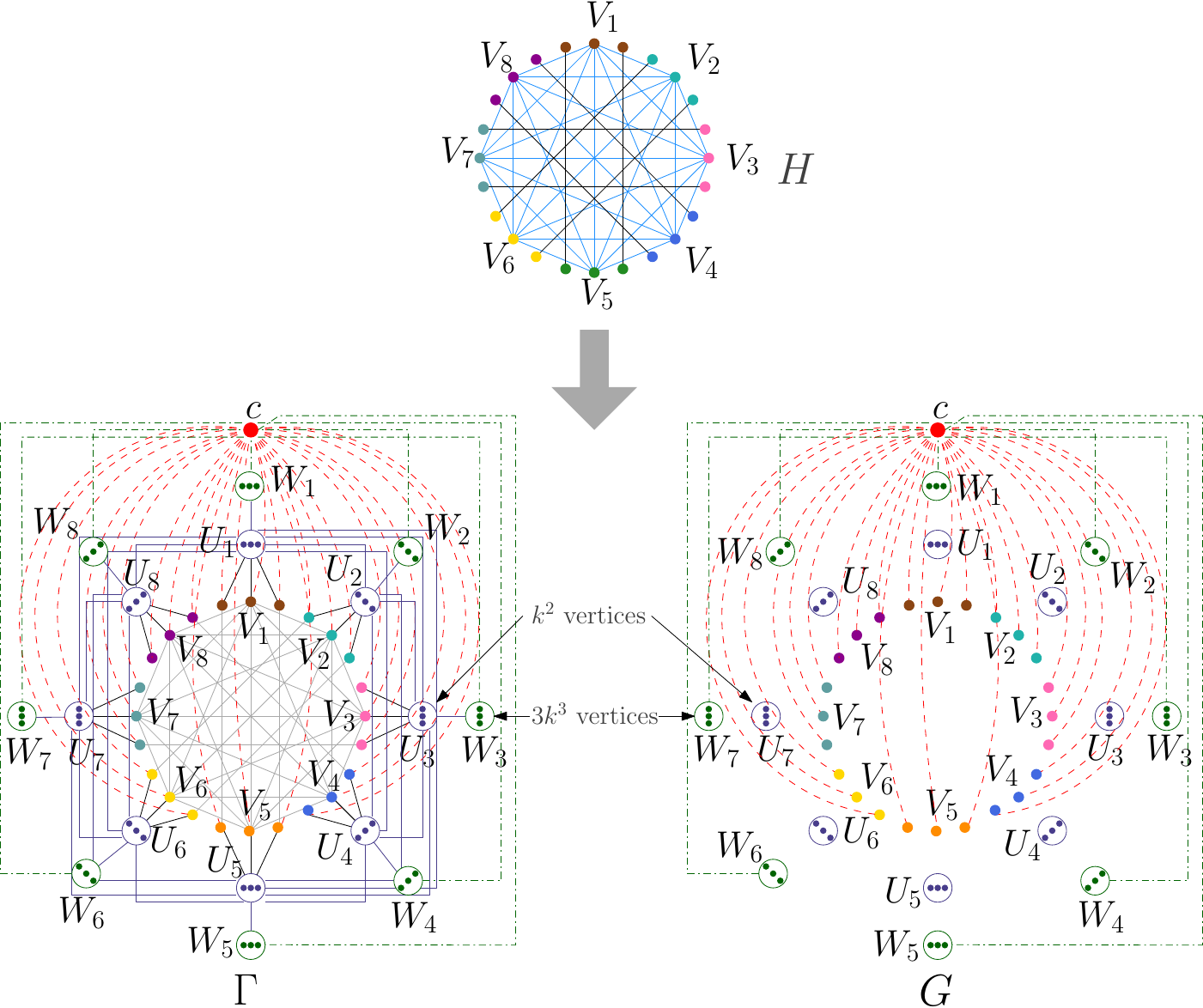}
				\caption{Hardness for $t=3$ when $G$ is a disjoint union of a star, and a set of isolated vertices.\label{fig:hard1}}
	\end{figure}
	
	From an instance $(H,k,\V)$ of {\sc Multicolored Clique} for $k\geq 2$, we construct an instance $ (G, \Gamma, k') $ of {\sc Dilation $3$-Augmentation} with the following way (see Figure \ref{fig:hard1}).
	
	\begin{itemize}
	   
		\item For every set $ V_i, i \in [k]$, we introduce a set $ U_i $ of $k^2$ many vertices and a set $W_i$ of $3k^3$ many vertices. 
		
		\item Construction of the graph $ \Gamma $: 
		
		\begin{enumerate}
		    		
			\item The vertex set in $\Gamma$ consists of vertex set $H_G\cup \{U_1\cup\cdots 	\cup U_k\} \cup \{W_1\cup\cdots 
			\cup W_k\}\cup\{c\}$ where $H_G=V(H)$.
			\medskip
			
			\item The edge set of $\Gamma$ is defined by $E(\Gamma)=E_H\cup E_U\cup E_W\cup E_c$; here $E_H=E(H)$, $E_U=\{(v,u)|v\in V_i,u\in U_i, i\in[k]\} \bigcup \{(v,u)|v\in U_i,u\in U_j, i,j\in[k], i \neq j\} $, $E_W=\{(u,w)|u\in U_i,w\in W_i, i\in[k]\}$ and $E_c=\{(c,v)|v\in H_G\cup W_1 \cup \ldots \cup W_k\}$.
		\end{enumerate}
		
		\medskip 
		\item Construction of the graph $ G$: The graph $G$ consists of the vertex set $V(\Gamma)$ and the edge set $E_c$. 
		
		\medskip
		\item We set $ k'= {k\choose 2}+k^3 $.
	\end{itemize}


	It is easy to see that in the constructed instance $ (G, \Gamma, k') $ the graph  $ G $ is indeed a disjoint union of a star and independent set. Also, the construction can be performed in time polynomial in $|V(H)|$ and $ k $. Towards the correctness of our reduction, we prove the following.

	\begin{lemma}\label{lem:reduction1}
		$ (H,k,\V) $ is a \yes-instance of {\sc Multicolored Clique} if and only if $ (G,\Gamma, k') $ is a \yes-instance of {\sc Dilation $3$-Augmentation}.
		\end{lemma}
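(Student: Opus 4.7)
For the forward direction, given a multicolored $k$-clique $\{v_1, \ldots, v_k\}$ in $H$ with $v_i \in V_i$, I will take
\[
S \;:=\; \{(v_i, v_j) : 1 \leq i < j \leq k\} \;\cup\; \{(v_i, u) : i \in [k],\; u \in U_i\},
\]
which has exactly $\binom{k}{2} + k \cdot k^2 = k'$ edges. By Lemma~\ref{lem:adjacent}, it is enough to verify that every adjacent conflict is resolved in $G+S$. In this instance the adjacent pairs that are in conflict are of three types: (a) $(v,u) \in V_i \times U_i$, resolved by $v-c-v_i-u$; (b) $(u_1, u_2) \in U_i \times U_j$ with $i \neq j$, resolved by $u_1-v_i-v_j-u_2$ using the clique edge $(v_i, v_j) \in S$ of weight $1$ (since $v_i v_j \in E(H)$); and (c) $(u, w) \in U_i \times W_i$, resolved by $u-v_i-c-w$. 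Each such path has weight exactly $3$, and all other adjacent pairs of $\Gamma$ are already within distance $2$ in $G$ via the star center $c$.

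For the reverse direction, let $S$ be any solution of size $k'$. The plan has three steps, exploiting the enforcer sets $|W_i| = 3k^3$ and the tight budget. \emph{Step 1:} Show that each $u \in U_i$ has an $S$-edge $(u, z_u)$ with $z_u \in V_i$. A case analysis shows that the only ``cheap'' single $S$-edge covering all $3k^3$ conflicts $(u,w)$ with $w \in W_i$ is one whose other endpoint lies in $\{c\} \cup V_i \cup W_i$; any other strategy requires $\Omega(k^3)$ extra $S$-edges incident to $u$, which is infeasible globally. The options $z_u \in \{c\} \cup W_i$ are then eliminated using the $(u_1, u_2) \in U_i \times U_j$ conflicts, since neither $c$ (reachable from $u$ only at weight $2$) nor any $w \in W_i$ can participate in a weight-$\leq 3$ path into $U_j$ without further unaffordable $S$-edges. \emph{Step 2:} Since each $u \in U$ contributes a distinct $S$-edge with the other endpoint outside $U$, we obtain $|S_{\bar U}| \leq k' - |U| = \binom{k}{2}$, where $S_{\bar U}$ is the set of $S$-edges with no endpoint in $U$. \emph{Step 3:} Each $(u_1, u_2) \in U_i \times U_j$ conflict is resolved by a weight-$3$ path $u_1 - z_{u_1} - z_{u_2} - u_2$ whose middle ``bridge'' edge $(z_{u_1}, z_{u_2})$ lies in $S_{\bar U}$ and has weight $1$, hence corresponds to an edge of $H$. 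A single bridge between $V_i$ and $V_j$ can resolve at most $|U_i| \cdot |U_j| = k^4$ such conflicts (and only if all $u \in U_i$ share the same $z_u$, and similarly for $U_j$). Since there are $\binom{k}{2}$ color-class pairs, each with $k^4$ conflicts, and only $\binom{k}{2}$ bridges available in $S_{\bar U}$, this tight accounting forces a unique canonical vertex $v_i^* \in V_i$ with $z_u = v_i^*$ for all $u \in U_i$, together with the edges $(v_i^*, v_j^*) \in E(H)$ for every $i < j$, so $\{v_1^*, \ldots, v_k^*\}$ is the required $k$-clique.

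The principal obstacle will be Step~1: ruling out ``mixed'' resolutions of $u$'s $W_i$-conflicts that combine direct $(u,w)$-edges with short chains through $U$-vertices. The size $|W_i| = 3k^3$ is calibrated precisely so that such mixed strategies, once summed over all $u \in U$, strictly exceed the global budget $k'$. Making this amortization precise and then dovetailing it with Steps~2 and~3, so that the residual $\binom{k}{2}$ budget is exactly absorbed by bridge edges between canonical $V_i$-vertices, is the delicate part of the argument.
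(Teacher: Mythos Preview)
Your forward direction matches the paper's. The divergence is in the reverse direction, where you aim for much stronger structural conclusions than needed, and the counting you propose does not close.

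The paper does \emph{not} attempt to show that every $u\in U_i$ has its $S$-edge into $V_i$, nor that all of $U_i$ points to a single $v_i^*$. Instead it proves only that each $u\in U_i$ has some $S$-edge leaving $\bigcup_j U_j$ (your Step~1(a), essentially), which already consumes $k^3$ edges; hence each $U_i$ contains at least $k^2-2\binom{k}{2}\ge 2$ vertices of $S$-degree exactly one. Picking one such $u_i$ per colour and analysing the single conflict $(u_i,u_j)$ then forces the unique neighbour $v_i$ of $u_i$ to lie in $V_i$ and $(v_i,v_j)\in E(H)$, yielding the clique directly. No uniformity and no tight counting are required.

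Your Step~3 has a genuine gap. The assertion ``each $(u_1,u_2)\in U_i\times U_j$ conflict is resolved by the path $u_1\!-\!z_{u_1}\!-\!z_{u_2}\!-\!u_2$'' presumes that $u_1$ and $u_2$ use only their canonical $z$-edge. But after spending $k^3$ edges on the canonical $(u,z_u)$'s, the remaining $\binom{k}{2}$ edges may be incident to $U$ rather than lying in $S_{\bar U}$; any $u$ touched by such an extra edge can resolve its $U_j$-conflicts along a different route. Consequently, the identity $\binom{k}{2}\cdot k^4$ conflicts $=\binom{k}{2}$ bridges $\times\, k^4$ is not valid, and you cannot conclude that all $z_u$ coincide. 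Restricting to the $\ge k$ ``pure'' vertices in each $U_i$ gives only $\ge\binom{k}{2}k^2$ bridge-forced conflicts against $\le\binom{k}{2}$ bridges, which is far from tight. Step~1 in its strong form (\,$z_u\in V_i$ for \emph{every} $u$\,) is likewise not established by the sketch you give; what the $W_i$-gadget actually yields is only $z_u\in\{c\}\cup V_i\cup W_i$. The fix is to abandon the uniformity goal: once you have one degree-one $u_i$ per colour with $z_{u_i}\in\{c\}\cup V_i\cup W_i$, the pairwise conflicts $(u_i,u_j)$ alone rule out $c$ and $W_i$ and produce the clique, exactly as the paper does.
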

	\begin{proof}
		$ (\Rightarrow) $In the forward direction, suppose that there is a multicolored clique $Q$ in $H$. 
		Let the set of vertices in $Q$ be $V_Q=\{v_1,\cdots,v_k\}$ where $v_i\in V_i$ and the edgeset of $Q$ be $E_Q$. Consider the set of edges $E^{VU}=\{(v_i,u)|i\in[k], u\in U_i\}$. Observe that $|E_Q|={k\choose 2}$ and $|E^{VU}|=k \cdot k^2$ as $ |U_i|=k^2 $ for each $ i \in [k] $. Next, we show that the dilation of $G+(E_Q\cup E^{VU})$ is at most $3$.	Notice that all the edges incident to $c$ in $ \Gamma $ are also present in $G$. Hence $c$ is not involved in any adjacent conflicts. Now we consider all other pairs of adjacent vertices in $ \Gamma $, say $ a $ and $ b $, and show that there exists a  path between them of length at most 3 in $G+(E_Q\cup E^{VU})$. This suffices our proof due to \cref{lem:adjacent}. We argue with the following cases.

		\begin{description}
			\item[{Case (i)}: $a\in V_i$ and $b\in V_j$.] There is a length $2$ path between them in $ G $ itself which is precisely $\langle a,c,b \rangle$.
			\medskip 
			\item[{Case (ii)}: $a\in V_i$ and $b\in U_i$.] If $a=v_i$ then $a$ and $b$ are adjacent in $ G+E^{VU} $. Otherwise, there is a length $3$ path between them in $ G+E^{VU} $ which is precisely $\langle a,c,v_i,b \rangle$.
			\medskip  
			\item[{Case (iii)}: $a\in U_i$ and $b\in U_j$.] If $i\neq j$, we use two edges from $E_Q\cup E^{VU}$. The vertices $a$ and $b$ are connected by a length $3$ path $\langle a,v_i,v_j,b \rangle$ in $G+(E_Q\cup E^{VU})$. Suppose that $i=j$. Then $\langle a,v_i,b \rangle$ is a path in $G+(E_Q\cup E^{VU})$ of length 2. 
            \medskip  
			\item[{Case (iv)}: $a\in U_i$ and $b\in W_i$.] In this case, we use an edge from $E^{VU}$. The vertices $a$ and $b$ are connected by a length $3$ path $\langle a,v_i,c,b \rangle$ in $G+E^{VU}$.
		\end{description}

		\noindent $ (\Leftarrow) $ In the backward direction, let $ E_X $ be a solution to the instance $  (G,\Gamma, k') $ of {\sc Dilation $3$-Augmentation}, that means dilation of $G+E_X$ is at most $3$.  
		%
        %
        We start with the following claim.
        
		\begin{claim}\label{claim:uvertex}
			For each $ i \in [k] $ and every $u\in U_i$, the set $ E_X $ contains an edge $(u,v)$ for $v\notin \bigcup_{j=1}^kU_j$.
		\end{claim}
		
		\begin{proof}
			The proof is by contradiction. Suppose that there are $i \in [k] $ and $u\in U_i$ such that for every $(u,v)\in E_X$, $v\in \bigcup_{j=1}^kU_j$.  
            Because $|E_X|\leq k'=\binom{k}{2}+k^3$ and $|W_i|=3k^3$, we have that $|W_i|-2|E_X|\geq 1$ and, therefore, there is $w\in W_i$ such that  $w$ is not incident to any edge of $E_X$. Thus, any shortest $(u,w)$-path in $G+E_X$ contains the edge $(w,c)$ and an edge $(u,v)$ for some $v\in \bigcup_{j=1}^kU_j$. However, the distance between $c$ and $v$ in $\Gamma$ is 2. This implies that the length of any $(u,w)$-path in $G+E_X$ is at least 4. Because $u$ and $w$ are adjacent in $\Gamma$, this contradicts that the dilation of $G+E_X$ is at most 3 and proves the claim.          
		\end{proof}

		\noindent Due to the above claim, summing over all $i \in [k]$, we have that the set $ E_X $ contains at least $ k \cdot k^2=k^3$ edges incident to the vertices of $\bigcup_{i=1}^k U_i $ whose second end-points are outside $\bigcup_{i=1}^k U_i $. Because $|E_X|\leq \binom{k}{2}+k^3$, we have that for each $i\in[k]$, there are at least $k^2-2\binom{k}{2}\geq k\geq 2$ vertices of $U_i$ that are adjacent to exactly one edge of $E_X$. 
        For $i\in[k]$, we denote by $U_i'\subseteq U_i$ the set of vertices incident to unique edges of $E_X$. 
        We make the following observation.

        \begin{claim}\label{claim:adj}
			For each $ i\in [k] $, there is $u\in U_i'$ such that $u$ is incident to a single edge $(u,v)\in E_X$ such that $(u,v)\in E(\Gamma)$.
		\end{claim}

        \begin{proof}
            Consider arbitrary $u\in U_i'$. This vertex has a unique neighbor $v$ in $G+E_X$ where $v\notin\bigcup_{j=1}^kU_i$ by Claim~\ref{claim:uvertex}. If $(u,v)\in E(\Gamma)$, the claim holds. Suppose that $(u,v)\notin E(\Gamma)$. Recall that $|U_i'|\geq 2$. Thus, there is $u'\in U_i'$ distinct from $u$. In the same way as above, $u'$ has a unique neighbor $v'$ in $G+E_X$ where $v'\notin\bigcup_{j=1}^kU_i$. Then any $(u,u')$-path in $H+E_X$ contains the edges $(u,v)$ and $(u',v')$. Since $(u,u')\in E(\Gamma)$ and the dilation of $G+E_X$ is at most 3, we have that the length of $(u',v)$ is one. Therefore, $(u',v')\in E(\Gamma)$. This concludes the proof.
        \end{proof}

        Using Claim~\ref{claim:adj}, for every $i\in[k]$, we denote by $u_i$ the vertex of $U_i'$ that is  incident to a single edge of $E_X\cap E(\Gamma)$ and we use $v_i$ to denote the second end-point of the edge. 

        \begin{claim}\label{claim:clique}
			$\{v_1,\ldots,v_k\}$ is a clique of $H$.
		\end{claim}

        \begin{proof}
        First, we show that $v_i\in V_i$ for each $i\in[k]$. Consider $i\in[k]$. Note that by Claims~\ref{claim:uvertex} and \ref{claim:adj} and the construction of $\Gamma$, either $v_i\in W_i$ or $v_i\in V_i$. Suppose that $v_i\in W_i$ and consider $u_j$ for $j\in[k]$ distinct from $i$. We have that $(u_i,u_j)\in E(\Gamma)$. Therefore, $G+E_X$ should have a $(u_i,u_j)$-path of length at most 3. Any 
        $(u_i,u_j)$-path contains the edges $(u_i,v_i)$ and $(u_j,v_j)$. Thus, a shortest $(u_i,u_j)$-path should contain $(v_i,v_j)\in E(\Gamma)$. However, since $v_i\in W_i$ and $v_j\in V_j\cup W_j$, we have no such an edge by the construction of $\Gamma$. This contradiction proves that $v_i\in V_i$.

        Finally, we show that for all distinct $i,j\in[k]$, $v_i$ and $v_j$ are adjacent in $H$. For this, we again observe that $G+E_X$ should have a $(u_i,u_j)$-path of length at most 3 that includes $(u_i,v_i)$ and $(u_j,v_j)$. This implies that $(v_i,v_j)\in E(\Gamma)$. Because $v_i\in V_i$ and $v_j\in V_j$, we have that $(v_i,v_j)$ is in $E_X$ and is an edge of $H$. This proves the claim. 
        \end{proof}

        Claim~\ref{claim:clique} completes the proof of the lemma.
        \end{proof}

	Hence, we have the following theorem.

	\begin{theorem}\label{theo-hardtree}
{\sc Dilation $3$-Augmentation} is \woh~ parameterized by $ k $  when $ G $ is  star forest. 
	\end{theorem}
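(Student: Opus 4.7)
The plan is to give a parameter-preserving reduction from {\sc Multicolored Clique}, which is known to be \woh{} when parameterized by the number of color classes $k$. Given an instance $(H,k,\V)$ with $\V=(V_1,\dots,V_k)$, I will build an equivalent instance $(G,\Gamma,k')$ of \dtai{3} where $G$ is a star forest and $k'=f(k)$. The guiding principle is that $G$ must do almost nothing on its own (so that all structural decisions are forced into the solution $S$) while $\Gamma$ encodes $H$ and contains ``gadget'' vertices whose adjacent conflicts can only be resolved if $S$ selects a multicolored clique.

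For the construction, the vertex set of $\Gamma$ will contain $V(H)$, one ``apex'' vertex $c$, and for each color class $V_i$ two auxiliary sets: a set $U_i$ of \emph{selector} vertices of size $\Theta(k^2)$ and a set $W_i$ of \emph{witness} vertices of size $\Theta(k^3)$. The edges of $\Gamma$ are: the edges of $H$; edges from every $u\in U_i$ to every vertex of $V_i$; all edges between $U_i$ and $U_j$ for $i\neq j$; edges from $U_i$ to $W_i$; and edges from $c$ to every vertex of $V(H)\cup W_1\cup\dots\cup W_k$. The graph $G$ contains only the star centered at $c$ with leaves $V(H)\cup W_1\cup\dots\cup W_k$; the sets $U_i$ are isolated in $G$, so $G$ is a disjoint union of a star and an independent set (hence a star forest). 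The budget is $k'=\binom{k}{2}+k^3$, matching $\binom{k}{2}$ edges of a clique plus $k\cdot k^2$ ``attachment'' edges that connect each $U_i$ to the chosen clique vertex.

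The forward direction is routine: from a clique $\{v_1,\dots,v_k\}$ in $H$ with $v_i\in V_i$, take $S$ to be the $\binom{k}{2}$ clique edges together with all edges $(v_i,u)$ for $u\in U_i$ and $i\in[k]$. By \cref{lem:adjacent} it suffices to check all adjacent conflicts in $\Gamma$: pairs inside $V(H)\cup\{c\}$ use the star $G$; pairs $(u,w)\in U_i\times W_i$ are routed as $\langle u,v_i,c,w\rangle$; and the critical pairs $(u,u')\in U_i\times U_j$ for $i\neq j$ are routed as $\langle u,v_i,v_j,u'\rangle$, which has length $3$ precisely because the clique edge $(v_i,v_j)$ was added.

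The real work lies in the backward direction, and this is where I expect the main difficulty. Given a hypothetical solution $E_X$ with $|E_X|\le k'$, I need to extract a multicolored clique. The leverage comes from the oversized witness sets $W_i$: since $|W_i|=3k^3>2|E_X|$, for each $i$ there must be witnesses $w\in W_i$ whose unique path of hop-length at most $3$ to a vertex $u\in U_i$ goes through $c$; but $c$ is at distance $2$ from $U_i$ in $\Gamma$, so any edge $(u,v)\in E_X$ incident to $u\in U_i$ with $v\in U_j$ cannot close a length-$3$ detour to $w$. This should force, by a counting argument using $|U_i|\ge 2\binom{k}{2}+2$, that for each $i$ there is $u_i\in U_i$ whose unique incident $E_X$-edge leaves $\bigcup_j U_j$. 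A second round of the same argument, applied to pairs $(u_i,u_j)\in U_i\times U_j$ with $i\neq j$, forces that the $E_X$-edge incident to $u_i$ is actually in $E(\Gamma)$ and goes to some $v_i\in V_i$; otherwise the length-$3$ route between $u_i$ and $u_j$ cannot be realized. The distance-$3$ constraint between $u_i$ and $u_j$ finally forces $(v_i,v_j)\in E(H)$, which yields the required clique. The delicate calibration is fixing the sizes $|U_i|=k^2$ and $|W_i|=3k^3$ and the budget $k'=\binom{k}{2}+k^3$ so that each counting step survives the slack in $E_X$; this concludes \woh{}-ness and proves \cref{theo-hardtree}.
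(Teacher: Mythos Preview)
Your proposal is correct and follows essentially the same reduction as the paper: identical gadget sets $U_i$ of size $k^2$, $W_i$ of size $3k^3$, apex $c$, the same edge set for $\Gamma$, the same star-plus-isolated-vertices $G$, and the same budget $k'=\binom{k}{2}+k^3$. Your forward direction and the three-step backward extraction (every $u\in U_i$ must have an $E_X$-edge leaving $\bigcup_j U_j$ via the untouched $W_i$ witnesses; counting yields $u_i\in U_i$ with a unique such edge, which must lie in $E(\Gamma)$; the $U_i$--$U_j$ adjacency forces $(v_i,v_j)\in E(H)$) mirror the paper's Claims~\ref{claim:uvertex}, \ref{claim:adj}, and \ref{claim:clique} almost verbatim.
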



\subsection{{\sf W[1]}-hardness of \dtai{3} when $\Gamma$ is  Star}\label{sec:stargen3}

We prove that \dtai{3} is \wth when $\Gamma$ is a Star graph by presenting a parameter preserving reduction from the {\sc Dominating Set}  problem.

Let $(H,k)$ be an instance of {\sc Dominating Set}. We create the instance $(G, \Gamma, k)$ of the \dtai{3} problem as follows. Graphs $\Gamma$ and $G$ are defined over the vertex sets $V(H) \cup \LR{c}$, where $c$ is a new vertex distinct from the vertices of $V(H)$. $\Gamma$ is a star with center $c$ and leaves $V(H)$. That is, $E(\Gamma) = \LR{ (c, v): v \in V(H) }$. Note that in the shortest path metric defined by $\Gamma$, $d_{\Gamma}(u, v) = 2$ for any $u, v \in V(H)$.

Next, the set of edges of $G$ is the same as that of $H$, i.e., $E(G)=E(H)$, which implies that $G$ is a disjoint union of $H$ and the singleton vertex $c$ (see Figure \ref{fig:stargen} for an illustration of the construction). Note that the weight of each edge in $G$ is $2$.

\begin{figure}[t]
	\centering
	\includegraphics[width=.8\textwidth]{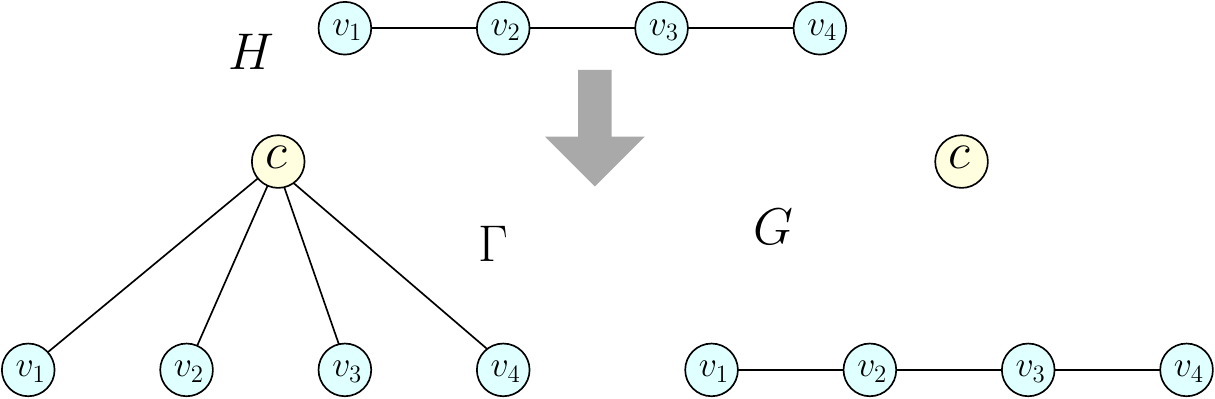}
	\caption{Hardness for $t=3$ when $\Gamma$ is a star.}
 \label{fig:stargen}
\end{figure}

\begin{observation}
    $(H,k)$ is an \yes-instance of {\sc Dominating Set} if and only if $(G, \Gamma, k)$ is an \yes-instance of \dtai{3}.
\end{observation}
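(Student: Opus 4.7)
The plan is to prove both directions of the equivalence by exploiting the very limited structure of $\Gamma$ (only the pairs $(c, v)$ with $v \in V(H)$ are adjacent in $\Gamma$) together with the fact that every edge whose endpoints lie in $V(H)$ carries weight $2$, while every edge incident to $c$ carries weight $1$. By \Cref{lem:adjacent}, it suffices to control $d_{G+S}(c, v)$ for every $v \in V(H)$; in particular, any useful $2$-edge path from $c$ has total weight exactly $3$ and must consist of one edge $(c, u)$ of weight $1$ followed by one edge $(u, v)$ of weight $2$, and no $3$-edge path from $c$ can have total weight $\leq 3$.

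For the forward direction, I would take a dominating set $D \subseteq V(H)$ of $H$ of size $k$ and set $S \coloneqq \{(c, v) : v \in D\}$. For every $v \in V(H)$: either $v \in D$, in which case $d_{G+S}(c, v) = 1$, or $v$ has a neighbor $u \in D$ in $H$, which gives a path $\langle c, u, v\rangle$ in $G + S$ of weighted length $1 + 2 = 3$. Since all other pairs of $V(H) \cup \{c\}$ are at $\Gamma$-distance $2$ and hence are not adjacent conflicts, \Cref{lem:adjacent} finishes this direction.

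For the backward direction, let $S$ be a solution of size at most $k$, and split it into $S_c = \{(c, w) \in S\}$ and $S_V = S \setminus S_c$. The main step is a normalization: I will transform $S$ into a solution $S'$ of size at most $k$ in which \emph{every} edge is incident to $c$. For each $(u, v) \in S_V$, the weight analysis above shows that $(u, v)$ can appear on a length-$\leq 3$ path from $c$ only as the second edge of a path $\langle c, u, v\rangle$ or $\langle c, v, u\rangle$; hence it can only help vertices $u$ or $v$ themselves. If neither $(c, u)$ nor $(c, v)$ lies in $S_c$, the edge is useless and I just delete it; otherwise, assuming WLOG that $(c, u) \in S_c$, I replace $(u, v)$ by $(c, v)$ (or simply delete it if $(c, v)$ is already in $S_c$). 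Each such swap preserves the property that $d_{G + S}(c, w) \leq 3$ for every $w \in V(H)$ and does not increase $|S|$.

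Now let $D \coloneqq \{w \in V(H) : (c, w) \in S'\}$, so $|D| \leq k$. To show $D$ dominates $H$, fix any $v \in V(H)$; since $(c, v)$ is an adjacent conflict of weight $1$ in $\Gamma$ and every edge of $S'$ is incident to $c$, any $(c, v)$-path in $G + S'$ has the form $\langle c, u_1, u_2, \ldots, u_\ell, v\rangle$ with $(c, u_1) \in S'$ and the remaining edges in $E(G) = E(H)$ of weight $2$. The bound $d_{G + S'}(c, v) \leq 3$ forces $\ell \leq 1$, so either $v \in D$ or $v$ has a neighbor $u_1 \in D$ in $H$, i.e., $D$ is a dominating set. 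The only mildly subtle point is the normalization argument—carefully checking that each swap really preserves dilation $3$ on every affected adjacent conflict—but this follows cleanly from the weighted $2$-edge characterization noted above.
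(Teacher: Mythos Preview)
Your proof is correct and follows essentially the same approach as the paper: both directions use the set $\{(c,v):v\in D\}$ for the forward direction, and for the backward direction both normalize the solution so that the neighbors of $c$ form a dominating set. Your normalization is slightly stronger than the paper's (you delete \emph{all} edges inside $V(H)$, whereas the paper only swaps out those with one endpoint already joined to $c$), which lets you finish with a direct argument rather than the paper's contradiction; you also cleanly invoke \Cref{lem:adjacent} to skip the non-adjacent pairs in the forward direction, which the paper verifies explicitly.
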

\begin{proof}
    Let $D\subseteq V(H)$ be a dominating set in $H$. Let $E_D \coloneqq \{(c,v) : v \in D\}$. We claim that the dilation of $G+E_D$ is at most $3$. 

    For any vertex $v\in V(H)$, $d_{\Gamma}(c,v)=1$. If $v \in D$, then $d_{G+E_D}(c, v) = 1$. Otherwise, if $v \not\in D$, then there exits some $u \in D$ such that $(u, v) \in E(H) = E(G)$. Further, since $(u, c) \in E_D$, it follows that $d_{G+E_D}(v, c) = 2$.

    Now consider any pair of vertices $u, v \in V(H)$. Note that $d_{\Gamma}(u, v) = 2$. We want to show that $d_{G+E_D}(u, v) \le 6$. Since $D$ is a dominating set, there exists some $u' \in N_H[u] \cap D = N_G[u] \cap D$ (resp. $v' \in N_H[v] \cap D = N_G[v] \cap D$). Note that $u'$ may be same as $u$ (resp. $v'$ may be same as $v$). In any case, $(u', c), (v', c) \in E_D$. This implies that, $d_{G+E_D}(u, v) \le d_{G}(u, u') + d_{\Gamma}(u', c) + d_{\Gamma}(c, v') + d_{G}(v', v) \le 2 + 1 + 1 + 2 = 6$. 

    In the reverse direction, suppose $(G, \Gamma, k)$ is a \yes-instance of \dtai{3}, and let $E_C$ be a minimal solution. Suppose there exists some edge $(u, v) \in E_C$ such that $u, v \in V(H)$, and $(u, c) \in E_C$. Then note that $d_{\Gamma}(u, v) = 2$. Now consider an alternate solution $E'_C = E_C \setminus \LR{ (u, v) } \cup \LR{ (v, c) }$. In this new solution, $d_{G+E'_C}(u, v) = 2$, and any path in $G+E_C$ that used the edge $(u, v)$ can be replaced by the subpath $\langle u, c, v \rangle$ of the same length. By performing this kind of replacements for each edge $(u, v) \in E_C$, we obtain a solution, say $E'_C$ with the property that if $(u, v) \in E'_C$, then $(u, c) \not\in E'_C$ and $(v, c) \not\in E'_C$.

    Let $V_C$ be the set of neighbours of $c$ in $G+E'_C$. Observe that as $c$ is an isolated vertex in $G$, and we have added at most $k$ edges of $E'_C$, $|V_C|\leq k$. We claim that $V_C$ is a dominating set of $H$. Suppose there exists some $u \in V(H)$ such that $N_H[u] \cap V_c = \emptyset$. Then, for each $v \in V_c$, it holds that $(u, v) \not\in E'_C$ -- otherwise, we would have edges $(u, v) \in E'_C$ and $(v, c) \in E'_C$, a case handled by the replacement procedure. Therefore, any path in $G+E'_C$ between $u$ and $c$ must use at least two edges of $E(G)$, plus an edge of the form $(w, c)$ for some $w \in V_C$. Hence, such a path has length at least $5$. However, this implies that $d_{G+E'_C}(u, c) \ge 5 > 3 = 3 \cdot d_{\Gamma}(u, c)$, which is a contradiction. Hence, $V_C$ is a dominating set for $H$ of size at most $k$.
\end{proof}
Hence we have the following theorem.

\begin{theorem}
    \dtai{3} is \wth~parameterized by $k$ when $\Gamma$ is  star.
\end{theorem}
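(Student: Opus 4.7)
The plan is to obtain \wth{} as an immediate consequence of the preceding observation together with the well-known parameterized hardness of \textsc{Dominating Set}. First, I would verify that the construction of $(G, \Gamma, k)$ from an instance $(H, k)$ of \textsc{Dominating Set} is polynomial: the graph $\Gamma$ is a star on $|V(H)| + 1$ vertices, and $G$ is simply $H$ together with an isolated vertex $c$, with each edge of $G$ carrying weight $2$ inherited from the shortest-path metric of $\Gamma$. Both graphs are built in time linear in $|V(H)| + |E(H)|$.

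Next, I would note that this reduction preserves the parameter verbatim, since the budget in the constructed \dtai{3} instance is exactly the target dominating-set size $k$. Combining this with the classical result that \textsc{Dominating Set} is \wth{} when parameterized by the solution size, and with the equivalence established in the preceding observation, one concludes that \dtai{3} restricted to star metrics is \wth{} parameterized by $k$.

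There is essentially no remaining obstacle beyond what the observation already handles. The delicate step in its proof---which I would simply invoke---is the rewriting argument in the backward direction: whenever a minimal solution $E_C$ contains an edge $(u, v)$ with $u, v \in V(H)$ together with the pendant edge $(u, c)$, the edge $(u, v)$ may be swapped out for $(v, c)$ without increasing the dilation, because $\langle u, c, v \rangle$ already has length $2 = d_\Gamma(u,v)$ in $G + E_C$ and any path through $(u,v)$ can be rerouted through $c$. After exhaustively applying this swap one reads off a dominating set of $H$ as the set of $G$-neighbours of $c$ in the augmented graph, of size at most $k$, which together with the straightforward forward direction completes the equivalence the theorem relies on.
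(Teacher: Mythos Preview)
Your proposal is correct and matches the paper's approach exactly: the theorem is deduced immediately from the preceding observation (the equivalence between the \textsc{Dominating Set} instance and the constructed \dtai{3} instance) together with the known \wth{}ness of \textsc{Dominating Set}, noting that the reduction is polynomial-time and parameter-preserving. Your summary of the swap argument in the backward direction of the observation is also faithful to the paper's reasoning.
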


\subsection{\dtwoa when $\Gamma$ is Tree}\label{sec:treegen2}

We prove that the \textsc{Dilation $2$-Augmentation} problem is polynomial-time solvable when $\Gamma$ is a tree. Consider any instance of \textsc{Dilation $2$-Augmentation} problem $\I=(G, \Gamma, k)$ where $\Gamma$ is a tree. For any solution $S$ for $\I$ we have the following claim.

\begin{observation} \label{obs:gammaTree}
    For any solution $S$ for $\I$, $E(\Gamma)\subseteq E(G+S)$.
\end{observation}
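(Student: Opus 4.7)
The plan is to fix an arbitrary edge $(u,v)\in E(\Gamma)$ and argue that the only way to satisfy the dilation-$2$ requirement for the pair $(u,v)$ is to have $(u,v)$ itself as an edge of $G+S$. Since $(u,v)\in E(\Gamma)$, we have $d_\Gamma(u,v)=1$, and so the edge $(u,v)$, if it appears in $G$ or $S$, carries weight $1$. Because $S$ is a solution, there must be a $(u,v)$-path in $G+S$ of weighted length at most $2\cdot d_\Gamma(u,v)=2$.

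Next, I would use the basic fact that for every pair $(x,y)$, the weight of the edge $(x,y)$ in $G+S$ equals $d_\Gamma(x,y)\geq 1$, so every edge in $G+S$ has weight at least $1$. Therefore a $(u,v)$-path of weighted length at most $2$ in $G+S$ is either (i) the single edge $(u,v)$ itself, or (ii) a two-edge path $\langle u,w,v\rangle$ whose two edges each have weight exactly $1$. In case (i), we are done since $(u,v)\in E(G+S)$.

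The main (and only non-trivial) step is ruling out case (ii). In that case both edge weights equal $1$, which by the definition of the weight function means $d_\Gamma(u,w)=d_\Gamma(w,v)=1$, i.e., $w$ is adjacent to both $u$ and $v$ in $\Gamma$. Combined with $(u,v)\in E(\Gamma)$, this yields the triangle $u,v,w$ in $\Gamma$, contradicting the hypothesis that $\Gamma$ is a tree (hence acyclic). Thus case (ii) is impossible, so $(u,v)\in E(G+S)$, and since this holds for every $(u,v)\in E(\Gamma)$, the observation follows. No real obstacle arises; the acyclicity of $\Gamma$ does all the work once one observes the lower bound of $1$ on edge weights.
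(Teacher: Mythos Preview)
Your proof is correct and follows essentially the same approach as the paper's own argument: both reduce to the fact that a two-edge $(u,v)$-path of total weight $2$ would force $d_\Gamma(u,w)=d_\Gamma(w,v)=1$ and hence a triangle $u,v,w$ in the tree $\Gamma$. The only cosmetic difference is that the paper phrases it as a proof by contradiction (assuming $(u,v)\notin E(G+S)$), whereas you give a direct case split on the form of the short path; the content is identical.
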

\begin{proof}
    Suppose for the sake of contradiction that there exists an edge $(u,v)\in E(\Gamma)$, but $(u, v) \not\in E(G+S)$. Observe that $d_{\Gamma}(u,v)=1$, and since $S$ is a solution, $d_{G+S}(u,v)\leq 2$. First, $d_{G+S}(u, v) \neq 1$, since the only way to achieve this is by the direct edge $(u, v)$, which, by assumption, does not exist in $G+S$. Thus, $d_{G+S}(u, v) = 2$. Consider the shortest path $\langle u, w, v \rangle$ of hop and length $2$, going via some other vertex $w$. Therefore, $d_{G+S}(u, w) = d_{\Gamma}(u, w) = 1$, and $d_{G+S}(w, v) = d_{\Gamma}(w, v) = 1$. This implies that the edges $(u, w)$ and $(v, w)$ are also present in $\Gamma$. However, these two edges, along with the edge $(u, v)$ induce a cycle in $\Gamma$, which is a contradiction.   
\end{proof}

Given an instance $(G, \Gamma, k)$ of \dtwoa where $\Gamma$ is tree, let Let $A = E(G) \setminus E(\Gamma)$. Observation~\ref{obs:gammaTree} implies that any solution must contain all the edges of $E(\Gamma)$.  hus, if $k < |A|$, then we say \no. Otherwise, we output $G+A$. Note that $G+A$ contains all the edges of $\Gamma$, and hence has dilation $1$. 
Thus we have the following theorem.
\begin{theorem}
    \dtwoa is polynomial-time solvable when $\Gamma$ is   tree.
\end{theorem}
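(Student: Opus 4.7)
The plan is to use Observation~\ref{obs:gammaTree} as the sole structural ingredient: it says that for any feasible solution $S$, the tree $\Gamma$ is a subgraph of $G+S$. Since $E(G+S) = E(G) \cup S$, this forces every edge of $A := E(\Gamma) \setminus E(G)$ to lie in $S$. So the first step of the algorithm is to compute $A$ and return \no if $|A| > k$; by the above, no set of at most $k$ edges can satisfy the necessary containment.

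Next, assuming $|A| \le k$, I would claim that $S := A$ itself is a valid dilation-$2$ augmentation. The verification is immediate: for any adjacent pair $(u,v) \in E(\Gamma)$, the edge $(u,v)$ is now present in $G + A$ with weight $d_\Gamma(u,v) = 1$, hence $d_{G+A}(u,v) \le 1 = d_\Gamma(u,v) \le 2 \cdot d_\Gamma(u,v)$, so $(u,v)$ is not in adjacent conflict. Invoking Lemma~\ref{lem:adjacent}, adjacent conflict-freeness upgrades to full conflict-freeness, so the dilation of $G+A$ is at most $2$. Combined with $|A| \le k$, this shows $(G,\Gamma,k)$ is a \yes-instance and exhibits a solution.

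There is no real obstacle here: the acyclicity of $\Gamma$ already does all the work inside Observation~\ref{obs:gammaTree} (a missing $\Gamma$-edge $(u,v)$ replaced by a length-$2$ detour $\langle u,w,v\rangle$ in $G+S$ would produce a triangle $u,v,w$ in $\Gamma$, which is impossible). The remaining argument is bookkeeping. The running time is polynomial since the algorithm only computes the set difference $E(\Gamma)\setminus E(G)$, compares its size to $k$, and outputs $A$.
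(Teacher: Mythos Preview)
Your proposal is correct and follows essentially the same approach as the paper: use Observation~\ref{obs:gammaTree} to force $A := E(\Gamma)\setminus E(G)$ into every solution, reject if $|A|>k$, and otherwise output $A$. The only cosmetic difference is that the paper notes $G+A$ actually has dilation~$1$ (since $\Gamma$ is a subgraph of $G+A$), whereas you verify dilation at most~$2$ via adjacent conflict-freeness and Lemma~\ref{lem:adjacent}; both verifications are immediate.
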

     \section{{\sf W[2]}-hardness of \dtai{(2+\epsilon)} when $\Gamma$ is Weighted} \label{Sec:hard_weight}

An edge-weighted graph is said to be $(1,w)$-weighted graph if the weight of each edge is either $1$ or $w$. We show in the earlier section that when either $ G $ or $ \Gamma $ is of bounded degree $ \Delta $ then \textsc{Dilation $t$-Augmentation}  admits   \fpt~algorithm parameterized by $ k+t+\Delta $. This section considers the case when $ \Gamma  $ is a weighted graph and $ G $ is of bounded degree $ \Delta $. Here we show this problem is unlikely to admit \fpt~algorithm parameterized by $ k+t+\Delta $. More strongly it admits no XP algorithm parameterized by $ t+\Delta $~even when $\Gamma$ is an $(1,w)$-weighted graph. More formally, we prove the following theorem.

\begin{theorem}\label{theo:weighted}
For any $ \epsilon \in (0,1) $, the {\sc Dilation $(2+\epsilon)$-Augmentation} parameterized by  $ k $ is \wth~when $\Gamma$ is  $(1,w)$-weighted and $ G $ is  subcubic. 
\end{theorem}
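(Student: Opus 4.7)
The plan is to prove \wth~via a parameter-preserving reduction from \textsc{Set Cover}, which is W[2]-hard parameterized by the number of chosen sets. Given an instance $(U,\mathcal{F},k)$ of \textsc{Set Cover}, I would build an instance $(G,\Gamma,k')$ of \textsc{Dilation $(2+\epsilon)$-Augmentation} with $k'=k$, where $G$ is subcubic and $\Gamma$ is $(1,w)$-weighted for some $w=w(\epsilon)$ chosen below. Qualitatively, I want each edge of a solution $S$ to correspond to ``activating'' one set $F\in\mathcal{F}$, in such a way that activation of $F$ resolves exactly the conflicts associated to the elements $u\in F$.

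The construction uses two families of gadgets. For each set $F\in\mathcal{F}$ introduce a \emph{selector path} $Y_F=y_F^1y_F^2\cdots y_F^{|F|}$ that is a path in $G$ (so every $y_F^i$ has degree at most $2$ in $G$) with one free slot per element of $F$; in $\Gamma$ the consecutive edges of $Y_F$ carry weight $w$, making the whole path effectively ``one point'' up to $(2+\epsilon)$-factor distances. For each element $u\in U$ introduce a short \emph{cherry} gadget $C_u$ containing two designated vertices $u^-,u^+$ adjacent in $\Gamma$ by a weight-$1$ edge (hence producing an adjacent conflict). The cherry is connected in $G$, via a subcubic ``bus'' subgraph, to exactly the slots $y_F^{\mathrm{idx}_F(u)}$ for $F\ni u$, in such a way that (a) every cherry/selector vertex has $G$-degree at most $3$, (b) the pure-$G$ distance between $u^-$ and $u^+$ exceeds $2+\epsilon$, and (c) a single edge added from $u^-$ (or symmetrically $u^+$) to any slot $y_F^{\mathrm{idx}_F(u)}$ yields a weighted $u^-u^+$ path of length at most $2+\epsilon$ in $G+S$. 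The parameter is $k'=k$.

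Correctness would be proved in the usual two directions. Forward: a cover $F_1,\dots,F_k$ of $U$ lifts to a solution $S$ consisting of $k$ edges, one per chosen $F_j$, placed between an endpoint of $Y_{F_j}$ and an appropriate cherry vertex; because $\Gamma$ makes $Y_{F_j}$ internally of weight-$w$ edges while the cherry detour is short, the activated slot reaches every $u\in F_j$ within the $(2+\epsilon)$-budget, by Lemma~\ref{lem:adjacent}. Reverse: starting from any solution $S$ of size at most $k$, a minimality rewriting (analogous to the one used for $\Gamma$ being a star in Section~\ref{sec:stargen3}) shows that each edge of $S$ can be assumed to be incident to some selector path $Y_F$; the set of selector paths receiving at least one edge must form a set cover, for otherwise some cherry $C_u$ remains in conflict.

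The main obstacle is reconciling the subcubic bound on $G$ with the fact that a natural selector vertex for $F$ would need degree $|F|$. The selector \emph{path} $Y_F$ resolves this, but it only works because the internal edges of $Y_F$ can be made short in $\Gamma$ via weight $w$; this is exactly where the $(1,w)$-weighted assumption on $\Gamma$ is used. The delicate point is to choose $w$ and the cherry-detour lengths uniformly in $\epsilon\in(0,1)$, so that the slack $(1+\epsilon)\cdot w$ accommodates an activated selector while no spurious shortcut between two different selectors or between unrelated cherries brings its $(2+\epsilon)$-budget inside the allowed range. I expect this arithmetic tuning—together with verifying that cross-gadget edges can be laid out without ever exceeding $G$-degree $3$—to be the most technical step of the proof.
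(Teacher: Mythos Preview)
Your plan has a structural gap that arithmetic tuning alone cannot close. You place each element's conflict on a weight-$1$ edge of $\Gamma$, so the budget for resolving $(u^-,u^+)$ in $G+S$ is only $2+\epsilon<3$. Every edge of $G+S$ has weight at least $\min(1,w)$. If $w\ge 1$, any resolving path for $(u^-,u^+)$ has at most two edges, and hence every solution edge on that path is incident to $u^-$ or to $u^+$; a single solution edge can then serve at most one cherry, so $k$ edges resolve at most $k$ element conflicts and no set-cover behaviour survives. If instead $w<1$, the selector path $Y_F$ and the bus you place inside $G$ already give a $u^-$--$u^+$ walk of weight $O((\text{gadget diameter})\cdot w)$; for the pair to be in conflict you must make this exceed $2+\epsilon$, but then a single added edge incident to $Y_F$ saves at most one bus traversal and the same lower bound persists for every other $u\in F$. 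Your description is in fact internally inconsistent on this point: property~(c) says the edge to add is $(u^-,y_F^{\mathrm{idx}_F(u)})$, i.e.\ one edge \emph{per element}, whereas your forward direction adds one edge \emph{per chosen set}, incident to an endpoint of $Y_{F_j}$. These two cannot both hold, and the first one is the only one compatible with the $2+\epsilon$ budget.

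The paper's proof inverts the role of $w$. It reduces from \textsc{Diameter-$2$ Augmentation}, takes $w=3n/(2\epsilon)$, and puts the \emph{conflicts} on weight-$w$ edges, so the budget becomes $(2+\epsilon)w$ and the slack $\epsilon w=\Theta(n)$ can absorb long weight-$1$ routing detours. Concretely, each vertex $v_j$ of the source graph $H$ is blown up into a weight-$1$ cycle of length $n$ in $G$; every cross-cycle pair is at $\Gamma$-distance $w$, and the only cross-cycle edges present in $G$ are the copies $(v_i^j,v_j^i)$ of edges of $H$. Subcubicity is then free (two cycle neighbours plus at most one cross edge per vertex), and a diameter-$2$ path $v_j\text{--}v_\ell\text{--}v_{j'}$ in $H+S$ lifts to a $G+S'$-path of weight at most $2w+\tfrac{3n}{2}=(2+\epsilon)w$ between any two vertices in cycles $j$ and $j'$. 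The idea you are missing is exactly this direction of the weight asymmetry: make the conflict distance $w$ large so that $\epsilon w$ can pay for an unbounded amount of weight-$1$ routing, rather than trying to shrink the routing edges against a fixed budget of $2+\epsilon$.
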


\begin{proof}
Towards proving that the 	weighted dilation is \wth, we give a polynomial-time parameter preserving reduction 
from {\sc Diameter-$2$ Augmentation}  parameterized by solution size to \textsc{Dilation $(2+\epsilon)$-Augmentation} problem.
 In {\sc Diameter-$2$ Augmentation}, we are given a graph $H=(V,E)$, and a positive integer $k$, the aim is to decide whether there exists a set $E_1\subseteq E$ of size at most $k$ such that the graph $H_1 = ( V, E \cap E_1 )$ has diameter 2.
  It is well known that {\sc Diameter-$2$ Augmentation} is \wth~parameterized by the solution size $k$ \cite{DBLP:journals/dam/GaoHN13}. Consider an instance $(H,k)$ of {\sc Diameter-$2$ Augmentation}. First, we describe the construction of an instance $(G,\Gamma,k)$ of \dla. Let $V(H)=\{v_1,\dots,v_n\}$. To construct the instance  $(G,\Gamma,k)$ we apply the following procedure (see Figure \ref{fig:wighted} for an illustration of the construction). 

\begin{enumerate}
	\item We set $ w= \frac{3n}{2 \epsilon} $.
	\medskip
	\item 	We create $\Gamma$ with vertex set $\{v_i^j~|~ i,j \in [n]\}$. The edge set of $\Gamma$ consists of  $ E_1, E_2, E_3;$ here  $ E_1 = \{(v_i^j, v_{i'}^{j'})~|~  i,j,i',j' \in [n], j \neq j'\}, E_2= \{(v_i^j, v_{i+1}^{j})~|~ i \in [n-1],  j \in [n]\}, $ and $ E_3= \{(v_1^j, v_n^{j})~|~j \in [n]\}$. In each edge of $ E_1 $, we assign the weight $ w $. We put the weight one to all other edges in $ \Gamma $.
	\medskip 
	
	\item We construct $G$ with vertex set $V(\Gamma)$ and edge set $E_2, E_3, E_4;$ here $ E_4= \{ (v_i^j, v_j^i)~|~ (v_i, v_j) \in E(H) \} $.
\end{enumerate}

\begin{figure}[t]
	\centering
	\includegraphics[width=.8\textwidth]{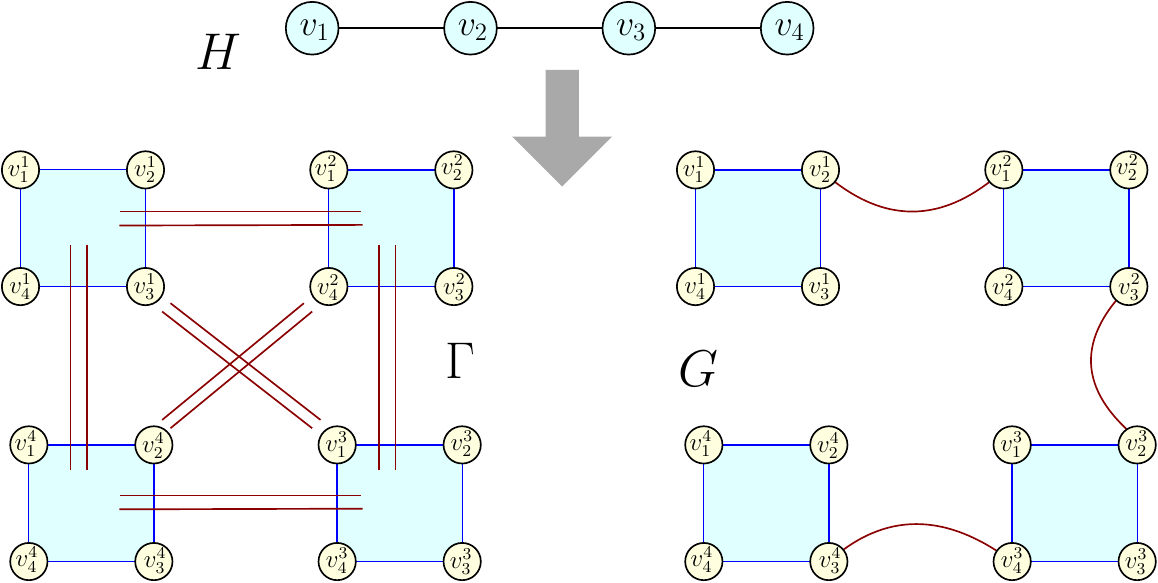}
	\caption{Hardness for $t=(2+\epsilon)$ when $\Gamma$ is an weighted graph and $G$ is a subcubic graph.}
 \label{fig:wighted}
\end{figure}

This completes the description of the instance $(G,\Gamma,k)$ with $ t=  \frac{3n}{2 \epsilon}$.  The number of vertices in $ \Gamma $ is $ n^2$,  and the construction can be done in polynomial-time. Now we give the correctness of our reduction. First, we show that the graph $G$ is a subcubic graph.

\begin{claim}\label{claim:deg3}
	The degree of each vertex in $ G $ is bounded by $3$.
	\end{claim}

\begin{proof}
	Let $ u = v_i^j $ be a vertex in $ G $ where $ i, j \in [n] $. As per construction it must have two neighbours in $ G $ which are  $ v_{i-1}^j $ and  $ v_{i+1}^j $(or $ v_1^j $). As  $ u $ is in incident to at most one edge in $ E_3 $, hence the claim.
	\end{proof}

Towards the correctness of our reduction, we prove the following. 

\begin{lemma}\label{lem:reduction}
$ (H,k) $ is a \yes-instance of {\sc Diameter-$2$ Augmentation} if and only if $(G,\Gamma,k)$ is a \yes-instance of \dla.
\end{lemma}

\begin{proof}
	In the forward direction, suppose that there exists a set $S \subseteq E(H)$ of size at most $ k $ such that the diameter of the graph $ H+S $ becomes two.  Let $ S' $ be the set of edges in $ \Gamma $ defined by $ S'= \{(v_i^j, v_j^i)~|~(v_i,v_j) \in S \}$. We show that for each pair of vertices $ x, y$ in the graph  $G  $ we have $d_{G+S'}(x,y)\leq (2+\epsilon)\cdot d_{\Gamma}(x,y)$. If $x= v_i^j$ and $ y=v_{i'}^j $ for some $ i,j,j' \in [n]$ then we 
	we already have $d_{G} (v_i^j, v_{i'}^j)=  d_{\gamma} (v_i^j, v_{i'}^j) $. So they are not at all in conflict in $ G $. Now if $x= v_i^j$ and $ y=v_{i'}^{j'} $ for some $ i,j,i',j' \in [n]$ where $ j \neq j' $ we have $ d_{\Gamma} (x,y) =w $ and $ d_{G+S'}(x,y) \leq 2w+ \frac{3n}{2}= 2w+\epsilon w= (2+\epsilon)\cdot w= (2+\epsilon) \cdot d_{\Gamma} (x,y) $.

	In the backward direction, let $ T $ be set of at most $ k $ edges in $ \Gamma $ such that for each pair of vertices $ x, y$ in the graph  $G  $ we have $d_{G+T}(x,y)\leq (2+\epsilon)\cdot d_{\Gamma}(x,y)$. Let $ T'\subseteq T $ denotes the set of those edges in $ T $ which are of the form $(v_i^j, v_{i'}^{j'}) $ for some $ i,j,i',j' \in [n]$ where $ j \neq j' $. We now define a set $ T'' $ of edges in $ H $ by  $ T''= \{(v_j, v_{j'})~|~ (v_i^j, v_{i'}^{j'}) \in T'  \} $. Now we show that the diameter of the graph $ H+T'' $ becomes two. Consider an arbitrary pair of distinct vertices  $ v_i $ and $ v_j $ in $ H $. If $ d_{H+T''}(v_i, v_j) = 2$ then we are done. Else $ d_{H+T''}(v_i, v_j) \geq  3$. That means there is no $ v_{i'} $ such that both the edges $ (v_i, v_{i'}) $ and $ (v_{i'}, v_j) $ are in $ H+T'' $. So by our construction of $ T'' $, there is no 	$ z \in [n] $ such that we have both the edges $ (v^i_a, v^z_b) $ and $ (v^z_p, v^j_q) $ are in $ E(G) \cup T' $ where $ a,b,p,q \in [n] $. That means $d_{G+T}(v^i_a,v^j_q) > 3w$ which is a contradiction due to the fact that $d_{G+T}(v^i_a,v^j_q) \leq  (2+\epsilon)\cdot w$ and $ \epsilon \in (0,1) $. \end{proof}
	
	 \cref{claim:deg3} and \cref{lem:reduction} together complete the proof of Theorem \ref{theo:weighted}. \end{proof}

  \section{Hardness of \dta in Arbitrary Metrics}

 \subsection{{\sf NP}-hardness of \dtai{2} when $G$ is Edgeless} \label{sec:genempty}
 

We prove that the \textsc{Dilation $2$-Augmentation} problem is \nph. when $G$ is an empty graph(edgeless).   Toward this we give a reduction from the {\sc $t$–Spanner} problem. Given a connected undirected unweighted graph $H$ and  a positive integers $k$, the {\sc $t$–Spanner} problem asks  to find a subset $E' \subseteq E(H)$ of size at most $k$ such that for each pair of vertices $u,v\in V(H)$ we have  $d_{H'}(u,v)\leq t\cdot d_{H}(u,v)$ where $H' = (V(H), E')$. The  {\sc $t$–Spanner} problem is known to be \nph even for $t=2$~\cite{PelegS89}.


Given an instance $(H,k)$ of {\sc $2$–Spanner} problem we construct an instance $(G,\Gamma, k')$ of \textsc{Dilation $2$-Augmentation} as follows. We set the graph $H$ as $\Gamma$. The graph $G$ is a subgraph of $\Gamma$ with $V(G)=V(\Gamma)$  and $E(G)=\emptyset$. This completes the description of the instance $(G,\Gamma,k')$ with $ k'=k$ (see Figure \ref{fig:edgeless} for an illustration of the construction).   It is easy to observe that this  construction can be done in time polynomial 
 in $V(H)$. Now we give the correctness of our reduction.

\begin{figure}[ht!]
	\centering
	\includegraphics[width=.8\textwidth]{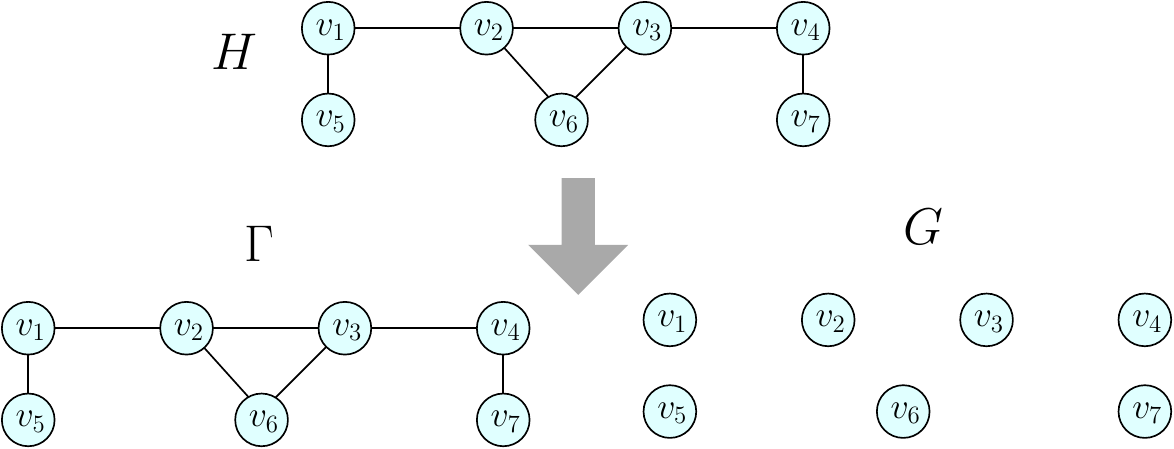}
	\caption{Hardness for $t=2$ when $G$ is an edgeless  graph.}
 \label{fig:edgeless}
\end{figure}



\begin{lemma}
  $(H,k)$ is a \yes-instance of {\sc $2$–Spanner} if and only if  $(G,\Gamma, k')$ is a \yes-instance of  {\sc Dilation $2$-Augmentation}.
\end{lemma}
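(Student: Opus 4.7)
The plan is to prove both directions by a direct translation, exploiting that $\Gamma = H$ is unweighted and $G$ has no edges, so every edge of any augmentation $S$ has weight in $G+S$ equal to its shortest path distance $d_H$ in $H$.

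For the forward direction, I would start with a $2$-spanner $E' \subseteq E(H)$ of size at most $k$ and take $S = E'$ as the augmentation. Since each edge $(u,v) \in E' \subseteq E(H)$ satisfies $d_{\Gamma}(u,v) = 1$, every edge of $S$ has weight $1$ in $G+S$, so weighted distances in $G+S$ coincide exactly with hop distances in the spanner $H' = (V(H), E')$. By Lemma~\ref{lem:adjacent}, it suffices to verify the dilation condition for adjacent conflicts, i.e., pairs $(u,v) \in E(\Gamma) = E(H)$. For such a pair, $d_H(u,v) = 1$, so the $2$-spanner property of $E'$ gives $d_{H'}(u,v) \le 2$, which translates to $d_{G+S}(u,v) \le 2 = 2 \cdot d_\Gamma(u,v)$.

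For the backward direction, given a solution $S$ to $(G, \Gamma, k')$, I would define $E' \coloneqq S \cap E(H)$, which has size at most $|S| \le k$. The key structural observation is that for any edge $(u,v) \in E(H)$, the inequality $d_{G+S}(u,v) \le 2$ forces any realising path in $G+S$ to use at most two edges, each of weight exactly $1$: this is because $G$ is edgeless, each edge $(x,y) \in S$ has weight $d_H(x,y) \in \mathbb{Z}_{\ge 1}$, and three or more such edges would sum to weight at least $3 > 2$. Any unit-weight edge in $S$ lies in $E(H)$, hence in $E'$, so $d_{H'}(u,v) \le 2$ for every $(u,v) \in E(H)$. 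To extend this to arbitrary pairs, I would decompose a shortest $u$-$v$ path in $H$ into its $d_H(u,v)$ unit edges and concatenate the length-$\le 2$ detours in $H'$ for each, yielding $d_{H'}(u,v) \le 2 \cdot d_H(u,v)$ as required for the $2$-spanner property.

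The only subtle point is the short-path analysis in the backward direction, where integrality of $d_H$ values is essential in ruling out longer paths and in guaranteeing that the edges used actually come from $E(H)$. Everything else is a direct unpacking of the definitions combined with Lemma~\ref{lem:adjacent} to move between the conflict-free and adjacent-conflict-free formulations.
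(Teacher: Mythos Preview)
Your proposal is correct and follows essentially the same idea as the paper's proof. The only tactical difference is in the backward direction: the paper assumes a \emph{minimal} solution $T$ and argues by contradiction that any edge $(p,q)\in T\setminus E(\Gamma)$ would have weight $d_\Gamma(p,q)\ge 2$ and hence could never lie on a length-$\le 2$ path resolving an adjacent conflict, violating minimality; thus $T\subseteq E(\Gamma)$ and $T$ itself is the desired $2$-spanner. You instead take $E'=S\cap E(H)$ directly and show that the weight-$\ge 2$ edges of $S\setminus E(H)$ cannot appear on any length-$\le 2$ path between $\Gamma$-adjacent vertices, so $E'$ already suffices. Both arguments rest on the same integrality observation; yours avoids the minimality hypothesis at the cost of the extra concatenation step to pass from adjacent pairs to all pairs (which the paper gets for free since $d_{G+T}=d_{H'}$ once $T\subseteq E(H)$).
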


\begin{proof}
In the forward direction, let $(H,k)$ be a \yes-instance of \textsc{$2$–Spanner} problem and $S \subseteq E(H)$ be a solution. That means  for every pair of vertices in the graph $H'=(V(H), S)$ we have $d_{H'}(u,v)\leq 2\cdot d_{H}(u,v)$. Consider $S=T$. Now the graph $G+T$ is precisely $H'$. By definition, for each pair of vertices in $G$ we have $d_{G+T}(u,v)= d_{H'}(u,v) \leq 2\cdot d_{H}(u,v)= 2\cdot d_{\Gamma}(u,v)$. As $|T|=|S \leq k =k'$, $T$ is a solution for  \textsc{Dilation $2$-Augmentation} in the instance $(G,\Gamma, k')$.

In the backward direction, let $(G,\Gamma, k')$ be a \yes-instance of \textsc{Dilation $2$-Augmentation} problem and $T$ be a solution.  That means for every pair of vertices $u, v$ in $G$ we have $d_{G+T} (u,v) \leq 2 \cdot d_{\Gamma}(u,v)$. Further we assume that $T$ is minimal. Now we are in two cases.

\begin{description}
     \item[{Case (i)}: $T\subseteq E(\Gamma)$.] In this case,  $d_{H'}(u,v)\leq 2 \cdot  d_{\Gamma}(u,v)= 2 \cdot  d_{\Gamma}(u,v)$, where $H'= (V(H), T)$. So $T$ is a solution for the  \textsc{$2$–Spanner} problem on $(H,k)$.
     \medskip 
\item[{Case (ii)}: $T \setminus E(\Gamma) \neq \emptyset$.]
Let $e=(p,q)$ an edge in $T \setminus E(\Gamma)$. Consider any two vertices $u$ and $v$ that are adjacent in $\Gamma$. Observe that $d_{\Gamma}(u,v)=1$ so $d_{G+T}(u,v)\leq 2$. Now since $(p,q)\notin E(\Gamma)$ we have $d_{\Gamma}(p,q)\geq 2$. So  $d_{G+T}(p,q)\geq 2$. Thus between the vertices $p$ and $q$, there exists no shortest path of length at most two containing the edge $(p,q)$ in $G+T$. This contradicts the minimality of $T$. So this case can not appear.     
\end{description}
\end{proof}

Hence we have the following proposition.

\begin{proposition}\label{prop:empty}
    {\sc Dilation $2$-Augmentation} is \nph~when $G$ is edgeless.
\end{proposition}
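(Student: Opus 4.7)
The plan is to establish \textsf{NP}-hardness via a polynomial-time reduction from the classical \textsc{2-Spanner} problem, which Peleg and Schäffer proved to be \textsf{NP}-hard. Given an instance $(H,k)$ of \textsc{2-Spanner}, I would construct the instance $(G,\Gamma,k')$ of \textsc{Dilation $2$-Augmentation} by setting $\Gamma := H$, taking $G$ to be the edgeless graph on $V(H)$, and $k' := k$. Since $\Gamma$ is unweighted and $G$ is embedded in the shortest-path metric of $\Gamma$, edge weights in $G$ (and of candidate edges to be added) coincide with the $\Gamma$-distances between their endpoints. The construction is clearly polynomial, so only the equivalence of the two instances needs a proof.

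For the forward direction, suppose $S \subseteq E(H)$ is a 2-spanner of $H$ with $|S|\le k$. Take $T := S$. Since $S \subseteq E(\Gamma)$, every edge of $T$ has weight $1$ in the metric. Then for any $u,v\in V(G)$, a shortest $(u,v)$-path in $H'=(V,S)$ has length at most $2\cdot d_H(u,v)$, and this same walk is available in $G+T$ with the same weighted length; hence $d_{G+T}(u,v)\le 2\cdot d_\Gamma(u,v)$, so $T$ is a dilation-$2$ augmentation of size at most $k'$.

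For the backward direction, let $T$ be a minimal solution of size at most $k'$. The main obstacle is that a priori $T$ may contain edges $(p,q)\notin E(\Gamma)$, whose weights in $G+T$ equal $d_\Gamma(p,q)\ge 2$; such edges do not correspond to edges of $H$ and therefore cannot be directly interpreted as spanner edges. I would dispose of this case using minimality together with \cref{lem:adjacent}: by the latter, it suffices for $G+T$ to resolve adjacent conflicts, i.e., pairs $u,v$ with $d_\Gamma(u,v)=1$; such a pair has $d_{G+T}(u,v)\le 2$, meaning any shortest witnessing path uses at most two unit-weight edges (or a single edge of weight $\le 2$). An edge $(p,q)\notin E(\Gamma)$ has weight at least $2$ in $G+T$, so it can only lie on such a witnessing path as the final edge of a length-$2$ hop, forcing one endpoint to coincide with the conflict pair's endpoint and forcing $d_\Gamma(p,q)=2$. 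In this situation I would argue that $(p,q)$ can be removed (or swapped for an edge of $E(\Gamma)$) without destroying any adjacent-conflict resolution — contradicting minimality. After this cleanup we may assume $T\subseteq E(\Gamma)=E(H)$.

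Once $T \subseteq E(H)$, every edge of $T$ has weight one in $G+T$, so $d_{G+T}(u,v)$ coincides with the unweighted hop distance in $H'=(V(H),T)$. The dilation-$2$ property $d_{G+T}(u,v)\le 2\cdot d_\Gamma(u,v)=2\cdot d_H(u,v)$ becomes exactly the defining condition of a $2$-spanner, so $T$ is a $2$-spanner of $H$ of size at most $k$. Combined with the forward direction, this yields the desired polynomial equivalence, and the hardness of \textsc{2-Spanner} transfers to \textsc{Dilation $2$-Augmentation} with $G$ edgeless. The single delicate point is the elimination of non-$\Gamma$ edges from a minimal solution; everything else is essentially a straightforward translation between the two problems.
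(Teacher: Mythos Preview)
Your proposal is correct and follows essentially the same route as the paper: reduce from \textsc{2-Spanner} by taking $\Gamma=H$, $G$ edgeless, $k'=k$, and in the backward direction use minimality together with \cref{lem:adjacent} to rule out solution edges outside $E(\Gamma)$. The elimination step is in fact even cleaner than you suggest: a non-$\Gamma$ edge has weight at least $2$, so on a weight-$\le 2$ witnessing path for an adjacent pair it would have to constitute the entire path, forcing $\{u,v\}=\{p,q\}$ and hence $(p,q)\in E(\Gamma)$, an immediate contradiction---no swap is needed.
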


\subsection{{\sf W[2]}-hardness of \dtai{2} when  $\Gamma$ is  Clique} \label{sec:cliquegen}

Let $(G,\Gamma, k)$ be an instance of \dtwoa where $\Gamma$ is a complete graph, and $G$ is an arbitrary graph. Then, note that $d_{\Gamma}(u, v) = 1$ for all $u, v \in V(\Gamma)$, which implies that the weights of all the edges in $G$ is $1$, and the weight of any newly added edge is also $1$. In this case, \dtwoa is equivalent to deciding whether we can add $k$ new edges $S$ to $G$ such that the diameter of $G+S$ (i.e., maximum distance between any pair of vertices in $G+S$) is at most $2$. This corresponds to the well-known  {\sc Diameter Augmentation} problem which is known to be \wth~\cite{DBLP:journals/dam/GaoHN13}. Hence, we have the following proposition.

\begin{proposition}\label{prop:diamaug}
    \dtwoa is \wth~parameterized by $k$,  when $\Gamma$ is  clique.
\end{proposition}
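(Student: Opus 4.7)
The plan is to observe that under the clique-metric assumption the problem collapses to the classical \textsc{Diameter-$2$ Augmentation} problem, and then invoke its known {\sf W[2]}-hardness from~\cite{DBLP:journals/dam/GaoHN13}. Concretely, I would set up a trivial parameter-preserving reduction from \textsc{Diameter-$2$ Augmentation}, where, given an instance $(H,k)$, I take $G = H$, let $\Gamma$ be the complete graph on $V(H)$, and use the same parameter $k$. This reduction clearly runs in polynomial time and preserves the parameter exactly.

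The heart of the argument is the equivalence of \yes-instances. First I would note that since $\Gamma$ is a clique, $d_\Gamma(u,v) = 1$ for every pair of distinct vertices $u,v$. This has two consequences for the instance $(G,\Gamma,k)$: every edge of $G$ has weight $w(e) = d_\Gamma(u,v) = 1$, and every candidate augmentation edge $(x,y) \in \overline{E}(G)$ would also have weight $1$ upon insertion. Hence the weighted distance $d_{G+S}$ coincides with the hop distance ${\sf hop}_{G+S}$ for any $S \subseteq \overline{E}(G)$.

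Next I would translate the dilation-$2$ condition into a diameter condition. A set $S$ is a valid dilation-$2$ augmentation iff $d_{G+S}(u,v) \leq 2 \cdot d_\Gamma(u,v) = 2$ for every pair $u \neq v$, which, by the previous step, is equivalent to ${\sf hop}_{G+S}(u,v) \leq 2$ for every pair, i.e., $G+S$ has diameter at most $2$. Thus $(G,\Gamma,k)$ is a \yes-instance of \dtwoa iff $(H,k)$ is a \yes-instance of \textsc{Diameter-$2$ Augmentation}.

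Since the reduction preserves the parameter $k$ and \textsc{Diameter-$2$ Augmentation} is known to be \wth~parameterized by $k$~\cite{DBLP:journals/dam/GaoHN13}, the claim follows. I do not anticipate a real obstacle here: the construction is essentially the identity, and the only thing to verify carefully is that the weighted-shortest-path notion in the clique-metric setting agrees with the unweighted-hop notion, which is immediate once one unwinds the definitions.
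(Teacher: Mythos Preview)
Your proposal is correct and matches the paper's own argument essentially verbatim: the paper also observes that when $\Gamma$ is a clique all distances and edge weights equal $1$, so \dtwoa reduces to asking whether $k$ edges can be added to make the diameter of $G$ at most $2$, and then invokes the {\sf W[2]}-hardness of \textsc{Diameter-$2$ Augmentation} from~\cite{DBLP:journals/dam/GaoHN13}.
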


\section{Conclusion} \label{Sec:conclusion}
In this article, we studied \dta and explored the multivariate complexity of the problem by restricting either the graph class to which $\Gamma$ could belong or the graph class to which $G$ could belong. Other special graph classes that one could consider include geometric intersection graphs such as interval graphs, unit-disk graphs, disk-graphs, or, more generally, string graphs. In an alternate direction, we can consider the problem in its full generality. However, as we saw, the problem in its full generality cannot admit \fpt algorithm. So, a next natural question that stems from our work is exploring these problems from the perspective of \fpt-approximation.

    \bibliography{main}

\end{document}